\newtheorem{theorem}{Theorem}
\newtheorem*{theorem*}{Theorem}
\newtheorem{lemma}{Lemma}
\newtheorem{proposition}{Proposition}
\NewDocumentCommand{\evalat}{sO{\big}mm}{%
  \IfBooleanTF{#1}
   {\mleft. #3 \mright|_{#4}}
   {#3#2|_{#4}}%
}
\begin{document}
\title{One-dimensional hydrogenic ions\\ with screened nuclear Coulomb field}
\author{Suchindram Dasgupta, Chirag Khurana, A. Shadi Tahvildar-Zadeh\\
Rutgers (New Brunswick)}
\date{\today}
\maketitle
\begin{abstract}
We study the spectrum of the Dirac hamiltonian in one space dimension for a single electron in the electrostatic potential of a point nucleus, in the Born-Oppenheimer approximation where the  nucleus is assumed fixed at the origin.  The potential is screened at large distances so that it goes to zero exponentially at spatial infinity.  We show that the hamiltonian is essentially self-adjoint, the essential spectrum has the usual gap $(-mc^2,mc^2)$ in it, and that there are only finitely many eigenvalues in that gap, corresponding to ground and excited states for the system.  We find a one-to-one correspondence between the eigenfunctions of this hamiltonian and the heteroclinic saddle-saddle connectors of a certain dynamical system on a finite cylinder.  We use this correspondence to study how the number of bound states changes with the nuclear charge.
\end{abstract}

\section{Introduction and statement of main result}

Let $\phi = \phi(s)$ be the electrostatic potential due to a (point) nucleus in one-dimensional space.  Let $(x^0=t,x^1=s)$ be time and space coordinates on the 1+1-dimensional Minkowski spacetime.  Using the Born-Oppenheimer approximation, we can assume the nucleus is fixed at $s=0$.  Let $\Psi = \Psi(t,s) \in \mathbb{C}^2$ denote the wave function of a single electron placed in this electrostatic field.  According to the principles of relativistic quantum mechanics, $\Psi$ solves the one-dimensional 
Dirac equation with a minimal coupling to the potential $\phi$ (we have set $\hbar = c=1$):
\begin{equation}
     -i\gamma^{\mu}\partial_{\mu}\Psi-e \gamma^0 \phi(s) \Psi + m\Psi=0.
\end{equation}
Here $\gamma^\mu$ are $2\times 2$ matrices satisfying $\gamma^\mu\gamma^\nu + \gamma^\nu \gamma^\mu = 2 \eta^{\mu\nu} I_{2\times 2}$ for $\mu,\nu\in\{0,1\}$,  $\eta := \mbox{diag}(1,-1)$ is the Minkowski metric, $\partial_\mu := \frac{\partial}{\partial x^\mu}$, and Einstein summation convention is used. $e>0$ is the elementary charge (i.e. $e$ is the charge of a proton and $-e$ is the charge of an electron), and $m$ is the mass of the electron. 

Writing the above in hamiltonian form, we obtain
\begin{equation}\label{Direq}
  i\partial_{t}\Psi = -i\gamma^{0}\gamma^{1}\partial_{s}\Psi - e \phi \Psi + m\gamma^0\Psi =: H_D \Psi.
\end{equation}
Letting \(\alpha^1 := \gamma^0\gamma^1\) and \(\beta := \gamma^0\), we get
\begin{equation}\label{DirHam}
    H_D = - i \alpha^1 \partial_{s} - e \phi  I + m \beta.
\end{equation}
We may choose the following representation
\begin{equation}
    \gamma^0=\begin{pmatrix} 0 & 1 \\ 1 & 0  \end{pmatrix},\qquad
\gamma^1=\begin{pmatrix} 0 & -1 \\ 1 & 0  \end{pmatrix},
\end{equation}
which yields
\begin{equation}\label{eq:Ham}
    H_D= \begin{pmatrix} -e \phi - i\partial_{s} & m \\ m & -e \phi + i\partial_{s}  \end{pmatrix}.
\end{equation}
To study the spectrum of \(H_{D}\), we search for solutions of \eqref{Direq} that are of the form 
\begin{equation}
    \Psi(t,s) = e^{-iEt}\psi(s),
\end{equation}
which leads us to the eigenvalue problem
\begin{equation} \label{EVP}
H_D \psi(s) = E \psi(s).
\end{equation}
A nonzero $\psi$ that satisfies \eqref{EVP} and is square-integrable, i.e. $\int_{\mathbb{R}} |\psi|^2 ds < \infty$, is called an eigenfunction for $H_D$, and in that case the corresponding number $E$ is called an energy eigenvalue.  

In this paper we prove that, for a choice of $\phi$ that corresponds to a screened nuclear Coulomb field, i.e. 
\begin{equation}\label{screened}
\phi(s) = \frac{Ze}{2} \exp\{-|s|\},
\end{equation} 
with $Z$ the number of protons in the nucleus, the above eigenvalue problem always has a finite number of solutions. The following is an informal statement of our main result.  For the precise statement, see Theorem~\ref{Theorem:main}.
\begin{theorem*} (Informal statement) For any value of the parameter $\gamma := Ze^2>0$, the eigenvalue problem \eqref{EVP} for the hamiltonian \eqref{eq:Ham} with electrostatic potential $\phi$ as in \eqref{screened}, has a finite number $N = N(\gamma)$ of solutions, consisting of a ground state and $N-1$ excited states, with energy eigenvalues $E_0< E_1 <  \dots < E_{N-1}$ lying in the interval $(-m,m)$. The corresponding eigenfunctions $\psi_0,\dots \psi_{N-1}$ can be put in one-to-one correspondence with the heteroclinic saddle-saddle connectors $W_0,W_1,\dots,W_{N-1}$ of a certain dynamical system on a finite cylinder, in such a way that each $W_j$ has a well-defined winding number around the cylinder that increases with $j$.  The function $N(\gamma)$ is an integer-valued function whose points of discontinuity are precisely the zeros and critical points of certain Whittaker functions.
\end{theorem*}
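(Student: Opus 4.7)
The plan is to recast the Dirac eigenvalue problem as a scalar flow on the cylinder $\mathbb{R}_s\times S^1_\theta$ via a Prüfer-type transformation, identify the bound states with heteroclinic connectors between the saddles at $s=\pm\infty$, and finally integrate the ODE explicitly in Whittaker functions to locate the bifurcation values of $\gamma$.

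First, introduce Prüfer variables for the spinor. After factoring out an inessential phase, write the two real components of $\psi$ as $(R\cos(\theta/2),R\sin(\theta/2))$. The eigenvalue problem \eqref{EVP} then splits into an autonomous-in-$R$ equation $\theta'(s)=F(s,\theta;E,\gamma)$ living on the cylinder, together with a quadrature $(\log R)'=G(s,\theta)$ for the amplitude. Square-integrability of $\psi$ translates into asymptotic conditions on $\theta$ at $s=\pm\infty$.

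Second, analyse the limiting dynamics. Because $\phi$ decays exponentially, the $\theta$-equation has a well-defined autonomous limit $\theta'=F_\infty(\theta;E)$ at each end, with two equilibria on $S^1$ corresponding to the decaying and growing modes of rate $\kappa=\sqrt{m^2-E^2}$; lifted to the cylinder these are saddles. An $L^2$ eigenfunction is then precisely an orbit that leaves on the unstable manifold of the left saddle and arrives on the stable manifold of the right saddle, i.e. a heteroclinic saddle-saddle connector $W_j$ with a well-defined winding number. A shooting argument, tracking $\theta(s_0;E)$ for fixed large $s_0$ and applying Sturm-type monotonicity in $E$, produces an increasing sequence of eigenvalues $E_0<E_1<\cdots$ labelled by the winding of $W_j$. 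Finiteness follows because $F(s,\theta;E,\gamma)$ is $L^1$ in $s$ uniformly in $\theta$ and in $E\in(-m,m)$ (the only non-integrable contribution comes from $e\phi(s)$, which is integrable thanks to the screening), so the total rotation is bounded by a function of $\gamma$ alone.

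Third, integrate the system explicitly. On $s>0$ the Dirac system decouples into a second-order ODE for $\psi_1$; the substitution $u=2\kappa\gamma e^{-s}/(e\cdot\text{const})$ recasts it as Whittaker's equation, whose decaying solution at $+\infty$ is (up to scalar) the Whittaker function $W_{\kappa_{\mathrm W},\mu}$ with parameters determined by $E,m,\gamma$; an identical analysis applies on $s<0$. The eigenvalue condition is then a transcendental matching of these two Whittaker solutions at $s=0$. A bound state is created or destroyed exactly when this matching degenerates, which, as $\gamma$ varies, happens precisely when the relevant Whittaker function at $u=\gamma$ vanishes (an $E$ threshold is crossed) or its derivative in a suitable parameter does (a tangency in the matching), yielding the claimed characterisation of the discontinuities of $\gamma\mapsto N(\gamma)$.

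The main obstacle will be step two: proving strict monotonicity of $E\mapsto\theta(s;E)$ along the shooting trajectory and controlling the limit $s_0\to\infty$ finely enough that no full turn of the angle is lost or spuriously gained, so that the labelling by winding number is exact and the correspondence between eigenfunctions and connectors $W_j$ is genuinely bijective. The Whittaker identification in step three is largely computational once the correct rescaling is set up, but extracting a clean criterion for the jumps of $N(\gamma)$ will require careful bookkeeping of the parameter dependence and derivatives of $W_{\kappa_{\mathrm W},\mu}$.
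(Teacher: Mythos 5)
Your overall architecture (Pr\"ufer transform, heteroclinic connectors between the asymptotic saddles labelled by winding number, monotonicity in $E$ plus a continuity/shooting argument for existence, and Whittaker functions to locate the thresholds in $\gamma$) is essentially the paper's. Two of your choices differ only cosmetically or in a workable way: the paper compactifies $s$ to a finite cylinder, where the endpoints become degenerate saddle-nodes requiring center-manifold theory rather than hyperbolic saddles of a limiting autonomous system, and it replaces your shooting argument with a signed-area (Green's theorem) continuity argument combined with a barrier lemma that supplies exactly the monotonicity in $E$ you flag as your main obstacle. Your step three is also broader than what the paper does: the paper only integrates the system explicitly at the edge energies $E=\pm m$, where the Whittaker parameters are $(\mp i,\tfrac12)$, and identifies the jump points of $N(\gamma)$ as the zeros and critical points of $M_{\mp i,1/2}$ at which eigenvalue branches enter the gap at $E=+m$ or exit at $E=-m$; a full matching analysis at general $E$ is not needed.

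The genuine gap is your finiteness argument. You claim the total rotation is bounded because $F(s,\theta;E,\gamma)$ is $L^1$ in $s$ uniformly in $\theta$ and $E$. It is not: the right-hand side of the angle equation is $2m\cos\Theta-2E-2e\phi(s)$, and only the $e\phi$ term decays; the term $2m\cos\Theta-2E$ does not vanish as $|s|\to\infty$ except on the equilibria. The correct observation is that each full downward turn of $\Theta$ forces a passage through $\Theta\in 2\pi\mathbb{Z}$ with $\Theta'<0$, which requires $\gamma e^{-|s|}>2(m-E)$; combining this with the global bound $|\Theta'|\le 2m+2|E|+\gamma$ confines all windings to $|s|<\ln\bigl(\gamma/(2(m-E))\bigr)$ and yields a rotation bound of order $\gamma\ln\bigl(\gamma/(2(m-E))\bigr)$, which blows up as $E\to m$. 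So no uniform-in-$E$ bound follows from integrability of the potential alone, and the finiteness of $N(\gamma)$ --- the central quantitative claim of the theorem --- is left unproved. The paper closes exactly this hole by solving the system explicitly at $E=\pm m$: the resulting Whittaker orbits have computable finite winding numbers and, by the barrier proposition, serve as universal lower and upper barriers for all connectors, pinning every admissible winding number into the finite window between them. You would need either this edge-energy analysis or some other argument (e.g.\ a Bargmann-type bound for the one-dimensional Dirac operator) to make your step two deliver finiteness.
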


The rest of this paper is organized as follows: In Section~\ref{sec:redham} we derive the reduced hamiltonian, establish its self-adjointness, and set up the dynamical system used to study its discrete spectrum. In Section~\ref{sec:existence} we establish the existence of energy eigenvalues corresponding to ground and excited states of the system. Section~\ref{Numerical Section} is devoted to numerical investigations we conducted on the equations.  We conclude with a summary and outlook in Section~\ref{sec:sumout}.  Various technical results on the behavior of the special functions used in the proofs of our theorems are gathered together in the Appendix at the end of the paper.

\section{The reduced hamiltonian}\label{sec:redham}
\subsection{Derivation}
Referring back to equation \eqref{EVP}, we need to find both the eigenfunction $\psi$ and the energy eigenvalue $E$. The wave function has two complex-valued components. Let us set
\begin{gather}
    \psi = 
    \begin{pmatrix} R_1(s)\\R_2(s) \end{pmatrix},
\end{gather}
where $R_1$ and $R_2$ are complex-valued functions of one real variable. Plugging this back into equation \eqref{EVP}, we get
\begin{gather}
    \begin{pmatrix} -e \phi - i\frac{d}{ds} & m \\ m & -e \phi + i\frac{d}{ds}  \end{pmatrix}\begin{pmatrix} R_1(s)\\R_2(s) \end{pmatrix} = E\begin{pmatrix} R_1(s)\\R_2(s) \end{pmatrix},
\end{gather}
which yields the following set of ordinary differential equations:
\begin{gather}
    R_1'-i(E+e\phi)R_1+imR_2 = 0 \label{eq:R1},\\
    -R_2'-i(E+e\phi)R_2+imR_1 = 0 \label{eq:R2}.
\end{gather}
These are equations for two unknown complex-valued, and therefore four unknown real-valued functions.  We proceed to simplify this system by reducing the number of independent real unknown functions to two.  Multiplying \eqref{eq:R1} with $R_1^*$ and \eqref{eq:R2} with $R_2^*$, adding the two resulting equations and taking the real part, we obtain
\begin{equation}
   \frac{d}{ds}(|R_1(s)|^2-|R_2(s)|^2) = 0,
\end{equation}
which implies that $\Delta := |R_1(s)|^2-|R_2(s)|^2$ is a constant, i.e. independent of $s$.  But if $\psi$ is an eigenfunction, it must be square-integrable.  Thus $|R_1|^2$ and $|R_2|^2$ are both integrable over $\mathbb{R}$, hence so is their difference, which implies that $\Delta  = 0$.  Thus 
\begin{gather*}
    |R_1(s)|=|R_2(s)|=:R(s)
\end{gather*}
for all $s$.  This allows us to write the wave function components in the following way:
\begin{gather}\label{phases}
    R_j(s) = R(s)e^{i\varphi_j},\quad j=1,2.
\end{gather}
We now want to show that  without loss of generality  $\varphi_1+\varphi_2=0$; that is
\begin{gather*}
R_2 = R(s)e^{-i\varphi_1}=R_1^*.
\end{gather*}
First, we take equation \eqref{eq:R1} and multiply it by the conjugate of $R_2$, then take the conjugate of equation \eqref{eq:R2} and multiply it by $R_1$:
\begin{gather}
    R_2^*(R_1'-i(E-e\phi)R_1+imR_2) = 0,\\
    R_1(-R_2'-i(E-e\phi)R_2+imR_1)^* = 0.
\end{gather}
Now we add these equations to obtain
\begin{gather}
    R_1'R_2^*-R_2'^*R_1 = 0,
\end{gather}
which implies that
\begin{gather}
    (R_2^*)^2\frac{d}{ds}\left(\frac{R_1}{R_2^*}\right)= 0.
\end{gather}
If $R_2^*(s) = 0$ for some $s$, then $R_1(s) = 0$ as well (since they have the same magnitude, $R(s)$,) and thus their phase is irrelevant. Let $Z = \{s\in\mathbb{R}\ |\ R(s) = 0\}$.  $Z$ is closed, so its complement in $\mathbb{R}$ is a union of open intervals. On any of those intervals, 
from \eqref{phases} we see that
\begin{gather}
    \frac{d}{ds}(\varphi_1+\varphi_2)=0,
\end{gather}
which means that
\begin{gather}\label{varphis}
    (\varphi_1+\varphi_2)=\delta,
\end{gather}
where \(\delta\) is some constant.
Recall that
\begin{equation}
    \psi= \begin{pmatrix} R_1\\R_2 \end{pmatrix} = \begin{pmatrix}  Re^{i\varphi_1}\\Re^{i\varphi_2} \end{pmatrix}.
\end{equation}
If $\psi$ is a solution of \eqref{EVP}, multiplying it by a constant phase factor \(e^{i\theta}\) will still be a solution. Choosing the phase factor to be \(e^{-i\delta/2}\), we find that
\begin{equation}
    \psi' = \begin{pmatrix}  Re^{i(\varphi_1-\delta/2)}\\Re^{i(\varphi_2-\delta/2)} \end{pmatrix}
\end{equation}
is equivalent to $\psi$.  Let \(\varphi_1'=\varphi_1-\delta/2\) and \(\varphi_2'=\varphi_2-\delta/2\). Then by \eqref{varphis}
we have $\varphi_1' + \varphi_2' = 0$.
Therefore, without  loss of generality \(\varphi_1+\varphi_2\) can be set equal to 0. ie. \( \varphi_1 = -\varphi_2\). In that case \(R_1\) and \(R_2\) are complex conjugates of one another. Therefore, we can set
\begin{equation}
    \left\{ \begin{aligned} 
 R_1 &= \frac{1}{\sqrt{2}}(u - iv),\\
  R_2 &= \frac{1}{\sqrt{2}}(u + iv).
\end{aligned} \right.
\end{equation}
Remembering that we are only interested in square integrable solution of \eqref{EVP}, we must have $\int_{-\infty}^{\infty} (|R_1|^2 + |R_2|^2) \, ds < \infty$, so that $\psi$ can be normalized in such a way that this quantity is one. This now implies
\begin{equation}
 \int_{-\infty}^{\infty} (u^2 + v^2) \,ds  = 1.
\end{equation}
Recall from the beginning of the section that
\begin{gather}
    R_1'-i(E+e\phi)R_1+imR_2 = 0, \\
    -R_2'-i(E+e\phi)R_2+imR_1 = 0.
\end{gather}
This becomes, 
\begin{equation}
    \left\{ \begin{aligned} 
 u' + (-m - e \phi) v - Ev  &= 0,\\
 -v' + (m-e\phi)u-Eu &= 0.
\end{aligned} \right.
\end{equation}
The above can be rewritten as 
\begin{gather}
h \begin{pmatrix}
    u \\ v
    \end{pmatrix} = E \begin{pmatrix}
    u \\ v
    \end{pmatrix},
    \label{eqn:reduced}
\end{gather}
where \(h\) is our \textbf{reduced hamiltonian}:
\begin{equation}\label{def:h}
     h: = \begin{pmatrix}
m-e\phi & -\frac{d}{ds} \\ \frac{d}{ds} & - m - e\phi
\end{pmatrix}.
\end{equation}
\subsection{The spectrum of the reduced hamiltonian}
Earlier, we obtained the system of equations and constraint
\begin{gather}\label{eq:eig}
(h-E) \begin{pmatrix}
    u \\ v
    \end{pmatrix} = \begin{pmatrix}
    0 \\ 0
    \end{pmatrix},\\
    \int_{-\infty}^{\infty} (u^2 + v^2) \,ds  = 1,
\end{gather}
where the reduced hamiltonian $h$ is as in \eqref{def:h}. The hamiltonian can be written in the form
\begin{gather}\label{weidform}
    h= J\frac{d}{ds} + P,
\end{gather}
where 
\begin{gather}
  J := \begin{pmatrix}
0 & -1 \\ 1 & 0
\end{pmatrix} \qquad P(s) := \begin{pmatrix}
    m-e\phi(s) & 0 \\ 0 & -m-e\phi(s)
    \end{pmatrix}.
\end{gather}
In order to study hamiltonians like this further, we need some preliminaries: We need a Hilbert space $\mathcal{H}$, i.e. a vector space over $\mathbb{C}$ of pairs of functions $(u,v)$ on which the  hamiltonian can act, together with a complex innerproduct $\langle \ , \ \rangle$ defined on it, and we need $\mathcal{H}$ to be complete with respect to the norm given by this innerproduct.  In our case we take $\mathcal{H} = (L^2(\mathbb{R}))^2$ i.e. the set of pairs of square-integrable functions defined on the real line, together with the standard $L^2$-innerproduct $\langle f,g\rangle = \int f_1 g_1^*+f_2 g_2^*\ dx$. The operator $h$ needs to be defined on a linear subspace $\mathcal{D}(h)$ of $ \mathcal{H}$, called the {\em domain} of $h$, and we need this domain to be dense in $\mathcal{H}$. Since \eqref{eq:eig} formally looks like an eigenvalue-eigenvector equation, and we expect $E$ to be real, we need \(h\) to be {\em self-adjoint}.  Recall that in order for a matrix of numbers to have real eigenvalues, it must be hermitian-symmetric, i.e. equal to its own conjugate-transpose.  For an {\em operator-valued} matrix such as $h$ this is not enough, and more care is needed in order to determine its self-adjointness.  (See e.g. \cite{Reed1975MethodsOM} Vol. 2.)  In particular, for an operator to be self-adjoint, in addition to being symmetric with respect to the given innerproduct, it is necessary that its domain and the domain of its {\em adjoint} coincide (see e.g. Teschl \cite{TeschlBOOK} for the definition of adjoint and the criteria for self-adjointness.)

We can then ask, what is the \textbf{spectrum} $\sigma(h)$ of \(h\)? By spectrum, we mean all \(\lambda \in \mathbb{C}\) that make the operator \(h-\lambda I\) not have a bounded inverse. For a self-adjoint $h$ the set $\sigma(h)$ is the union of two disjoint subsets 
$$ \sigma(h) = \sigma_{disc} \cup \sigma_{ess}.
$$
The discrete spectrum $\sigma_{disc}$ consists of numbers satisfying the traditional notion of eigenvalue, i.e. isolated points $\lambda \in \mathbb{C}$ such that the nullspace $\mathcal{N}(h-\lambda I) \subset \mathcal{D}(h)$ is non-trivial and finite dimensional. Anything else in $\sigma(h)$ belongs to the essential spectrum $\sigma_{ess}(h)$.  This includes accumulation points of eigenvalues and eigenvalues with infinite-dimensional eigenspace, as well as the {\em continuous spectrum}, where the only candidates $\psi$ for satisfying the eigenvalue equation $h \psi = \lambda \psi$ do not belong to the Hilbert space $\mathcal{H}$ (see \cite{TeschlBOOK} for precise definitions and statements.)

In our case, it can be shown that the discrete spectrum of the Dirac hamiltonian corresponds to the bound states of the electron with the nucleus, while the essential spectra 
correspond to the scattering states of this system (see e.g. Thaller \cite{ThallerBOOK}.)

In 3 dimensions, the spectrum of the Dirac operator with the Coulomb potential $H_{DC}$ is the following
\begin{gather}
    \sigma_{ess}(H_{DC})= (-\infty,m] \cup [m,\infty) \\
    \sigma_{disc}(H_{DC})= \left\{E_{n}\right\}_{n=1}^{\infty} \subset (-m,m).
\end{gather}
where \(m\) is the mass of the electron. For the discrete spectrum,
\begin{equation}
    E_0 < E_1 < E_2 < \ldots 
\end{equation}
which correspond to the orbital energies of hydrogen, with \(E_0\) being the ground state energy. Additionally, \(E_n \rightarrow m\) as \(n \rightarrow \infty\). 
We wish to replicate all of these properties in 1 dimension.  For the above results to hold, it is necessary that
\begin{equation}\label{condphi}
    \phi(s) \rightarrow 0 \qquad \mbox{ as }|s| \rightarrow \infty . 
\end{equation}
However, in one space dimension the electrostatic potential of a point charge placed at $s=0$ satisfies $-\phi''(s) = Q\delta_0(s)$ where $\delta_0$ is the Dirac delta distribution.  Therefore \(\phi = - \frac{Q}{2}|s|\), which does not go to zero at infinity, so \eqref{condphi} fails. In what follows we will replace $\phi$ with a screened version of itself, one that has the same absolute value behavior at the origin but decays exponentially fast at infinity.

Consider the potential 
\begin{gather}
    \phi(s) = \frac{Q}{2}\mu e^{-|s|/\mu},
\end{gather}
where $\mu$ is a screening length (which for now we will set equal to 1.)
\begin{proposition}\label{prop:Widemann}
With the above $\phi$, the reduced hamiltonian h is self-adjoint, and its essential spectrum is $(-\infty,-m]\cup [m,\infty)$. Its discrete spectrum, if non-empty, will consist only of simple eigenvalues (i.e. the eigenspaces will be one-dimensional.)
\end{proposition}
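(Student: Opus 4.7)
The plan is to write $h = h_0 + V$, where
\[ h_0 := J\tfrac{d}{ds} + \begin{pmatrix} m & 0 \\ 0 & -m \end{pmatrix}, \qquad V(s) := -e\phi(s)\,I, \]
treating $h_0$ as the free one-dimensional Dirac hamiltonian and $V$ as a bounded multiplicative perturbation, and to handle the three assertions of the proposition in turn.

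For self-adjointness, I would diagonalize $h_0$ on $\mathcal{D}(h_0) = H^1(\mathbb{R})^2$ via Fourier transform: $\widehat{h_0}$ acts as multiplication by the Hermitian matrix-valued symbol
\[ M(\xi) := \begin{pmatrix} m & -i\xi \\ i\xi & -m \end{pmatrix}, \]
whose eigenvalues are $\pm\sqrt{\xi^2+m^2}$. This simultaneously shows that $h_0$ is self-adjoint and identifies $\sigma(h_0) = \sigma_{\mathrm{ess}}(h_0) = (-\infty,-m]\cup[m,\infty)$. Because $|\phi|\le Q/2$, the perturbation $V$ is bounded and symmetric on $(L^2(\mathbb{R}))^2$; the Kato--Rellich theorem (applied trivially with relative bound $0$) then yields self-adjointness of $h = h_0 + V$ on the same domain $H^1(\mathbb{R})^2$.

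To identify $\sigma_{\mathrm{ess}}(h)$, I would invoke Weyl's essential-spectrum theorem in its relatively-compact form, whose hypothesis is that $V(h_0-i)^{-1}$ is compact on $(L^2)^2$. This is the main technical step. Given $\{\psi_n\}$ bounded in $(L^2)^2$, set $\chi_n := (h_0-i)^{-1}\psi_n$; then $\{\chi_n\}$ is bounded in $H^1(\mathbb{R})^2$. By Rellich--Kondrachov I can extract a subsequence converging in $L^2$ on every ball $B_R$, while the exponential decay $|\phi(s)| \le \tfrac{Q}{2}e^{-|s|}$ bounds tails uniformly via $\|V\chi_n\|_{L^2(|s|>R)} \le \tfrac{eQ}{2}e^{-R}\|\chi_n\|_{L^2}$. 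A standard diagonal argument then exhibits a subsequence along which $V\chi_n$ is Cauchy in $(L^2)^2$, establishing compactness. Weyl's theorem now delivers $\sigma_{\mathrm{ess}}(h) = \sigma_{\mathrm{ess}}(h_0) = (-\infty,-m]\cup[m,\infty)$.

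For simplicity of the discrete eigenvalues I would argue from ODE theory. The equation $(h-E)\psi=0$ is a $2\times 2$ linear first-order system, so its solution space on $\mathbb{R}$ has dimension exactly two. For any $E$ in the gap $(-m,m)$ the asymptotic equation as $|s|\to\infty$ is the constant-coefficient system obtained by dropping $\phi$, with characteristic exponents $\pm\kappa$, $\kappa := \sqrt{m^2-E^2} > 0$. Because the perturbation $e\phi(s)$ is $L^1$ near both infinities (in fact exponentially small), Levinson's theorem on asymptotic integration applies and supplies, at each of $\pm\infty$, a one-dimensional subspace of genuinely decaying solutions (with decay rate $\sim e^{-\kappa|s|}$). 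An $L^2$ eigenfunction must lie in both subspaces, so the eigenspace is the intersection of two lines in a plane, and is therefore at most one-dimensional. The hard part of the whole proposition is really the relative-compactness verification in step two; once that is secured, the remaining conclusions are standard consequences of Weyl--Kato perturbation theory and Levinson-type asymptotics.
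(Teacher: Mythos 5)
Your proposal is correct, but it takes a genuinely different and far more self-contained route than the paper. The paper disposes of all three claims in two lines: it observes that $h=J\frac{d}{ds}+P$ with $P(s)\to\mathrm{diag}(m,-m)$ as $|s|\to\infty$ and then cites Weidmann's general theorems on one-dimensional Dirac systems of exactly this form (Theorems 16.5, 16.6 and 10.8 of \cite{weidmann2006spectral}), where in particular simplicity of the discrete spectrum comes from the operator being in the limit-point case at both endpoints. You instead prove everything from scratch: Kato--Rellich for self-adjointness (trivially applicable since $V$ is bounded and symmetric, with the Fourier diagonalization of $h_0$ correctly giving $\sigma(h_0)=(-\infty,-m]\cup[m,\infty)$), Weyl's theorem via relative compactness of $V(h_0-i)^{-1}$ (your Rellich--Kondrachov-plus-uniform-tail argument is the standard and correct verification, and note it only uses $\phi(s)\to 0$, not the exponential rate), and a Levinson-type asymptotic-integration argument for simplicity, which correctly exploits that a discrete eigenvalue necessarily lies in the gap $(-m,m)$, so the characteristic exponents $\pm\sqrt{m^2-E^2}$ are real and nonzero and the eigenspace is pinned inside a one-dimensional subspace of the two-dimensional solution space. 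What the paper's route buys is brevity and a single citation tailored to this operator class; what yours buys is transparency about where each hypothesis on $\phi$ enters (boundedness for self-adjointness, decay at infinity for the essential spectrum, the spectral gap for simplicity) and an argument that generalizes verbatim to any bounded potential vanishing at infinity. One cosmetic remark: for simplicity you do not actually need the intersection of the two decaying subspaces --- containment of the eigenspace in the decaying subspace at $+\infty$ alone already bounds its dimension by one.
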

\begin{proof}
Recall that $h = J\frac{d}{ds} + P$, and since $\psi(s) \to 0$ as $|s|\to \infty$, we have
\begin{gather}\label{cond:weid}
    P(s) \rightarrow \begin{pmatrix}
    m & 0 \\ 0 & -m
    \end{pmatrix},\qquad\mbox{ as }|s|\to \infty
\end{gather}
The conclusions about self-adjointness, $\sigma_{ess}$, and $\sigma_{disc}$ all follow from Weidmann's \cite{weidmann2006spectral} Theorems 16.5,  16.6, and 10.8, respectively, about one-dimensional hamiltonians of the form \eqref{weidform} that satisfy \eqref{cond:weid}.
\end{proof}
Recall that the {\em coupled} system of linear ordinary differential equations \eqref{eq:eig} can be written more explicitly as:
\begin{equation}\label{eq:uv}
    \left\{ \begin{aligned} 
 \frac{du}{ds} + (-m - e \phi) v - Ev  &= 0,\\
 -\frac{dv}{ds} + (m-e\phi)u-Eu &= 0.
\end{aligned} \right.
\end{equation}
The coupling of the two differential equations for \(u\) and \(v\), and the fact that $E$ is also unknown, makes this problem a bit complicated. The problem might be simplified, however, if the equations could be decoupled. One way of doing this is through a {\em Pr\"ufer transform}, as was done in \cite{KiTa15}  (See \cite{UlHo86} for an earlier application of this method):
Let us define
\begin{equation}
    R^2 := u^2 + v^2,\qquad
    \theta(s) := \arctan{\frac{v(s)}{u(s)}},
\end{equation}
so that 
$$ u = R\cos \theta,\qquad v = R \sin\theta.$$
Then we have
\begin{gather}
    R' = \frac{1}{R} (uu'+ vv') = \frac{1}{R}[u((m+e\phi)v+Ev)+v((m-e\phi)u-Eu)] = \frac{2m}{R} uv,
    \end{gather}
    so that
    \begin{gather}
      \frac{R'}{R}= m \sin{2\theta}.
\end{gather}
We can therefore solve for \(R\) if \(\theta\) is known. Similarly, we have
\begin{gather}
    \theta' = \frac{1}{1+\frac{v^2}{u^2}}\frac{v'u-u'v}{u^2}
     = \frac{v'u-u'v}{R^2}.
\end{gather}
Substituting from \eqref{eq:uv},
\begin{gather}
    \theta' = \frac{(m-e\phi-E)u^2 - (m+e\phi+E)v^2}{R^2}
     = m\cos(2\theta) - e\phi - E.
\end{gather}
We now have a new system of \textit{partially decoupled} differential equations ($\theta$ equation has no $R$):
\begin{equation}
 \frac{R'}{R}= m \sin{2\theta}, \qquad
 \theta' = m\cos(2\theta) - e\phi(s) - E.
\end{equation}
This system must satisfy the condition that 
\begin{equation}
   \int_{-\infty}^{\infty} R^2 \,ds  = 1.
    \label{eqn:normalize}
\end{equation}
Since the $\theta$ equation does not involve $R$, we can focus on analyzing the $\theta$ equation first.  We do this by converting the $\theta$ equation into a dynamical system on a 2-dimensional surface.
\subsection{Setting up a dynamical system}
We first make the $\theta$ equation \textit{autonomous}. This means that we do not want the independent variable to show up on the right side of the differential equation. This can be done trivially by introducing a new independent variable $\tau$ and setting $s(\tau) = \tau$. Then
\begin{equation}\label{dynsys}
   \left\{\begin{array}{rcl}\dot{s} &=& 1\\ \dot{\Theta} &=& 2m\cos(\Theta) - 2e\phi(s) - 2E,
    \end{array}\right.
\end{equation}
where dot denotes differentiation with respect to $\tau$ and we have set $\Theta = 2\theta$ for simplification. We now have $\tau$ as an independent variable and $\Theta$ and $s$ as dependent variables. 

Next we recall the equation for $R$:
\begin{gather}\label{eq:R}
    \frac{R'}{R} = m\sin(\Theta).
\end{gather}
Solving this equation, we get
\begin{gather}
    R(s) = R(0)\exp\left\{\int_0^s m\sin(\Theta(s)) \,ds\right\}.
    \label{eqn:R_theta}
\end{gather}
Since $R\in L^2$ and $\sin\Theta$ is bounded, from \eqref{eq:R} we have that $R'\in L^2$ as well. It turns out that an $L^2$ function whose derivative is also $L^2$ must go to zero at infinity:  $R(s) \to 0$ as $|s| \to \infty$ (see e.g. \cite{RB78}.)
Thus the integral in the exponent in \eqref{eqn:R_theta} must diverge to $-\infty$. So $\sin\Theta$ must be negative as $s \to \infty$, but positive as $s\to -\infty$, i.e.:
\begin{gather}
    \Theta(\infty) \in [-\pi,0),\qquad
    \Theta(-\infty) \in (0,\pi].
\end{gather}
Next, we want to compactify the system \eqref{dynsys}.  To this end let us now define a new variable
\begin{gather}
    z = \arctan(s),
\end{gather}
so that $s=\pm\infty \Longleftrightarrow z = \pm\frac{\pi}{2}$.
Changing variables in \eqref{dynsys}, we obtain
\begin{equation}
    \left\{\begin{array}{rcl}\dot{z} &=& \cos^2{z},\\
    \dot{\Theta} &=& 2m\cos(\Theta)-2e\phi(\tan{z})-2E.
    \end{array}\right.
\end{equation}
Recall that
\begin{gather}
    \phi(s) = \frac{Q}{2}\exp(-|s|).
\end{gather}
We can also choose units such that $m=1$. So, in the case of a nucleus with $Z$ protons fixed at the origin, $Q = Ze$ and the system becomes 
\begin{gather}\label{zThetasys}
    \left\{\begin{array}{rcl}\dot{z} &=& \cos^2{z} =: F(z,\Theta),\\
    \dot{\Theta} &=& 2\cos(\Theta)-\gamma\exp{(-|\tan{z}|)}-2E =: G_E(z,\Theta),
    \end{array}\right.
\end{gather}
where 
$ \gamma := Ze^2$.
\subsection{Linearizing the System}
One way we can study this system's behavior further is through a local linear approximation near its equilibrium points. The local linear approximation of \eqref{zThetasys} about an equilibrium point \( (z_0,\Theta_0) \) would be
\begin{gather}
    \frac{d}{d\tau} \begin{pmatrix}
    z-z_0 \\ \Theta - \Theta_0
    \end{pmatrix} = \boldsymbol{J}(z_0,\Theta_0) \begin{pmatrix}
    z-z_0 \\ \Theta-\Theta_0
    \end{pmatrix},
\end{gather}
where the \(2\times 2\) matrix \(\boldsymbol{J}(z_0,\Theta_0)= \begin{pmatrix}
    F_z(z_0,\Theta_0) && F_\Theta(z_0,\Theta_0) \\ {G_E}_z(z_0,\Theta_0) && {G_E}_\Theta(z_0,\Theta_0) 
    \end{pmatrix} \) is the Jacobian matrix. We can now compute the partial derivatives as follows. 
\begin{equation}
 F_z = -2 \cos{z}\sin{z},\quad
 F_\Theta = 0,\quad
 {G_E}_z = 2e \phi' (\tan{z}) \sec^2{z},\quad
 {G_E}_\Theta = -2\sin{\Theta}.
\end{equation}
Recall \(\phi\) is Lipschitz at the origin and smooth otherwise. \(F(z,\Theta)\) is \(0\) at \(z= \pm \frac{\pi}{2}\). So, our equilibrium points lie on either \(z = -\frac{\pi}{2}\) or \(z=\frac{\pi}{2}\). When substituting either value into \(G_E(z,\Theta)\), we obtain:
\begin{gather}
\Theta = \pm \cos^{-1}{E},
\end{gather}
where by $\cos^{-1}$ we mean the branch of $\arccos$ with values in the interval $[0,\pi]$. If \(|E| < 1\), then there are 4 equilibrium points (modulo $2\pi$), which are as follows.
\begin{align}
    S_E^{-}&:(-\frac{\pi}{2},\cos^{-1}{E}), & S_E^{+}&:(\frac{\pi}{2},-\cos^{-1}{E}), \\ N_E^{-}&:(-\frac{\pi}{2},-\cos^{-1}{E}), & N_E^{+}&:(\frac{\pi}{2},\cos^{-1}{E}).
\end{align}
In this case, the Jacobian at our equilibrium points is:
\begin{gather}
    \boldsymbol{J}(\pm \frac{\pi}{2}, \cos^{-1}{E}) = \begin{pmatrix}
    0 && 0 \\ 0 && -2\sqrt{1-E^2}
    \end{pmatrix}, \\ 
    \boldsymbol{J}(\pm \frac{\pi}{2}, -\cos^{-1}{E}) = \begin{pmatrix}
    0 && 0 \\ 0 && 2\sqrt{1-E^2}
    \end{pmatrix}.
\end{gather}
If \(|E| = 1\), then there are only 2 equilibrium points modulo $2\pi$, with Jacobians \(\boldsymbol{J}=\begin{pmatrix}
    0 && 0 \\ 0 && 0
    \end{pmatrix} \), so these equilibrium points are completely degenerate.  For $E=-1$ we have the equilibrium points
    \begin{equation}
        C^- := (-\frac{\pi}{2},\pi),\qquad C^+ := (\frac{\pi}{2},\pi),
    \end{equation}
    while for $E=1$ we have
\begin{equation}
    D^{-}:=(-\frac{\pi}{2},0)\qquad D^{+}:=(\frac{\pi}{2},0).
\end{equation}
If \(|E| > 1\), then there are no equilibrium points.

Since we have found the linearization of our system at the equilibrium points in each case, we can also find  the eigenvalues and associated eigenvectors of the locally linear system. For $|E|<1$ this will give us information about the behavior of the system near the equilibrium point. 
The eigenvalues of 
\(
    \boldsymbol{J}(\pm \frac{\pi}{2}, \cos^{-1}{E})
\)
are \(\lambda_1 = 0  \) and \(\lambda_2 = -2\sqrt{1-E^2}  \) with corresponding eigenvectors being \(\begin{pmatrix}
1 \\ 0
\end{pmatrix}\) and \(\begin{pmatrix}
0 \\ 1
\end{pmatrix}\) 
respectively. Similarly, the eigenvalues of \(
    \boldsymbol{J}(\pm \frac{\pi}{2}, -\cos^{-1}{E})
\) are \(\lambda_1 = 0  \) and \(\lambda_2 = 2\sqrt{1-E^2}  \) with corresponding eigenvectors being \(\begin{pmatrix}
1 \\ 0
\end{pmatrix}\) and \(\begin{pmatrix}
0 \\ 1
\end{pmatrix}\) 
respectively.

The purpose of the linearization was to determine the behavior of the trajectories near the equilibrium points in the phase portrait of our system. This is complicated by the fact that the equilibria of this system are {\em non-hyperbolic}, meaning there are zero eigenvalues.  This means that we need {\em center manifold theory} (see e.g. Carr \cite{CarrBOOK}) to describe the behavior of the nonlinear system.

According to this theory, when $|E|<1$, the equilibrium points $S_E^\pm$ and $N_E^\pm$ correspond to saddle-nodes. Their local behavior is
determined by Theorem 2.19(iii) in \cite{Dumortier2006QualitativeTO}. Their local phase portraits are depicted in Figure 2.13(c)
of the same reference (see also Figure~\ref{fig:my_label}.) The uniqueness of the center-unstable manifold emanating from $S_E^-$ follows from this
theorem. Similarly for the center-stable manifold going into $S_E^+$. For a generic value of $E$, these two orbits will not coincide, i.e., generically, the orbit from $S_E^-$ will run into $N_E^+$, and the orbit that goes into $S_E^+$, when run backwards, will fall into $N_E^-$.  

Additionally, recall that we need $\sin\Theta$ to be negative at \(s=\infty\) (i.e. \(z=\pi/2\)) and positive at \(s=-\infty\) (ie. \(z=-\pi/2\)). The equilibrium points which correspond to these conditions are \(S_E^+\) and \(S_E^-\) respectively. Therefore, the energy of a bound state for the electron in our system will be the energy level that gives  a trajectory between these two equilibrium points, i.e. the value for $E$ that makes the center-unstable manifold of $S_E^-$ coincide with the center-stable manifold of $S_E^+$, resulting in a heteroclinic orbit connecting these two saddle-nodes. See Fig.~\ref{fig:my_label}.
\begin{figure}
    \centering
    \includegraphics[scale=0.45]{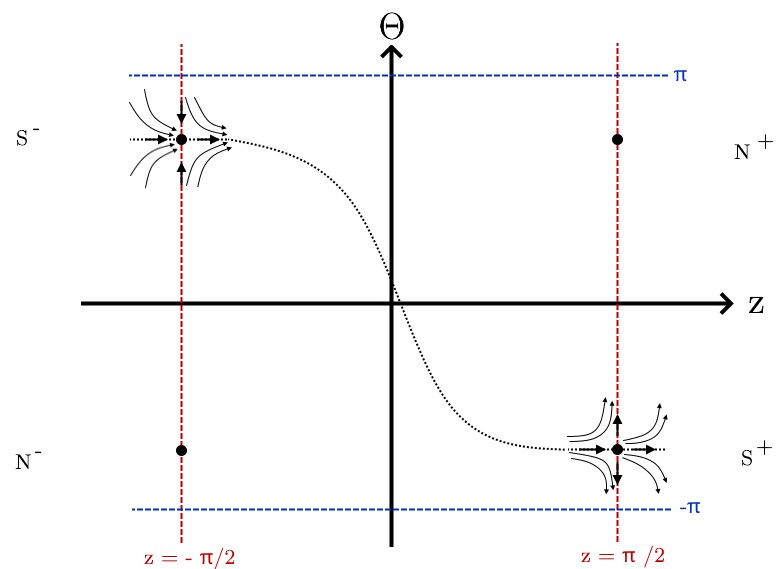}
    \caption{A heteroclinic orbit connecting two saddle-node equilibrium points. }
    \label{fig:my_label}
\end{figure}
Because the dynamical system \eqref{zThetasys} is $2\pi$-periodic in $\Theta$, one can view it as a dynamical system on a finite cylinder $[-\frac{\pi}{2},\frac{\pi}{2}]\times \mathbb{S}^1$.
As a result, there may exist connectors between \(S_E^-\) and \(S_E^+\) that start on the left boundary and wrap around the cylinder multiple times before reaching the right boundary of the cylinder. For the purpose of analyzing the system it is convenient to ``unwrap" the cylinder into a vertical strip $[-\frac{\pi}{2},\frac{\pi}{2}]\times \mathbb{R}$ (the universal cover of the cylinder) and identify the rectangle $[-\frac{\pi}{2}, \frac{\pi}{2}] \times [-\pi, \pi]$ as the {\em fundamental domain}.
Note that if we define a saddles connector as beginning in our fundamental domain at $S_E^-$, then it must connect either to the $S_E^+$ which is in the fundamental domain, or to \textit{some copy} of $S_E^+$ shifted down or up by some multiple of $2\pi$. The number of times a trajectory wraps around the cylinder is called the \textbf{winding number} of the orbit. We define this as
\begin{equation}
    N = \left\lfloor{\frac{\Theta(-\infty)-\Theta(\infty)}{2\pi}}\right\rfloor.
\end{equation}
Here $\left\lfloor x\right\rfloor$ denotes the largest integer less than or equal to $x$.
We would like to find out whether or not saddles connectors exist for any integer winding number. 
\section{Existence of energy eigenvalues and eigenfunctions}\label{sec:existence}

\subsection{Existence of saddles connectors with a given winding number}

Here we apply a continuity argument first given in \cite{KiTa15} and subsequently generalized in \cite{KLT22}, to prove the existence of heteroclinic saddles connectors with any given winding number $N$. 
We begin by recalling some standard terminology from dynamical systems:  For the autonomous system of differential equations 
\begin{equation}\label{eq:ds}
\dot{\mathbf{y}} = \mathbf{F}(\mathbf{y}),\qquad \mathbf{F}: \mathbb{R}^n \to \mathbb{R}^n
\end{equation}
with $\mathbf{F}\in C^1$, the {\em flow map} $\Phi_t:\mathbb{R}^n \to\mathbb{R}^n$ is defined as $\Phi_t(\mathbf{x}) = \mathbf{y}(t)$, where $\mathbf{y}:I\to \mathbb{R}^n$ is the unique solution to \eqref{eq:ds} with initial value $\mathbf{y}(0) = \mathbf{x}$ that is guaranteed to exist for some open interval $I$ around $t=0$. The {\em $\omega$-limit set} of a point $\mathbf{x}_0$ is then defined by
$$ \omega(\mathbf{x}_0) := \{ \mathbf{x}\in \mathbb{R}^n\ |\ \exists \mbox{ sequence } t_n \to \infty \mbox{ s.t. } \Phi_{t_n}(\mathbf{x}_0) \to \mathbf{x} \}.$$
The {\em $\alpha$-limit set} is defined analogously, with $\infty$ replaced by $-\infty$.  It is clear that all points on an orbit have the same $\alpha$- and $\omega$-limit sets, thus it makes sense to talk about $\alpha$- and $\omega$-limits of orbits.

\begin{theorem}\label{Theorem: GT}
Let \(N \in \mathbb{Z}\) be an integer and \(w_E\) be the energy-dependent winding number of trajectories whose $\alpha$-limit is \(S_E^-\). Then, for energy values \(-1 \leq E' < E'' \leq 1\) such that 
\begin{equation}
\label{assumpE}
w_{E'} \leq N \quad \textrm{and} \quad w_{E''} \geq N+1,
\end{equation}
there exists some \(E \in (E',E'')\) such that there is a saddles connector \(\mathcal{W}_E\) with winding number \(w_E = N\). 
\end{theorem}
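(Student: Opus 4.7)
The plan is to prove Theorem~\ref{Theorem: GT} by an intermediate-value argument applied to the terminal angle of the center-unstable manifold emanating from $S_E^-$, viewed on the universal cover of the cylinder. First I would lift \eqref{zThetasys} to $[-\pi/2,\pi/2]\times\mathbb{R}$ and, for each $E\in(-1,1)$, select the unique orbit $\gamma_E$ whose $\alpha$-limit is the lift of $S_E^-$ at $(-\pi/2,\cos^{-1}(E))$; existence, uniqueness, and smooth dependence on $E$ of this center-unstable trajectory follow from Theorem 2.19(iii) of Dumortier et al., already invoked in the text. Because $\dot{z}=\cos^2 z>0$ on the interior strip, $\gamma_E$ is the graph of a function $\Theta_E(z)$ on $(-\pi/2,\pi/2)$. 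The only equilibria on the invariant right edge $\{z=\pi/2\}$ are the lifts of $S_E^+$ and $N_E^+$, located at $\Theta=-\cos^{-1}(E)+2\pi j$ and $\Theta=\cos^{-1}(E)+2\pi k$ respectively. Since $\phi(\tan z)\to 0$ as $z\to\pi/2$, the limiting boundary dynamics $\dot\Theta=2\cos\Theta-2E$ admits only these equilibria; a standard $\omega$-limit-set argument (connected compact invariant set, no periodic orbits on the boundary) then yields a well-defined limit $\Theta_+(E)\in\mathbb{R}$ equal to one of these equilibrium values.

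The crux of the proof, which I expect to be the main obstacle, is showing that $E\mapsto\Theta_+(E)$ is continuous on $(-1,1)$. Standard continuous dependence of ODE solutions on parameters only gives uniform closeness of $\gamma_E$ to $\gamma_{E_0}$ on compact $\tau$-intervals, whereas $\Theta_+(E)$ is attained in the limit $\tau\to\infty$ at a non-hyperbolic equilibrium, so naive transfer fails. I would close this gap by using the local saddle-node phase portrait (Figure 2.13(c) of Dumortier et al.) to construct, around the target point $q_0=(\pi/2,\Theta_+(E_0))$, a shrinkable forward-trapping neighborhood $U$ from which every orbit converges to $q_0$. Picking $T$ with $\gamma_{E_0}(T)\in U$ and applying continuous dependence on $[0,T]$, one gets $\gamma_E(T)\in U$ for all $E$ near $E_0$, so $\gamma_E$ limits inside $U$ to the uniquely-specified equilibrium of the same family (same shift index $j$ or $k$). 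Since the equilibrium curves $\pm\cos^{-1}(E)+2\pi\mathbb{Z}$ are continuous in $E$, shrinking $U$ yields $\Theta_+(E)\to\Theta_+(E_0)$, completing continuity.

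Once continuity is in hand the remainder is a direct IVT argument. Writing $w_E=\lfloor(\cos^{-1}(E)-\Theta_+(E))/(2\pi)\rfloor$ and using that $\Theta_+(E)$ lies on one of the two equilibrium families, a short case analysis shows the hypothesis \eqref{assumpE} implies
\[
\Theta_+(E')+\cos^{-1}(E')+2\pi N\ge 0, \qquad \Theta_+(E'')+\cos^{-1}(E'')+2\pi N<0,
\]
the strict inequality at $E''$ being forced by $\cos^{-1}(E'')<\pi$ for $|E''|<1$. The intermediate value theorem then produces $E\in(E',E'')$ with $\Theta_+(E)=-\cos^{-1}(E)-2\pi N$, meaning $\gamma_E$ lands exactly on the copy of $S_E^+$ shifted down by $2\pi N$; this is a heteroclinic saddles connector $\mathcal{W}_E$, and substitution into the winding-number formula gives $w_E=N$ as claimed. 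Degenerate endpoint cases $E'=-1$ or $E''=1$, where the equilibria collide, can be handled by perturbing the endpoints slightly into the interior while preserving the winding-number hypothesis. The continuity step is the technical heart of the argument because it is the only place where the non-hyperbolic structure of the saddle-nodes plays a delicate role.
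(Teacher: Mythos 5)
Your reduction to an intermediate value argument is the right instinct, but the step you yourself flag as the technical heart --- continuity of the terminal angle $E\mapsto\Theta_+(E)$ --- is false, and this is a fatal gap rather than a repairable technicality. The function $\Theta_+(E)$ takes values in the union of the curves $\cos^{-1}(E)+2\pi\mathbb{Z}$ (copies of $N_E^+$) and $-\cos^{-1}(E)+2\pi\mathbb{Z}$ (copies of $S_E^+$), and for $|E|<1$ these curves are pairwise disjoint; hence a continuous selection from this union on an interval inside $(-1,1)$ would have to remain on a single curve, which directly contradicts your hypothesis that the winding number changes from $\leq N$ at $E'$ to $\geq N+1$ at $E''$. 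Concretely, your trapping-neighborhood construction only works when the target $q_0$ is a copy of $N_E^+$: there the $\Theta$-eigenvalue is $-2\sqrt{1-E^2}<0$ and the point attracts a full half-neighborhood from $z<\pi/2$, so $\Theta_+$ is indeed locally continuous. But at the energies you are trying to detect, the orbit lands on a copy of $S_E^+$, whose $\Theta$-eigenvalue is $+2\sqrt{1-E^2}>0$; it attracts only along its one-dimensional center-stable manifold, orbits entering any neighborhood slightly above or below it are swept to the copies of $N_E^+$ above or below, and $\Theta_+$ jumps by $-2\pi$ across such an energy. So the IVT applied to $\Theta_+(E)+\cos^{-1}(E)+2\pi N$ has nothing to bite on: the function is a step function whose jumps occur precisely at the values you want to exhibit.

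To close the gap you must either (a) run a one-sided argument --- e.g.\ set $E_*=\sup\{E\in[E',E'']: \Theta_+(E)\geq \cos^{-1}(E)-2\pi N\}$ and show, using continuous dependence on compact $\tau$-intervals together with the monotone ordering of orbits in $E$ (Proposition~\ref{prop:barriers}), that at $E_*$ the orbit can land neither on the copy of $N^+$ above nor on the one below and is therefore forced onto the intervening saddle --- or (b) replace the discontinuous endpoint functional by a continuous one. The paper takes route (b): it introduces the signed area $A(E)=\int_{-\pi/2}^{\pi/2}\bigl(y_E^+-y_E^-\bigr)\,dz$ enclosed between the forward orbit from $S_E^-$ and the backward orbit into the target copy of $S_E^+$, shows $A(E')<0<A(E'')$, proves continuity of $A$ by dominated convergence (which only needs pointwise continuity of the trajectories in $E$ plus uniform bounds from monotonicity, not continuity of the limit point), and notes that $A(E)=0$ forces the two non-crossing orbits to coincide. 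That device is exactly what sidesteps the non-hyperbolic difficulty your proposal runs into.
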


\begin{proof}
Define the orbit \(\mathcal{W}_{E'}^-\)  to be one with energy \(E' \) whose $\alpha$-limit is $S_{E'}^-$ in our fundamental domain and $\omega$-limit located above a particular copy of $S_{E'}^+$ at $z=\pi/2$  (i.e., above \((\frac{\pi}{2},\arccos{(E')}-2\pi N)\)). Similarly  define \(\mathcal{W}_{E''}^-\) to be the orbit with a higher energy \(E''\) whose $\alpha$-limit is $S_{E''}^-$ in our fundamental domain and $\omega$-limit located at some point below the same copy of $S_{E''}^+$ at $z=\pi/2$ (i.e., below \((\frac{\pi}{2},\arccos{(E'')}-2\pi N)\)). The existence of these orbits is guaranteed by the assumption \eqref{assumpE}.

Additionally, define orbits \(\mathcal{W}_{E'}^+\) and \(\mathcal{W}_{E''}^+\) whose $\omega$-limits are \(S_{E'}^+\)
and \(S_{E''}^+\) in the fundamental domain of our phase portrait \(\mathcal{C^*}= [-\frac{\pi}{2},\frac{\pi}{2}] \times [-\pi,\pi]\) at the energy levels \(E'\) and \(E''\) respectively. If the $\alpha$-limit of one of these is $S_{E'}^-$ or $S_{E''}^-$, we already have a saddles connector and we're done, so we can assume that these orbits will run backward into some copy of \(N_{E'}^-\) and \(N_{E''}^-\), respectively. Lastly, define \(\sigma_{E'}\) and \(\sigma_{E''}\) to be the orbits that are equivalent to  \(\mathcal{W}_{E'}^+\) and \(\mathcal{W}_{E''}^+\), but whose $\omega$-limits are the copies of the corresponding \(S_{E'}^+\) and \(S_{E''}^+\) shifted down by \(2\pi N\), (i.e., the points \((\frac{\pi}{2},\arccos{(E')}-2\pi N)\) and \((\frac{\pi}{2},\arccos{(E'')}-2\pi N)\) respectively. 

Define \(\mathcal{K}_{E'}\) as the open domain on our cylinder such that \(\mathcal{W}_{E'}^-\) and \(\sigma_{E'}\) lie on the boundary \(\partial \mathcal{K}_{E'}\), and define \(\mathcal{K}_{E''}\) as the open domain on our cylinder such that \(\mathcal{W}_{E''}^-\) and \(\sigma_{E''}\) lie on the boundary \(\partial \mathcal{K}_{E''}\). Orient each boundary so that the orientation induced on \(\mathcal{W}_{E'}^-\) and \(\mathcal{W}_{E''}^-\) coincides with the direction of the flow (i.e., left to right). By Green's theorem, the signed area of \(\mathcal{K}_{E}\) is 
\[A(E) =  \oint_{\partial\mathcal{K}_{E}} -\Theta\, dz =  \int_{-\pi/2}^{\pi/2}(y_{E}^{+}-y_{E}^{-})\,dz,\]
where \(y_{E}^{-}\) denotes the \(\Theta\) component of \(\mathcal{W}_E^-\) and \(y_{E}^{+}\) denotes the \(\Theta\) component of \(\sigma_{E}\). Orbits in our dynamical system cannot intersect with one another, so either \(y_{E}^{+}-y_{E}^{-} \geq 0\) or \(y_{E}^{+}-y_{E}^{-} \leq 0\) for all \(z \in (-\pi/2, \pi/2)\). 

Therefore, if a value of \(E\) exists such that \(A(E)=0\), then \(y_{E}^{+}=y_{E}^{-}\) and the orbits coincide. The right-hand equilibrium point of \(\mathcal{W}_{E'}^-\) is \(N_{E'}^+\) in our fundamental domain while the right-hand equilibrium point of \(\mathcal{W}_{E''}^-\) is below our fundamental domain. This implies that $A(E')<0<A(E'')$.  If $A$ is a continuous function of 
\(E\) in the interval \( [E',E'']\), then by the Intermediate Value Theorem, there exists some \(E \in (E',E'')\) such that \(A(E)=0\), implying the existence of a saddles connector with winding number $N$. 

Figure~\ref{fig:green} illustrates the case $N=0$. In this specific case, \(\mathcal{W}_{E'}^-\) is an orbit with winding number \( w_{E'} =0\). It connects to \(N_{E'}^+\) in our fundamental domain, and since \(\mathcal{W}_{E'}^-\) lies above \(\sigma_{E'}\), \(A(E')<0\). On the other hand, \(\mathcal{W}_{E''}^-\), an orbit of winding number \( w_{E''} \geq 1\) lies beneath \(\sigma_{E''}\), so \(A(E'')>0\). 

It remains to show that \(A(E)\) is a continuous function of \(E\).  
To show that, let \(E_n \in [E',E'']\) be any sequence such that \(E_n \rightarrow E\). Since our trajectories depend continuously on the parameter \(E\), we have that \(y_{E_n}^{\pm} \rightarrow y_{E}^{\pm}\) pointwise. Since \(y_{E}^{\pm}\) is monotone in $E$, both \(y_{E_n}^{-}\) and \(y_{E_n}^{+}\) are bounded uniformly. Therefore, \(A(E_n) \rightarrow A(E)\) by Lebesgue's dominated convergence theorem.
\end{proof}

\begin{figure}[H]
    \centering
    \includegraphics[scale=0.28]{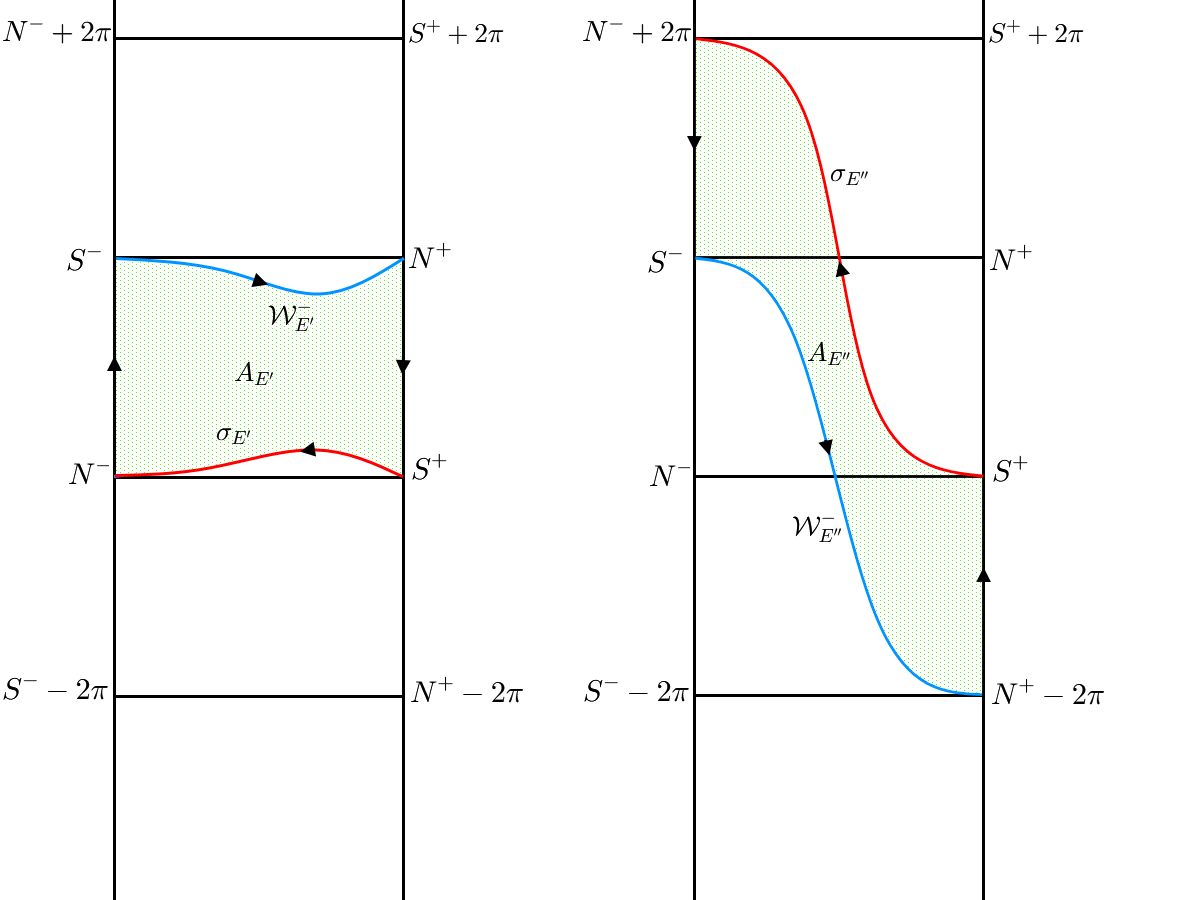}
    \caption{Area sign change}
    \label{fig:green}
\end{figure}

\subsection{Construction of Barriers}\label{barriers}
 Using barriers, we will prove the existence of an orbit with winding number $\leq N$ and prove the existence of another orbit with winding number $\geq N+1$. Then we will apply Theorem \ref{Theorem: GT} to prove the existence of a saddles connector with winding number $N$. 

First we show a general result about orbits of more energy being a lower barrier for orbits of less energy:
\begin{proposition}\label{prop:barriers}
    Let $-1\leq E_1<E<E_2\leq 1$.  Let $\mathcal{W}_E^-$ denote the unique orbit of the system \eqref{zThetasys} whose $\alpha$-limit is $S_E^- = (-\frac{\pi}{2},\cos^{-1}(E))$.  Then $\mathcal{W}_{E_1}^-$ is an upper barrier (as defined below) for $\mathcal{W}_E^-$ and similarly $\mathcal{W}_{E_2}^-$ is a lower barrier for $\mathcal{W}_{E}^-$.
\end{proposition}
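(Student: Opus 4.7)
The strategy is to view the orbits as graphs over $z$ and then run a standard comparison/shooting argument exploiting the monotone dependence of the slope field on $E$.

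On the open strip $(-\pi/2, \pi/2)\times\mathbb{R}$ one has $\dot z = \cos^2 z > 0$, so each orbit $\mathcal{W}_E^-$ can be written as a graph $\Theta = y_E(z)$ satisfying the non-autonomous scalar ODE
\[
y_E'(z) \;=\; \frac{G_E(z,y_E(z))}{F(z,y_E(z))} \;=\; \frac{2\cos y_E - \gamma e^{-|\tan z|} - 2E}{\cos^2 z},\qquad \lim_{z\to -\pi/2^+} y_E(z) = \cos^{-1}E.
\]
The key observation is the $E$-monotonicity of the right-hand side: at any fixed point $(z,\Theta)\in(-\pi/2,\pi/2)\times\mathbb{R}$,
\[
\frac{\partial}{\partial E}\!\left[\frac{G_E(z,\Theta)}{F(z,\Theta)}\right] \;=\; -\frac{2}{\cos^2 z} \;<\; 0,
\]
so if $E_1 < E$, then at any point that is common to both graphs, the slope of $y_{E_1}$ strictly exceeds that of $y_E$.

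The first step is to establish the ordering in a right-neighborhood of $z=-\pi/2$. Since $\cos^{-1}$ is strictly decreasing and $E_1<E<E_2$, the limiting $\alpha$-values satisfy $\cos^{-1}(E_1) > \cos^{-1}(E) > \cos^{-1}(E_2)$. The center-manifold analysis already invoked for \eqref{zThetasys} (Theorem~2.19(iii) of \cite{Dumortier2006QualitativeTO}) applies here: near $z=-\pi/2$ the function $e^{-|\tan z|}$ is $C^\infty$-flat, the Jacobian at $S_E^-$ is $\mathrm{diag}(0,-2\sqrt{1-E^2})$ with center eigenvector $(1,0)$, and the unique center-unstable orbit leaving $S_E^-$ is tangent to the $z$-axis. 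Consequently $y_E(z) = \cos^{-1}(E) + o(1)$ as $z\to -\pi/2^+$, and there exists $z_0 > -\pi/2$ (depending continuously on the parameters) with
\[
y_{E_2}(z) \;<\; y_E(z) \;<\; y_{E_1}(z) \qquad \text{for all } z\in(-\pi/2, z_0).
\]

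I would then propagate $y_{E_1} > y_E$ to all of $(-\pi/2,\pi/2)$ by contradiction. Suppose $z^*\in(-\pi/2,\pi/2)$ is the smallest point at which the two graphs meet. Then $d(z):=y_{E_1}(z)-y_E(z)$ is strictly positive on $(-\pi/2,z^*)$ and vanishes at $z^*$, forcing $d'(z^*) \leq 0$. But evaluating the ODE at the common value $\Theta^* := y_{E_1}(z^*) = y_E(z^*)$ gives
\[
d'(z^*) \;=\; y_{E_1}'(z^*) - y_E'(z^*) \;=\; \frac{2(E-E_1)}{\cos^2 z^*} \;>\; 0,
\]
a contradiction. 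Hence $y_{E_1}(z) > y_E(z)$ throughout $(-\pi/2,\pi/2)$, which is the upper-barrier property. The lower-barrier claim $y_{E_2}(z) < y_E(z)$ follows by the identical argument with inequalities reversed. The main technical obstacle is the first step: the equilibria $S_E^-$ are non-hyperbolic, so one cannot simply invoke continuous dependence on parameters in a hyperbolic-stable-manifold sense; one must use center-manifold uniqueness plus the $C^\infty$-flatness of $e^{-|\tan z|}$ at $z=-\pi/2$ to certify that the two orbits emerge from their respective equilibria in the correct $\Theta$-order. Once that local ordering is in hand, the rest is a routine monotone-flow comparison.
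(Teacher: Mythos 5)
Your proposal is correct and follows essentially the same route as the paper's proof: both arguments rest on the monotone dependence of the slope field on $E$ (the computation $y_{E_1}'-y_E' = 2(E-E_1)/\cos^2 z^*>0$ at a common point is exactly the paper's slope comparison $G_E(z,\Theta_1)-G_{E_1}(z,\Theta_1)=-2(E-E_1)<0$), combined with the ordering of the $\alpha$-limits coming from the monotonicity of $\cos^{-1}$. Your treatment is somewhat more careful at the non-hyperbolic left endpoint (invoking center-manifold uniqueness and the flatness of $e^{-|\tan z|}$ to certify the initial ordering) and phrases the propagation as a first-crossing contradiction rather than a ``can only cross from above to below'' observation, but these are presentational refinements of the same argument.
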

\begin{proof}
    Let $(z(\tau),\Theta_i(\tau))$ be the two orbits $\mathcal{W}_{E_i}^-$, for $i=1,2$. To prove the statement, we need to compare the slope of the orbit $\mathcal{W}_E^-$ with the slopes of these. We have

\begin{eqnarray*}
\left.\frac{d \Theta}{dz}\right|_{\Theta = \Theta_1} - \frac{d\Theta_1}{dz} & = & \sec^2(z) \left( \left.\frac{d \Theta}{d\tau}\right|_{\Theta = \Theta_1} - \frac{d\Theta_1}{d\tau} \right) \\
& = & \sec^2(z) (G_E(z,\Theta_1) - G_{E_1}(z,\Theta_1)) = -2 (E -E_1) < 0.
\end{eqnarray*}
Thus, if the orbit $\mathcal{W}_E^-$ were to cross $\mathcal{W}_{E_1}^-$, it could only cross it from above to below.
Moreover, since $\cos^{-1}$ is a decreasing function, the $\alpha$-limit of $\mathcal{W}_E^-$ is clearly below that of $\mathcal{W}_{E_1}^-$, therefore it is impossible for $\mathcal{W}_E^-$ to ever end up above $\mathcal{W}_{E_1}^-$.  In this sense $\mathcal{W}_{E_1}^-$ is an upper barrier for $\mathcal{W}_E^-$.  This proof can also be used to show that $\mathcal{W}_{E_2}^-$ is a lower barrier.
\end{proof}
By the above proposition, if we can prove the existence of an orbit with $E=1$ that connects an equilibrium point on the left-hand side of the cylinder with another on the right-hand side, since that is the highest value of energy possible, that orbit would acts as ``the mother of all floors," meaning it could be used as a universal lower barrier for all saddles connectors.  This is accomplished in the next theorem.

\begin{theorem}\label{ThmWN}
For \(E=1\), there exists a sequence of values $0= \gamma_0<\gamma_1<\gamma_2<\gamma_3<\dots$, so that if $\gamma >0$ and $\gamma\in [\gamma_{k-1}, \gamma_k)$ for some integer $k\geq 1$, then there exists a heteroclinic orbit for system \eqref{zThetasys} with winding number $k$ and another with winding number $k+1$.
\end{theorem}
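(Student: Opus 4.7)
The key observation is that at the edge-of-spectrum energy $E=1$, the reduced Dirac system \eqref{eq:uv} is linear and explicitly solvable on each half-line $s\lessgtr 0$ in terms of standard confluent hypergeometric (Whittaker) functions of the variable $\xi=(\gamma/2)e^{-|s|}$. My plan is to use this closed form to translate the dynamical-systems question of the theorem into a zero-counting problem for explicit special functions of $\gamma$.

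First, setting $E=m=1$ in \eqref{eq:uv} gives the linear system $u'=(2+e\phi)v$, $v'=-e\phi u$. Eliminating $v$ yields a second-order linear ODE for $u$, and the substitution $\xi=(\gamma/2)e^{-|s|}$ converts it into a Whittaker-type equation on each half-line, with $(u,v)$-matching at $s=0$. On each half-line I would identify the one-parameter family of solutions whose Pr\"ufer phase $\theta(s)=\arctan(v(s)/u(s))$ tends to $0\pmod\pi$ as $s$ goes to the relevant infinity; these are precisely the solutions whose corresponding orbits on the cylinder have $\alpha$-limit (on the left) or $\omega$-limit (on the right) at a copy of a degenerate equilibrium $D^-$ or $D^+$. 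Using asymptotic expansions of the Whittaker $M$ and $W$ functions, I can distinguish within this family a ``fast'' orbit that approaches the equilibrium exponentially from a continuum of ``slow'' ones approaching only algebraically.

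Next, I would read off the winding number of such a heteroclinic orbit from the number of zeros of the corresponding $u(s)$ on $\mathbb{R}$. Standard monotonicity properties of Whittaker-function zeros in the parameter $\gamma$ (to be collected in the Appendix) imply that this zero count is a non-decreasing integer-valued step function of $\gamma$, whose $k$-th jump I define to be $\gamma_k$. This yields the claimed increasing sequence $0=\gamma_0<\gamma_1<\gamma_2<\dots$, and $\gamma_k\to\infty$ follows from the asymptotics of Whittaker zeros for large argument.

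Finally, to produce two orbits with consecutive winding numbers $k$ and $k+1$ simultaneously for each $\gamma\in[\gamma_{k-1},\gamma_k)$, I exploit the complete degeneracy of $D^\pm$ at $E=1$: since their Jacobians vanish identically, the ``unstable set'' of $D^-$ is not a single curve but a two-dimensional region of the cylinder bounded by the invariant left edge $z=-\pi/2$. Within this region the orbits split into one distinguished exponentially-approaching trajectory and a one-parameter continuum of algebraically-approaching ones, and the Whittaker representation pins the difference of their winding numbers to exactly $1$, giving the required pair. The main obstacle I anticipate is this last step: ruling out winding jumps of size greater than $1$ and matching the Whittaker zero count precisely to the winding numbers $k$ and $k+1$. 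This requires a delicate comparison between the oscillation theory for the Whittaker solutions and the boundary-arc dynamics on $z=-\pi/2$, and is where the explicit special-function structure becomes indispensable.
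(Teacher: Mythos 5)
Your reduction to Whittaker functions is the same as the paper's: at $E=1$ the system becomes $u'=(2+\tfrac{\gamma}{2}e^{-|s|})v$, $v'=-\tfrac{\gamma}{2}e^{-|s|}u$, the substitution $r=\gamma e^{s}$ (on $s<0$) turns the second-order equation into $\ddot w+(\tfrac14+\tfrac1r)w=0$, and the thresholds $\gamma_k$ are indeed the zeros and critical points of $M_{-i,1/2}(i\gamma)$. You also correctly see that the winding number is read off from the zero count of $u$ (and $v$). But the decisive step of the argument is missing from your proposal, and the mechanism you offer in its place is not the one that works. The paper does not get its two orbits from a ``fast versus slow'' dichotomy among trajectories emanating from the degenerate equilibrium $D^-$; in fact both of its orbits are of your ``slow'' (algebraically approaching, $c_2\neq 0$) type. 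The key device is the reflection symmetry $s\mapsto -s$: extending $v$ evenly and $u$ oddly forces the $\Theta$-orbit to be symmetric about $s=0$, hence $\Theta(0)=\tfrac12(\Theta(-\infty)+\Theta(\infty))=(2-N)\pi$. The parity of $N$ then dictates the boundary condition at $s=0$ (equivalently $r=\gamma$): odd $N$ requires $u(0)=0$, i.e.\ $\dot w(\gamma)=0$, while even $N$ requires $v(0)=0$, i.e.\ $w(\gamma)=0$. This produces \emph{two} explicit two-point boundary value problems on $[0,\gamma]$, one Neumann-type and one Dirichlet-type, solvable precisely when $M'_{-i,1/2}(i\gamma)\neq 0$ and $M_{-i,1/2}(i\gamma)\neq 0$ respectively; counting their critical points and zeros via Sturm interlacing (Lemma~\ref{lem:Lem1}) gives winding numbers $2\lfloor\tfrac k2+1\rfloor-1$ and $2\lfloor\tfrac k2+\tfrac12\rfloor$, which are exactly $k$ and $k+1$.

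Without this symmetry argument your plan has two concrete holes. First, ``the'' zero count of $u$ as a function of $\gamma$ is ambiguous: there is a one-parameter fan of orbits with $\alpha$-limit at the copy of $D^-$ with $\Theta=2\pi$, and different members of the fan have different zero counts; you must single out the two symmetric ones (or some equivalent selection) before a step function of $\gamma$ is even well defined. Second, your claim that the Whittaker representation ``pins the difference of the winding numbers to exactly $1$'' between the fast orbit and the slow continuum is unsubstantiated, and you acknowledge as much; the paper instead gets consecutive winding numbers from the strict interlacing of the zeros of $M_{-i,1/2}$ and of $M'_{-i,1/2}$, which is what makes the two boundary value problems differ by exactly one oscillation. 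The final point of the theorem --- that no third orbit with a larger winding number exists --- is then handled in the paper by a non-intersection argument, which your proposal does not address at all.
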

\begin{figure}[H]
    \centering
    \includegraphics[scale=0.372]{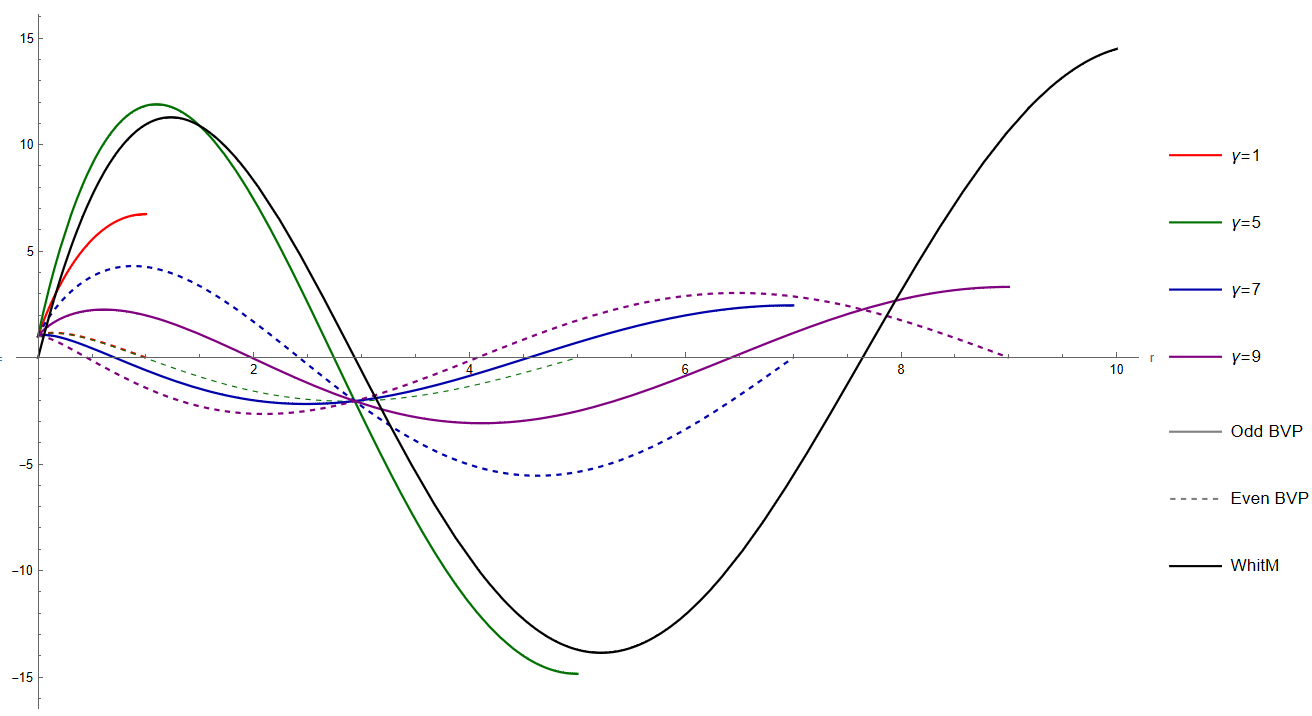}
    \caption{A plot of a scaled version of $-iM_{-i,\frac{1}{2}}(ir)$ vs $r$ along with plots of solutions to the odd and even boundary value problems for several values of $\gamma$.  \label{fig:whit2}}
    
\end{figure}
\begin{proof}
We set $E=1$ and rewrite the system \eqref{eq:uv}:
\begin{equation}
    \left\{\begin{array}{rcl}
        \frac{du}{ds} & = &\left(2+\frac{\gamma}{2} e^{-|s|}\right)v,  \\
         \frac{dv}{ds} & = & - \frac{\gamma}{2} e^{-|s|} u.
    \end{array}\right.
\end{equation}
From the second equation, $u = -\frac{2}{\gamma} e^{|s|} \frac{dv}{ds}$.  Plugging that into the first equation, we obtain a second order linear ODE for $v(s)$:
\begin{equation}
    \frac{d^2v}{ds^2} + \frac{s}{|s|} \frac{dv}{ds} + \left( \gamma e^{-|s|} + \frac{\gamma^2}{4} e^{-2|s|} \right) v = 0.
\end{equation}
We observe that changing $s$ to $-s$ leaves this equation invariant.  It is therefore enough to solve the above on $(-\infty,0)$ and then extend the function $v$ to all of $\mathbb{R}$ as an even function. Thus the equation to solve is
\begin{equation}\label{eq:v}
    v'' - v' + \left( \gamma e^{s} + \frac{\gamma^2}{4} e^{2s} \right) v = 0,\qquad -\infty<s<0,
\end{equation}
where prime denotes differentiation with respect to $s$.
Having found $v$ for $s<0$, one can then solve for $u$ by setting
\begin{equation}\label{eq:u}
    u(s) = - \frac{2}{\gamma} e^{-s} \frac{dv}{ds},\qquad -\infty<s<0.
\end{equation}
Since the extended $v$ is even, the extended $u$ has to be an odd function, so we extend $u$ to all of $\mathbb{R}$ as an odd function.  Note that $v$ may not be differentiable at $s=0$, and thus $u$ may have a jump discontinuity there.

Once $u$ and $v$ are found in this way, one can compute $\Theta = 2\tan^{-1}\left(\frac{v}{u}\right)$ and verify that it has the requisite winding number.

To solve \eqref{eq:v}, we make a change of variable that transforms it into a known equation:  Let $r = \gamma e^s$ and define $w(r) = v(s)$. We then have $v_s = r w_r$ and $v_{ss} = r^2 w_{rr} + r w_r$.  We therefore obtain from \eqref{eq:v} that
\begin{equation}\label{eq:vWhit}
    \ddot{w} + \left( \frac{1}{4} + \frac{1}{r} \right) w = 0
\end{equation}
which is known as Whittaker's equation (see e.g. \cite{NIST:DLMF}, \S 13.14,) with parameters $\kappa = -i$ and $\mu = \frac{1}{2}$. (To see that, change the independent variable to $x=ir$.) Here, $\dot{w}$ is differentiation with respect to $r$.

The general solution of Whittaker's equation is a (complex) linear combination of the two Whittaker functions\footnote{Whittaker functions of imaginary argument are also known as {\em Coulomb wave functions} (See \S \ref{CWFs} for a definition.)  All of the subsequent analysis here can be equivalently formulated in terms of Coulomb wave functions, which may have the advantage of being real-valued, but we choose to work with Whittaker functions since they are more well-known.} $M_{\kappa,\mu}$ and $W_{\kappa,\mu}$.  We thus have that the general solution to \eqref{eq:vWhit} is
\begin{equation}\label{eq:vgensol}
    w_{gen}(r) = c_1 M_{-i,\frac{1}{2}}(ir) + c_2 W_{-i,\frac{1}{2}}(ir),\qquad c_1,c_2\in \mathbb{C}.
\end{equation}
To find $c_1$ and $c_2$ we need to supplement \eqref{eq:vWhit} with two boundary conditions.  These need to be set in such a way that the corresponding solution for the $\Theta$ equation has a desired winding number. We accomplishing this by making sure $v$ and $u$ have asymptotic behaviors as $s \to \pm\infty$ that are compatible with the heteroclinic orbit beginning and ending at the right equilibrium points.

Recall that the equilibrium point on the left side of the cylinder corresponds to $s=-\infty$, and therefore to $r=0$.  We use the known asymptotic behavior at zero of the Whittaker functions that show up in \eqref{eq:vgensol}:
\begin{equation}
    M_{-i,\frac{1}{2}}(z) = z(1+O(z))\mbox{ as }z\to 0,\qquad W_{-i,\frac{1}{2}}(z) = \frac{1}{\Gamma(1+i)} + O(z \ln z)\mbox{ as }z\to 0
\end{equation}
($\Gamma$ is the Gamma function.)

It thus follows that the general solution \eqref{eq:vgensol} goes to a constant value $v_0 := c_2/\Gamma(1+i)$ as $r\to 0$, which would be nonzero if $c_2 \ne 0$, so that $v(s) \sim v_0\ne 0$ as $s\to -\infty$.  From the equation satisfied by $u(s)$, namely
\begin{equation}\label{eq:uprime}
    \frac{du}{ds} = (2+\frac{\gamma}{2} e^{s}) v
\end{equation}
it follows that as $s \searrow -\infty$, we have $\frac{du}{ds} \sim 2 v_0$, and thus $u(s) \sim 2v_0 s$, so that $\Theta(s) \sim 2 \tan^{-1}\frac{1}{2s}$ and thus by choosing a branch of arctan we can arrange it that $\Theta \nearrow 2\pi$ as $s \searrow -\infty$.  Thus the corresponding $\Theta$ orbit has the correct $\alpha$-limit, and the only condition we find on $v$ is that $v_0\ne 0$, which can be assured by choosing the boundary condition $v(0) = 1$ for \eqref{eq:vWhit}. We also note that for $E=1$ the function $G_E(z,\Theta)$ is negative everywhere, so that $\Theta$ would be a monotone decreasing function of $z$.

Thus, since the only equilibrium points of the system \eqref{zThetasys} are at $(\pm\frac{\pi}{2},2\pi\mathbb{Z})$, the $\omega$ limit of this orbit is $\Theta(\infty) = -2\pi n$ for some integer $n\geq 0$.  The winding number of the orbit is thus $N = n+1$.

Recall that $v$ is an even function of $s$, and $u$ is an odd function of $s$.  It follows that the $\Theta$ orbit must be symmetric with respect to $s=0$ and therefore $\Theta(0) = \frac{1}{2}(\Theta(-\infty)+\Theta(\infty)) = (2-N)\pi$.

Now suppose $N$ is odd. Then $\tan\frac{\Theta(0)}{2} = \pm \infty$, which can be achieved if $u(0) = 0$, i.e. $\frac{dv}{ds}(s=0) = 0$. Thus, to have an orbit with an odd winding number, we may choose the other boundary condition for \eqref{eq:vWhit} on the interval $[0,\gamma]$ to be $\dot{w}(\gamma) = 0$  (recall that $r=\gamma e^s$ so $s=0$ corresponds to $r=\gamma$.) We thus have the following boundary value problem for \eqref{eq:vWhit}:
\begin{equation}\label{eq:bvpodd}
    \ddot{w} + \left( \frac{1}{4} + \frac{1}{r} \right) w = 0,\qquad w(0) = 1,\quad \dot{w}(\gamma) = 0.
\end{equation}

Suppose on the other hand that $N$ is even.  We then have $\tan\frac{\Theta(0)}{2} = 0$, which can be achieved if $v(s=0) = 0$.  We thus obtain another boundary value problem for \eqref{eq:vgensol} on the interval $[0,\gamma]$ in that case:
\begin{equation}\label{eq:bvpeven}
    \ddot{w} + \left( \frac{1}{4} + \frac{1}{r} \right) w = 0,\qquad w(0) = 1,\quad {w}(\gamma) = 0.
\end{equation}
Both of the above boundary value problems we can solve since we know the general solution \eqref{eq:vgensol}.  For the $N$ odd case, we obtain:
\begin{equation}\label{eq:oddsol}
    w(r) = -\Gamma(1+i)\frac{W'_{-i,1/2}(i\gamma)}{M'_{-i,1/2}(i\gamma)} M_{-i,1/2}(ir) + \Gamma(1+i) W_{-i,1/2}(ir),
\end{equation}
 with prime denoting differentiation with respect to the argument of the Whittaker functions. This is a valid solution on $[0,\gamma]$ provided the denominator $M'_{-i,1/2}(i\gamma)$ does not vanish.  Similarly, for the case $N$ even we find
\begin{equation}\label{eq:evensol}
    w(r) = -\Gamma(1+i)\frac{W_{-i,1/2}(i\gamma)}{M_{-i,1/2}(i\gamma)} M_{-i,1/2}(ir) + \Gamma(1+i) W_{-i,1/2}(ir),
\end{equation}
which is once again valid on $[0,\gamma]$ provided $M_{-i,1/2}(i\gamma) \ne 0$.  

Note that despite the appearance of complex numbers in these solutions, they must be real, since they are solutions to real boundary value problems for linear equations with real coefficients.  Complex coefficients appear because Whittaker functions themselves are complex-valued. 

Let us therefore define the following increasing sequence $\gamma_k$ of real numbers, with $\gamma_0=0$ and 
\begin{equation}\label{def:gammaj}
     \left\{\begin{array}{rcl}
        \gamma_{2j-1} &=& \mbox{$j$-th positive root of $M'_{-i,1/2} $} \\
         \gamma_{2j} & = & \mbox{$j$-th positive root of $M_{-i,1/2}$}
    \end{array}\right. \quad j=1,2,3,\dots
\end{equation}
For example, we can numerically compute the first few of these to be 
\begin{equation}
    \gamma_1 = 1.230870178, \gamma_2 = 2.934791015, \gamma_3 = 5.218667468, \gamma_4 = 7.643742568.
\end{equation}

For $\gamma \in [\gamma_{k-1}, \gamma_{k})$, we have $\gamma < \gamma_k$ and $\gamma<\gamma_{k+1}$ so the boundary value problems in \eqref{eq:bvpodd} and \eqref{eq:bvpeven} both have valid solutions on $r \in [0,\gamma]$. We need to figure out what the corresponding winding numbers of these would  be. We first consider the boundary value problem \eqref{eq:bvpodd} when $N$ is odd. 

Now, recall that

\begin{equation}
\begin{cases}
        w(r) =  w(\gamma e^s) = v(s) & \text{if } s <0,\\
         v(s) = v(-s)  & \text{if } s > 0,
\end{cases}
\end{equation}
since $r=\gamma e^s$ so $s=0$ corresponds to $r=\gamma$.






For the odd boundary value problem, we have that
\begin{equation}
    \begin{cases}
    v'(0) = 0 \\
    v(s \rightarrow -\infty) = 1 \\
    v(s \rightarrow \infty) = 1.
    \end{cases}
\end{equation}
In addition to this, $v'(s)$ will be zero $2\lfloor\frac{k}{2}+1\rfloor-1$ times. This is because for $r\leq\gamma$, the general solution $w(r)$ will have exactly $\lfloor\frac{k}{2}+1\rfloor$ critical points by Lemma \ref{lem:Lem1} (see Appendix). $w(r)$ on $r\in(0,\gamma]$ and $v(s)$ on $s\in(-\infty, 0)$ must share the same number of critical points since $v'(s)=\gamma e^s w'(\gamma e^s) = r w'(r)$. Since $v(s)$ is even, we double this number and subtract the one at $r=0$ to get the number of critical points of $v(s)$ for $s\in(-\infty, \infty)$. Then, since
\begin{equation}
    u(s) = - \frac{2}{\gamma} e^{-s} \frac{dv}{ds},\qquad -\infty<s<0,
\end{equation}
and we know that $u(s) = -u(-s)$ and $u(s) \sim 2v_{0}s$, this means that 
\begin{equation}
    \begin{cases}
    u(s\rightarrow -\infty) = -\infty  \\
    u(0) = 0 \\
    u(s\rightarrow \infty) =  \infty.
    \end{cases}
\end{equation}
So, $u(s)$ will also be zero $2\lfloor\frac{k}{2}+1\rfloor-1$ times since $u\propto \frac{dv}{ds}$ by \eqref{eq:u}. 

From here, since $\Theta = 2\tan^{-1}{\frac{v}{u}}$ we determine that $\Theta (s\rightarrow \infty) = -2\pi (2\lfloor\frac{k}{2}+1\rfloor-1)$. That is, $\Theta$ passes through the required number of branches of $\tan^{-1}$ to match the number of times $\frac{v}{u}$ diverges for $s\in (-\infty,\infty)$. Therefore, 
\begin{equation}
    \begin{cases}
    \Theta (s\rightarrow -\infty) = 2\pi  \\
    \Theta (s\rightarrow \infty) =  -2\pi (2\lfloor\frac{k}{2}+1\rfloor-1).
    \end{cases}
\end{equation}
By definition, the winding number of such an orbit is $ N = \frac{\Theta(-\infty)-\Theta(\infty)}{2\pi} = 2\lfloor\frac{k}{2}+1\rfloor-1$. 

Now we consider the boundary value problem \eqref{eq:bvpeven} for an orbit with $N$ even. 




Here, since $v(0)=w(\gamma) = 0$, we see that:
\begin{equation}
    \begin{cases}
    v(0) = 0 \\
    v(s \rightarrow -\infty) = 1 \\
    v(s \rightarrow \infty) = 1.
    \end{cases}
\end{equation}

In addition to this, $v'(s)$ will be zero $2\lfloor \frac{k}{2}+\frac{1}{2} \rfloor$ times. This is because, for $r\leq\gamma$, the general solution $w(r)$ will have exactly $\lfloor \frac{k}{2}+\frac{1}{2} \rfloor$ critical points by Lemma \ref{lem:Lem1}. Like  before, $w(r)$ on $r\in(0,\gamma]$ and $v(s)$ on $s\in(-\infty, 0)$ must share the same number of critical points. Since $v(s)$ is even, we double this number to get the number of critical points of $v(s)$ for $s\in(-\infty, \infty)$.

Likewise, $u(s)$ will also be zero $2\lfloor 
\frac{k}{2}+\frac{1}{2} \rfloor$ times and so,
\begin{equation}
    \begin{cases}
    \Theta (s\rightarrow -\infty) = 2\pi  \\
    \Theta (s\rightarrow \infty) =  -2\pi (2\lfloor 
\frac{k}{2}+\frac{1}{2} \rfloor).
    \end{cases}
\end{equation}
By definition, the winding number of such an orbit is $ N = \frac{\Theta(-\infty)-\Theta(\infty)}{2\pi} = 2\lfloor 
\frac{k}{2}+\frac{1}{2} \rfloor$. 

Thus, there exists a heteroclinic orbit for the system \eqref{zThetasys} with winding number $2\lfloor\frac{k}{2}+1\rfloor-1$, and another one with winding number $2\lfloor\frac{k}{2}+\frac{1}{2}\rfloor$. More simply, this means there exists an orbit with winding number $k$ and another with winding number $k+1$. Once these exist, there cannot be a third orbit with a higher winding number, as it would necessarily intersect with at least one of these two orbits, which would violate the existence and uniqueness theorem for solutions of ODEs. 
\end{proof}

An analogous result holds for $E=-1$ as well:

\begin{theorem}\label{ThmWN2}
For \(E= -1\), there exists a sequence of values $0= \Gamma_0<\Gamma_1<\Gamma_2<\Gamma_3<\dots$, so that if $\gamma>0$ and $\gamma\in [\Gamma_{j-1}, \Gamma_j)$, for some $j \geq 1$, then there are exactly two heteroclinic orbits of the system \eqref{zThetasys}, one with winding number $j-2$ and another one with winding number $j-1$.
\end{theorem}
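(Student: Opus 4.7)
The proof follows the same blueprint as Theorem \ref{ThmWN}, with $E = -1$ replacing $E = 1$. Substituting into \eqref{eq:uv} (with $m=1$) gives
\begin{equation*}
\frac{du}{ds} = \tfrac{\gamma}{2} e^{-|s|}\, v, \qquad \frac{dv}{ds} = \bigl(2 - \tfrac{\gamma}{2} e^{-|s|}\bigr) u,
\end{equation*}
which is still invariant under $s \to -s$. Eliminating $v$ on $(-\infty,0)$ yields
\begin{equation*}
u'' - u' + \bigl(\tfrac{\gamma^2}{4} e^{2s} - \gamma e^{s}\bigr) u = 0,
\end{equation*}
and $r=\gamma e^s$ reduces this to a Whittaker equation with $\kappa = i$, $\mu = 1/2$ (the sign change in front of $\gamma e^s$ compared with \eqref{eq:v} is exactly what flips the sign of $\kappa$). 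The general solution is therefore $c_1 M_{i,1/2}(ir) + c_2 W_{i,1/2}(ir)$.

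Next I read off the behavior of $u,v$ as $s\to -\infty$ from the asymptotics of $M_{i,1/2}$ and $W_{i,1/2}$ at the origin, which place the $\alpha$-limit of the corresponding $\Theta$ orbit at the equilibrium $C^- = (-\pi/2,\pi)$ (up to a $2\pi$ shift). Imposing the symmetries forced by the even/odd extension of $u,v$ across $s=0$ gives two boundary value problems on $[0,\gamma]$: one with $w(\gamma)=0$ and one with $\dot w(\gamma)=0$, corresponding to $\Theta(0)\in\{0,\pi\}\pmod{2\pi}$. Their explicit Whittaker-function solutions (analogues of \eqref{eq:oddsol}--\eqref{eq:evensol}) are valid provided $M_{i,1/2}(i\gamma)$ and $M'_{i,1/2}(i\gamma)$, respectively, do not vanish. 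I therefore set $\Gamma_0=0$ and let $\Gamma_j$ be the ordered interleaving of the positive zeros of $M_{i,1/2}(i\gamma)$ and $M'_{i,1/2}(i\gamma)$.

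For $\gamma\in[\Gamma_{j-1},\Gamma_j)$, both boundary value problems have valid solutions. Extending them to $\mathbb{R}$ by the prescribed parities and invoking the analogue of Lemma \ref{lem:Lem1} to count critical points on $[0,\gamma]$ determines the zero sets of $u$ and $v$ on $\mathbb{R}$. Tracking the branch of $\arctan$ in $\Theta = 2\tan^{-1}(v/u)$ then yields $\Theta(\infty)$ and hence the winding number of each orbit. Finally, non-intersection of ODE trajectories rules out a third heteroclinic orbit, giving the ``exactly two'' conclusion.

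The main obstacle is the winding-number bookkeeping that produces $j-2$ and $j-1$ rather than $j$ and $j+1$. For $E=1$ the monotonicity of $\Theta$ (since $G_E<0$ everywhere) makes the count transparent. For $E=-1$ the function $G_{-1}(z,\Theta) = 2\cos\Theta - \gamma e^{-|\tan z|} + 2$ changes sign, being positive near $\Theta\equiv 0\pmod{2\pi}$ and negative near $\Theta\equiv\pi\pmod{2\pi}$, so $\Theta$ need not be monotone. Together with the shift of the $\alpha$- and $\omega$-limits from $0$ to $\pi\pmod{2\pi}$, this is what produces the offset in winding numbers. The case $j=1$ is especially delicate, since it involves an orbit of negative winding number $-1$, i.e.\ one whose $\Theta$ actually increases overall; verifying that the $\arctan$ branch-tracking indeed yields this value is where the detailed work lies.
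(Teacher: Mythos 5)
Your proposal follows the paper's own route step for step: the same reduction to the Whittaker equation with $\kappa = i$, $\mu = \tfrac12$, the same pair of boundary value problems $w(\gamma)=0$ and $\dot w(\gamma)=0$ dictated by the parities of $u$ and $v$, the same definition of $\Gamma_j$ as the interleaved positive zeros of $M_{i,1/2}(i\cdot)$ and $M'_{i,1/2}(i\cdot)$, and the same non-intersection argument to exclude a third connector. Your diagnosis of why the answer is $j-2$, $j-1$ rather than $j$, $j+1$ --- the $\alpha$- and $\omega$-limits now sit at $\Theta\equiv\pi\pmod{2\pi}$ and $G_{-1}$ changes sign so $\Theta$ is not monotone --- is also the correct one.

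The gap is that you stop exactly at the step that carries the content of the theorem: the integers $j-2$ and $j-1$ are never actually produced. The paper closes this in two moves. First, Lemma~\ref{lem:Lem1} counts the zeros of the two BVP solutions on $(0,\gamma]$ ($\lfloor j/2\rfloor$ for the odd problem, $\lfloor j/2+\tfrac12\rfloor$ for the even one), which after the even/odd extension gives the number of zeros of $u$ on all of $\mathbb{R}$ and hence the number of $\arctan$ branches that $\Theta=2\tan^{-1}(v/u)$ must traverse. Second --- and this is precisely the piece you flag as ``delicate'' without resolving --- the count is anchored at the base case $j=1$: one must show that the odd-BVP solution $u$ has \emph{no} zeros on $(\Gamma_0,\Gamma_1)$, which requires a genuine positivity argument for a specific combination of Whittaker functions (carried out in the appendix), and which forces $\Theta$ to run from $3\pi$ up to $5\pi$, i.e.\ winding number $-1 = j-2$. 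Only with that anchor do the general formulas $2\lfloor j/2\rfloor-1 = j-2$ and $2\lfloor j/2+\tfrac12\rfloor - 2 = j-1$ follow by induction on $j$. Without the base case and the zero count, the offset by two relative to the $E=1$ theorem is asserted rather than proved, and the possibility of a negative winding number (which is what distinguishes this theorem from Theorem~\ref{ThmWN}) is never actually exhibited.
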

\begin{proof}
We set $E= -1$ and rewrite the system \eqref{eq:uv}:
\begin{equation}
    \left\{\begin{array}{rcl}
        \frac{du}{ds} & = &\left(\frac{\gamma}{2} e^{-|s|}\right)v  \\
         \frac{dv}{ds} & = & (2- \frac{\gamma}{2} e^{-|s|}) u.
    \end{array}\right.
\end{equation}

From the first equation, $v = \frac{2}{\gamma} e^{|s|} \frac{du}{ds}$.  Plugging that into the second equation, we obtain a second order linear ODE for $u(s)$:
\begin{equation}
    \frac{d^2u}{ds^2} + \frac{s}{|s|} \frac{du}{ds} + \left( -\gamma e^{-|s|} + \frac{\gamma^2}{4} e^{-2|s|} \right) u = 0.
\end{equation}
We observe that changing $s$ to $-s$ leaves this equation invariant.  It is therefore enough to solve the above on $(-\infty,0)$ and then extend the function $u$ to all of $\mathbb{R}$ as an even function. Thus the equation to solve is
\begin{equation}\label{eq:u2}
    u'' - u' + \left( -\gamma e^{s} + \frac{\gamma^2}{4} e^{2s} \right) u = 0,\qquad -\infty<s<0,
\end{equation}
where prime denotes differentiation with respect to s.
Having found $u$ for $s<0$, one can then solve for $v$ by setting
\begin{equation}
    v(s) = \frac{2}{\gamma} e^{-s} \frac{du}{ds},\qquad -\infty<s<0.
\end{equation}
Since the extended $u$ is even, the extended $v$ has to be an odd function, so we extend $v$ to all of $\mathbb{R}$ as an odd function.  
Once $u$ and $v$ are found in this way, one can compute $\theta = 2\tan^{-1}\left(\frac{v}{u}\right)$ and verify that it has the requisite winding number.

To solve \eqref{eq:u2}, we make the same change of variable that transforms it into a known equation:  Let $r = \gamma e^s$ and $w(r) = u(s)$.  We then obtain from \eqref{eq:u2} that
\begin{equation}\label{eq:newWhit}
    \ddot{w} +\left(  \frac{1}{4}-\frac{1}{r}\right) w = 0,
\end{equation}
where $\ddot{w}$ is the second derivative of $w$ with respect to $r$. This is Whittaker's equation, with parameters $\kappa = i$ and $\mu = \frac{1}{2}$, which can be seen with the change of variables $x=ir$.

The general solution of Whittaker's equation is a (complex) linear combination of the two Whittaker functions $M_{\kappa,\mu}$ and $W_{\kappa,\mu}$.  We thus have that the general solution to \eqref{eq:newWhit} is
\begin{equation}\label{eq:ugensol}
    w_{gen}(r) = c_1 M_{i,\frac{1}{2}}(ir) + c_2 W_{i,\frac{1}{2}}(ir),\qquad c_1,c_2\in \mathbb{C}.
\end{equation}
To find $c_1$ and $c_2$ we need to supplement \eqref{eq:vWhit} with two boundary conditions.  These need to be set in such a way that the corresponding solution for the $\Theta$ equation has a desired winding number. We accomplish this by making sure $v$ and $u$ have asymptotic behaviors as $s \to \pm\infty$ that are compatible with the heteroclinic orbit beginning and ending at the right equilibrium points.

Recall that the equilibrium point on the left side of the cylinder corresponds to $s=-\infty$, and therefore to $r=0$.  We use the known asymptotic behavior at zero of the Whittaker functions that show up in \eqref{eq:ugensol}:
\begin{equation}
    M_{i,\frac{1}{2}}(z) = z(1+O(z))\mbox{ as }z\to 0,\qquad W_{i,\frac{1}{2}}(z) = \frac{1}{\Gamma(1-i)} + O(z \ln z)\mbox{ as }z\to 0
\end{equation}
($\Gamma$ is the Gamma function.)

It thus follows that the general solution \eqref{eq:ugensol} goes to a constant value $u_0 := c_2/\Gamma(1-i)$ as $r\to 0$, which would be nonzero if $c_2 \ne 0$, so that $u(s) \sim u_0\ne 0$ as $s\to -\infty$.  From the equation satisfied by $v(s)$, namely
\begin{equation}
    \frac{dv}{ds} = (2-\frac{\gamma}{2} e^{s}) u,
\end{equation}
it follows that as $s \searrow -\infty$, we have $\frac{dv}{ds} \sim 2 u_0$, and thus $v(s) \sim 2u_0 s$, so that $\Theta(s) \sim 2 \tan^{-1}{2s}$ and thus by choosing a branch of arctan we can arrange it that $\Theta \searrow 3\pi$ as $s \searrow -\infty$.  Thus the corresponding $\Theta$ orbit has the correct $\alpha$-limit, and the only condition we find on $u$ is that $u_0\ne 0$, which can be assured by choosing the boundary condition $w(0) = 1$ for \eqref{eq:newWhit}. Also, we note that, in contrast to the $E=1$ case, here $\Theta$ will be an {\em increasing} function of $z$ in a neighborhood of the endpoints.

Since the only equilibrium points of the system \eqref{zThetasys} are at $(\pm\frac{\pi}{2}, \pi+2\pi\mathbb{Z})$, the $\omega$ limit of this orbit is $\Theta(\infty) = 3\pi-2\pi n$ for some integer $n\in \mathbb{Z}$.  The winding number of the orbit is thus $N=n$ (which can be negative).

 Recall that $u$ is an even function of $s$, and $v$ is an odd function of $s$.  It follows that the $\Theta$ orbit must be symmetric with respect to $s=0$ and therefore $\Theta(0) = \frac{1}{2}(\Theta(-\infty)+\Theta(\infty)) = 3\pi - \pi N $.

Now suppose $N$ is odd. Then $\tan\frac{\Theta(0)}{2} = 0$, which can be achieved if $v(0) = 0$, i.e. $\frac{du}{ds}(s=0) = 0$. Thus, to have an orbit with an odd winding number, we may choose the other boundary condition for \eqref{eq:newWhit} on the interval $[0,\gamma]$ to be $\dot{w}(\gamma) = 0$  (recall that $r=\gamma e^s$ so $s=0$ corresponds to $r=\gamma$.) We thus have the following boundary value problem for \eqref{eq:newWhit}:
\begin{equation}\label{eq:bvpodd2}
     \ddot{w} +\left(  \frac{1}{4}-\frac{1}{r}\right) w = 0,\qquad w(0) = 1,\quad \dot{w}(\gamma) = 0.
\end{equation}

Suppose on the other hand that $N$ is even.  We then have $\tan\frac{\Theta(0)}{2} = \pm \infty$, which can be achieved if $u(s=0) = 0$.  We thus obtain another boundary value problem for \eqref{eq:vgensol} on the interval $[0,\gamma]$ in that case:
\begin{equation}\label{eq:bvpeven2}
    \ddot{w} +\left(  \frac{1}{4}-\frac{1}{r}\right) w = 0,\qquad w(0) = 1,\quad w(\gamma) = 0.
\end{equation}
Both of the above boundary value problems we can solve since we know the general solution \eqref{eq:ugensol}.  For the $N$ odd case, we obtain (with prime denoting differentiation with respect to the argument of the Whittaker functions):
\begin{equation}
    w(r) = -\Gamma(1-i)\frac{W'_{i,1/2}(i\gamma)}{M'_{i,1/2}(i\gamma)} M_{i,1/2}(ir) + \Gamma(1-i) W_{i,1/2}(ir),
    \label{eq:oddsol2}
\end{equation}
which is a valid solution on $[0,\gamma]$ provided the denominator $M'_{-i,1/2}(i\gamma)$ does not vanish.  Similarly, for the case $N$ even we find
\begin{equation}
    w(r) = -\Gamma(1-i)\frac{W_{i,1/2}(i\gamma)}{M_{i,1/2}(i\gamma)} M_{i,1/2}(ir) + \Gamma(1-i) W_{i,1/2}(ir),
     \label{eq:evensol2}
\end{equation}
which is once again valid on $[0,\gamma]$ provided $M_{i,1/2}(i\gamma) \ne 0$.  
Note that despite the appearance of complex numbers in these solutions, they must be real, since they are solutions to real boundary value problems for linear equations with real coefficients.  Complex coefficients appear because Whittaker functions themselves are complex-valued. 

Let us therefore define the following increasing sequence $\Gamma_k$ of real numbers, with $\Gamma_0=0$ and 
\begin{equation}\label{def:Gammaj}
     \left\{\begin{array}{rcl}
        \Gamma_{2j-1} &=& \mbox{$j$-th positive root of $M'_{i,1/2} $} \\
         \Gamma_{2j} & = & \mbox{$j$-th positive root of $M_{i,1/2}$}
    \end{array}\right. \quad j=1,2,3,\dots
\end{equation}
For example, we can numerically compute the first few of these to be 
\begin{equation}
    \Gamma_1 = 7.3148, \Gamma_2 = 11.6282, \Gamma_3 = 15.3354, \Gamma_4 = 18.9491.
\end{equation}
Let $\gamma>0$ be given.  There exists an integer $j\geq 1$ such that $\gamma \in [\Gamma_{j-1},\Gamma_j)$.  We therefore have that $\gamma< \Gamma_j$ and $\gamma < \Gamma_{j+1}$, so that both boundary value problems in the above have valid solutions.  

  We prove in Lemma \ref{lem:Lem1} (see Appendix) that for $E=-1$,  $w(r)$ will have $\lfloor \frac{j}{2}  \rfloor$ zeroes between $r\in [0,\gamma]$ in the solution to \eqref{eq:bvpodd2} and $\lfloor \frac{j}{2} + \frac{1}{2}\rfloor$ zeroes between $r\in [0,\gamma]$ in the solution to \eqref{eq:bvpeven2}. Consequently, the number of zeros of $u(s)$ would be 
  $2 \lfloor \frac{j}{2} \rfloor$ and $2\lfloor \frac{j}{2}+\frac{1}{2} \rfloor-1$ respectively, noting that the $1$ is subtracted in the latter case to avoid double counting the zeroes of $u(s)$ at $s=0$ due to our boundary condition. 
  
  Consider the case $j=1$.  In that case, $u$ will have no zeros, which implies that the branch of arctan to be chosen goes from $3\pi/2$ to $5\pi/2$, or in other words $\Theta$ goes from $3\pi$ up to $5\pi$, so that the winding number of the corresponding orbit is $N=-1$. From here we deduce that in general, the winding number of corresponding $E=-1$ orbits will be $2 \lfloor \frac{j}{2} \rfloor -1$ and $2\lfloor \frac{j}{2}+\frac{1}{2} \rfloor-2$, respectively.
  
  It follows that for $\gamma \in [\Gamma_{j-1},\Gamma_j)$ and $E = -1$, there exists a heteroclinic orbit for the system \eqref{zThetasys} with winding number  $j-2$ and another with winding number $j-1$. As in the case of $E=1$, there cannot be a third orbit with a higher winding number, as it would necessarily intersect with at least one of these two orbits which would violate the existence and uniqueness theorem for solutions of ODEs. 

Thus, for any $\gamma>0$ there exists exactly two orbits of \eqref{zThetasys} with $E= -1$. 
\end{proof}

We can now state and prove the main result of this paper:
\begin{theorem}\label{Theorem:main}
Let $\gamma>0$ be given, and let $\{\gamma_j\}, \{\Gamma_j\}$ be defined by \eqref{def:gammaj}, \eqref{def:Gammaj} respectively.  Thus there exist integers $n\geq 0$ and $j\geq n+1$ such that $\gamma \in [\gamma_{j-1},\gamma_j) \cap[\Gamma_{n}, \Gamma_{n+1})$.  Let $k(n)\geq 1$ be the smallest integer with the property that $\gamma_{n+k(n)} > \Gamma_{n}$.
Then 
\begin{enumerate}
\item The dynamical system
\eqref{zThetasys} has exactly $N := j - n$ saddles connectors with all winding numbers between $n$ and $j-1$. 
\item The hamiltonian \eqref{eq:Ham} has a ground state with winding number $n$ and a maximum of $k(n+1)-1$ excited states with higher winding numbers. 
\item The discrete spectrum of the hamiltonian consists of finitely many simple eigenvalues $E_i$ with $$-1<E_0< E_1 < \dots < E_{N-1}<1.$$
\end{enumerate}
\end{theorem}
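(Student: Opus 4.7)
The plan is to synthesize the boundary analysis at $E=\pm 1$ from Theorems~\ref{ThmWN} and~\ref{ThmWN2}, the Intermediate-Value argument in Theorem~\ref{Theorem: GT}, and the barrier monotonicity of Proposition~\ref{prop:barriers}.  First I would apply Theorem~\ref{ThmWN} on the interval $\gamma\in[\gamma_{j-1},\gamma_j)$ to pin down that at $E=1$ the only heteroclinic orbits of \eqref{zThetasys} have winding numbers $j$ and $j+1$, and apply Theorem~\ref{ThmWN2} on $\gamma\in[\Gamma_n,\Gamma_{n+1})$ to pin down that at $E=-1$ the only heteroclinic orbits have winding numbers $n-1$ and $n$.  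Together with Proposition~\ref{prop:barriers}, these play the role of universal outer barriers for the one-parameter family $\{\mathcal{W}_E^-\}_{E\in(-1,1)}$, yielding $w_{-1}\le n$ and $w_{1}\ge j$.

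Next I would prove existence by applying Theorem~\ref{Theorem: GT} once for each integer $N$ with $n\le N\le j-1$, taking $E'=-1$ and $E''=1$: the hypotheses $w_{E'}\le N$ and $w_{E''}\ge N+1$ are in hand from the previous step, so a saddles connector of winding number exactly $N$ is produced at some $E_N\in(-1,1)$.  This furnishes at least $j-n$ distinct saddles connectors.  Uniqueness, and the sharpness of the range $\{n,\dots,j-1\}$, then follow from Proposition~\ref{prop:barriers}: since $\mathcal{W}_{E_2}^-$ lies strictly below $\mathcal{W}_{E_1}^-$ whenever $E_1<E_2$, the integer-valued function $E\mapsto w_E$ is weakly increasing on $(-1,1)$, and two saddles connectors sharing a winding number would have to coincide as orbits while issuing from distinct $S_E^-$, which is impossible.

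For the spectral statement I would invoke the correspondence developed in Section~\ref{sec:redham}: via the Pr\"ufer transform, each saddles connector provides $\Theta(-\infty)\in(0,\pi]$ and $\Theta(\infty)\in[-\pi,0)$, which by \eqref{eqn:R_theta} forces $R\in L^2(\mathbb{R})$ and hence yields an $L^2$ eigenpair of $h$.  Simplicity of the eigenvalues and the containment of $\sigma_{\mathrm{disc}}(h)$ in $(-1,1)$ are then read off from Proposition~\ref{prop:Widemann}, and labelling eigenvalues in order of winding number together with the strict monotonicity in $E$ established above gives $-1<E_0<\dots<E_{N-1}<1$, with the ground state corresponding to the minimal achievable winding number $n$.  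The bound $N-1\le k(n+1)-1$ on the number of excited states is then a direct combinatorial consequence of the defining inequalities for the sequences $\{\gamma_j\}$ and $\{\Gamma_n\}$.

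The step I expect to be the main obstacle is confirming the sharpness of the jump count of $w_E$: one must argue that the saddles-connector energies are \emph{precisely} the $j-n$ jump discontinuities of $w_E$ in $(-1,1)$, ruling out ``double jumps'' by two or more units at a single energy.  This requires a careful local analysis near each candidate $E_N$, tracking how $\mathcal{W}_E^-$ transitions between adjacent copies of $N_E^+$ on either side of a saddles connector, and using the uniqueness of the center-unstable manifold emanating from $S_E^-$ to show that only one such transition can occur at any given $E_N$.
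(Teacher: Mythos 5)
Your proposal follows essentially the same route as the paper: it combines the $E=\pm 1$ boundary orbits from Theorems~\ref{ThmWN} and~\ref{ThmWN2} as universal barriers via Proposition~\ref{prop:barriers}, invokes Theorem~\ref{Theorem: GT} with $E'=-1$, $E''=1$ once per winding number in $\{n,\dots,j-1\}$, and reads off simplicity and the interval $(-1,1)$ from Proposition~\ref{prop:Widemann}, just as the paper does. The only cosmetic difference is that the paper rules out repeated eigenvalues by citing simplicity from Weidmann's theorem rather than your orbit-coincidence argument, and it makes the combinatorial bound $j\le n+k(n+1)$ explicit via the Appendix lemma $\gamma_i<\Gamma_i$; neither changes the substance.
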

\begin{figure}[H]
    \centering
    \includegraphics[width=0.5\textwidth]{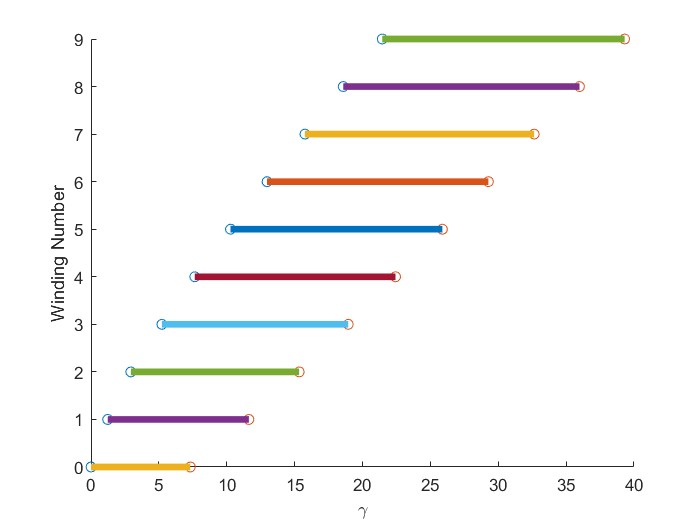}\hfill
    \includegraphics[width=0.5\textwidth]{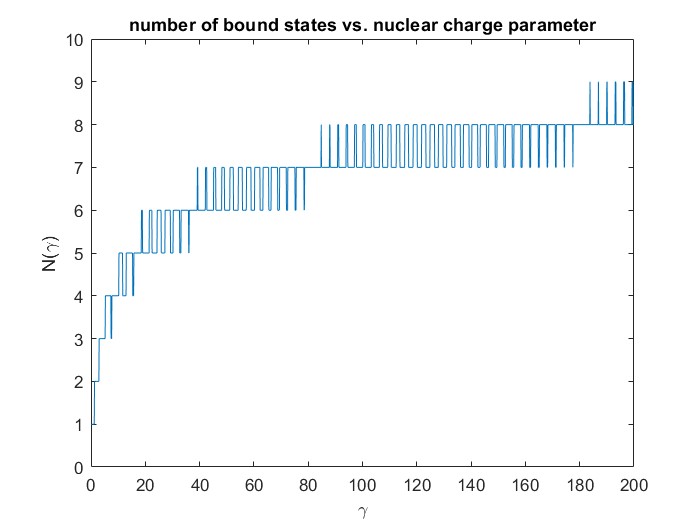}
    \caption{Winding numbers of saddles connectors and number of bound states versus $\gamma$}
    \label{fig:staircase}
\end{figure}
\begin{proof}

For $j\geq 1$ let $I_j := [\gamma_{j-1},\gamma_j)$ and $J_j := [\Gamma_{j-1},\Gamma_j)$. These are two families of disjoint intervals covering $[0,\infty)$.  
Let $\gamma>0$ be given.  Then there is a unique $n\geq 0$ such that $\gamma \in J_{n+1}$ and there is a unique integer $j \geq 1$ such that $\gamma \in I_j$.  In Lemma 1 of the Appendix we show that $\gamma_i < \Gamma_{i}$ for all $i\geq 1$, which implies  that  we must have $$n+1\leq j \leq n+k(n+1).$$ 
Therefore Theorem \ref{ThmWN} guarantees the existence of orbits with $E=1$ and winding numbers $j$ and $j+1$, while Theorem \ref{ThmWN2} guarantees the existence of orbits with $E=-1$ and winding numbers $n-1$ and $n$. We also know that no other orbits of different winding numbers can exist at these values of energy.
Since there exists an orbit with winding number $n$ for $E=-1$ and an orbit with winding number $j\geq n+1$ for $E=1$, 
By Theorem 1, this means that there exists some $E \in (-1,1)$ such that there exists a saddles connector $W_n$ with winding number $n$. Since the $E=1$ orbits act as universal lower barriers, the existence of the $E=1$ orbit with winding number $j$ implies that there can be no saddles connector with winding number $j$ or higher. Similarly, since $E=-1$ orbits function as universal upper barriers, the existence of an $E=-1$ orbit with winding number $n$ rules out the possibility of a saddles connector with a winding number $n-1$ or lower as well. Otherwise, all winding numbers between $n$ and $j-1$ are allowed for saddle connectors, and their existence can be established by repeated use of Theorem 1.

It follows that for $\gamma \in J_n \cap I_j$, the hamiltonian \eqref{eq:Ham} has a total of $N:= j-n$ bound states, with the ground state having winding number $n$. Since $j \leq n+k(n+1)$, the maximum number of bound states is $k(n+1)$.  See Fig.~\ref{fig:staircase}.  

Finally, let $E_i\in(-1,1)$ denote the energy eigenvalue of the eigenfunction that corresponds to the saddles connector with winding number $i$.  If $E_{i+1}<E_i$, it would follow from Prop.~\ref{prop:barriers} that $W_{i}$ would always lie below $W_{i+1}$, which is a contradiction since the $\omega$-limit of $W_i$ is $-2\pi i -\cos^{-1}E_i$ and the $\omega$-limit of $W_{i+1}$ is $-2\pi i - 2\pi -\cos^{-1}E_{i+1}$ so $W_{i+1}$ needs to cross $W_{i}$ and go below it. Therefore $E_i \leq E_{i+1}$.  We already know by Prop.~\ref{prop:Widemann} that all eigenvalues have to be simple, so $E_i \ne E_{i+1}$. This establishes the claim.
\end{proof}

\section{Numerical Investigations}\label{Numerical Section}
\subsection{Computing the discrete energy levels}
In this section we use the computational software package Matlab \cite{MATLAB} to create a binary search program that will allow us to input an initial guess $E_N^0$ for the actual energy eigenvalue $E_N$ of a saddles connector with a specified winding number $N$, and a tolerance $\epsilon$, and it will search for an approximate eigenvalue $E^{\mbox{\tiny app}}_N$ such that $|E^{\mbox{\tiny app}}_N - E_N| < \epsilon$.

In order to find these approximate values, we first numerically solve the $\Theta$ equation
\begin{equation}
    \Theta' = 2\cos{\Theta} - \gamma e^{-|s|} - 2E,
    \label{numericalode}
\end{equation}
for the given $E=E_N^0$ and the initial value 
$$ \Theta(0) = \Theta_N= \frac{1}{2} (\Theta(-\infty) + \Theta(\infty)) = \frac{1}{2}( 2\pi + \cos^{-1}E + 2\pi - \cos^{-1}E - 2\pi N) = \pi(2-N).$$
(Here we are using the fact that the solution to the above equation will be symmetric under reflection with respect to $(s=0,\Theta = \Theta(0))$.)

The parameter $\gamma$ in the equation is the product of the electric charges of the electron and the nucleus (in non-dimensionalized units). Since we are working in one space dimension, we do not have a-priori knowledge of what the physically meaningful range of values is for $\gamma$, so that we treat it as a parameter that can have any positive value.  

Once a numerical solution is found in the interval $[0,S_N]$ for $S_N>0$ suitably large, by reflecting it across the initial point we can obtain a numerical approximation to the desired saddles connector in the interval $[-S_N,S_N]$.  However, since the end points of such a connector are necessarily saddle-nodes, the orbit itself is expected to be highly unstable, and therefore we expect that, when solving the equation forward in $s$, the computed solution will either overshoot or undershoot its target, depending on whether the initial guess for the energy is above or below the actual eigenvalue. Thus by doing a binary search, we can successively halve the length of the interval in which the eigenvalue lies, until it is less than $2\epsilon$.  

We can now investigate the relationship between $\gamma$ and the corresponding energy eigenvalues for various winding numbers (see Fig.~\ref{fig:whit}).


\begin{figure}[H]
    \centering
    \includegraphics[scale=0.5]{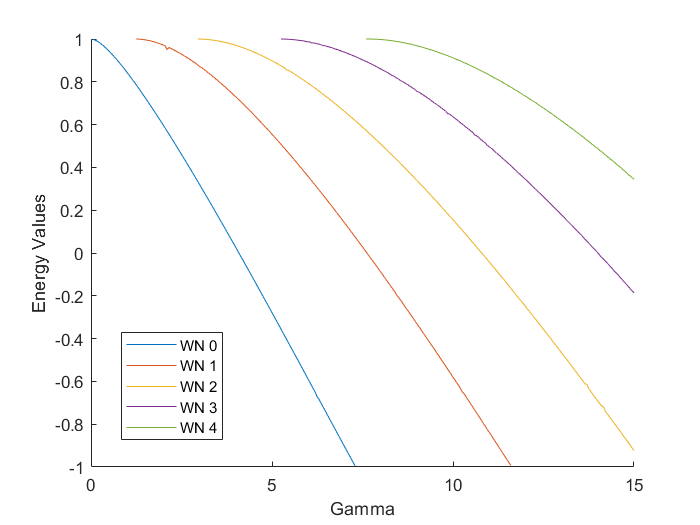}
    \caption{This figure shows the energy eigenvalue as a function of $\gamma$, for different winding numbers.}
    \label{fig:whit}
\end{figure}

From the figure, we can see that only certain winding numbers exist for any given $\gamma$. For example, if we look at $0<\gamma<5$, there only exist winding numbers 0, 1, and 2. The curves of winding numbers 3 and 4 begin at some value $\gamma>5$. This means an atom that corresponds to $\gamma = 5$ has only one ground and two excited states. 

Interestingly, for larger value of $\gamma$, the ground state may not have winding number 0. For example, solutions with three winding numbers exist for $\gamma=7.5$, but this value of $\gamma$ is larger than where the winding number 0 curve crashes into the lower part of the continuous spectrum, so the ground state for this $\gamma$ has winding number 1, and once again, it has two excited states (winding numbers 2 and 3). All of these observations are completely consistent with our Theorem~\ref{Theorem:main}, and are in sharp contrast to the three-dimensional Hydrogenic hamiltonian, where eigenfunctions with any non-negative winding number exist for any value of $\gamma\in(0,1)$, while for $\gamma \geq 1$ the hamiltonian stops being self-adjoint.

For the remainder of this numerical investigation, we will only consider the $\gamma$ values and winding numbers where we expect saddle connectors to exist, according to Figure \ref{fig:whit}.

The instability of saddles connectors that was previously mentioned in the above also means that, in computing $\Theta(s)$, we need to choose $S_N$'s so that they are neither too small (so that the solution has a chance to stabilize) nor too large (so that it has not yet veered off to the wrong equilibrium point at $s=\infty$.) In practice we do this by watching the $\Theta(s)$ values and stopping the computation when we see that the energy stabilizes. See the upper left-hand corner plots in Figures \ref{fig:gamma0.5plots} through \ref{fig:gamma7.5n3plots}.
Once the correct $\Theta(s)$ is found in $[-S_N,S_N]$ we truncate it outside this interval and extend it to all $\mathbb{R}$ by keeping its values constant on each side of that interval (see the upper right-hand plots in Figures \ref{fig:gamma0.5plots} -- \ref{fig:gamma7.5n3plots}.)  We then use this $\Theta(s)$ to compute $R(s)$ by numerically integrating (\ref{eqn:R_theta}),  thus calculating the probability density function $\rho = R^2(s)$, which we plot as a function of $s$ (See the lower left-hand corner plots in  Figures \ref{fig:gamma0.5plots} -- \ref{fig:gamma7.5n3plots}). Finally, we can use equation (\ref{eqn:reduced}) to plot the corresponding eigenfunctions $(u(s),v(s))$ as a parametric curve in the $uv$-plane. These will be curves that have to begin and end at the origin due to the integrability condition \eqref{eqn:normalize}.  See the plots in the lower right-hand corner of figures \ref{fig:gamma0.5plots} -- \ref{fig:gamma7.5n3plots}.  These shapes are therefore the 1-dimensional analogues of the familiar hydrogenic orbitals in 3 dimensional space. 

We first look at the case $\gamma$ = 0.5, where only the ground state ($n=0$) is supposed to exist.
\begin{figure}[H]
   \begin{subfigure}{0.5\textwidth}{\includegraphics[width=0.8\textwidth]{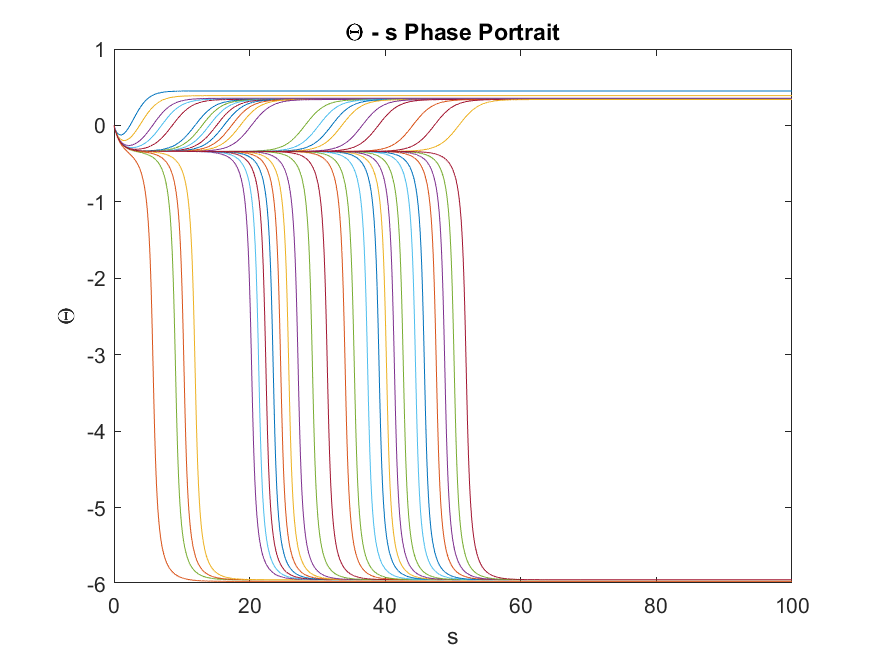}}\end{subfigure}\hfill
 \begin{subfigure}{0.5\textwidth}{\includegraphics[width=0.8\textwidth]{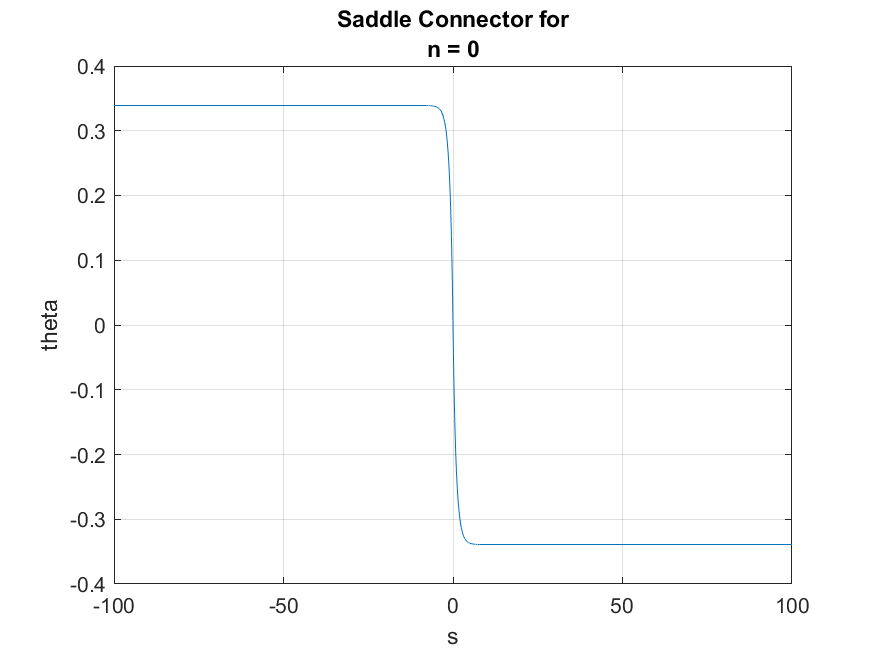}}\end{subfigure}\\
 \begin{subfigure}{0.5\textwidth}{\includegraphics[width=0.8\textwidth]{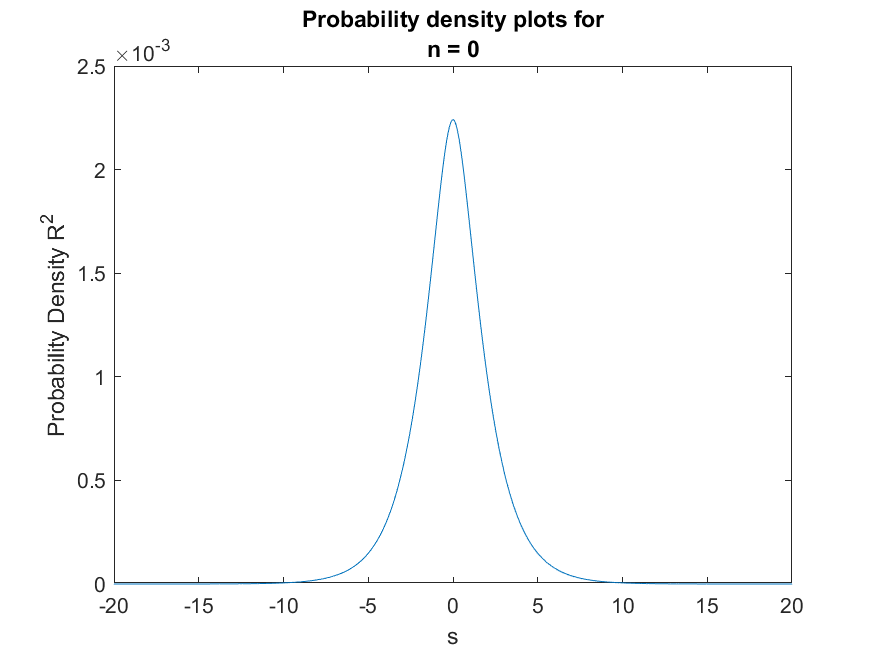}}\end{subfigure}\hfill
 \begin{subfigure}{0.5\textwidth}{\includegraphics[width=0.8\textwidth]{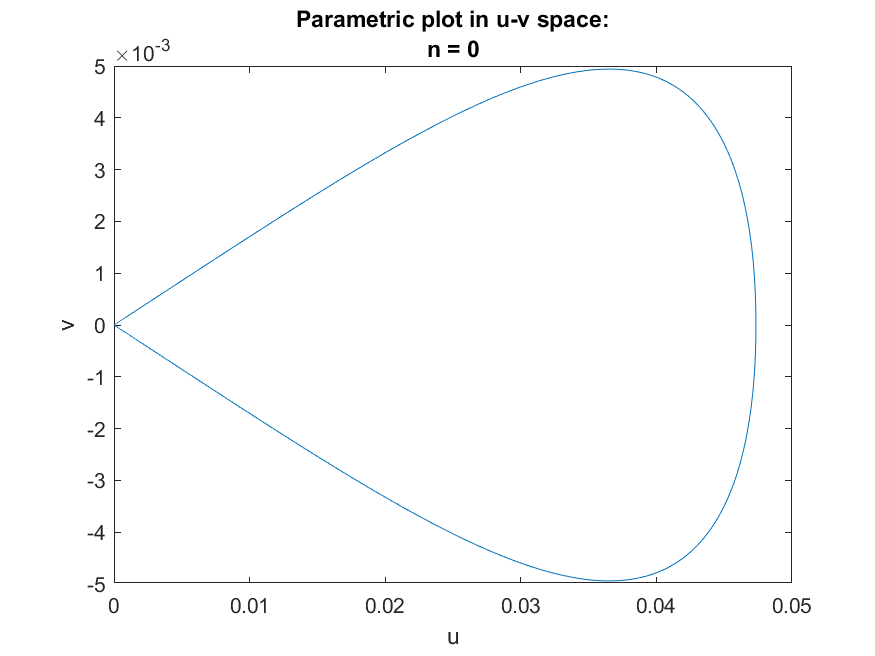}}\end{subfigure}
    \caption{Plots for $\gamma = 0.5$, $n=0$}
    \label{fig:gamma0.5plots}
\end{figure}

The plot on the lower right-hand corner of Figure \ref{fig:gamma0.5plots} shows us that the electron is most likely to be where the nucleus is, and in fact there is a non-zero probability of it being almost on top of the nucleus, in stark contrast to the three-dimensional case. This is to be expected, since unlike the Coulomb potential, our electrostatic potential $\phi$ does not diverge at $s=0$. Another interesting feature of this plot is that it is for the winding number $n=0$ case, and the probability density function has one crest. As we will see in the high $\gamma$ plots, a pattern emerges where the number of crests in the graph of the probability density function is equal to $n+1$, where $n$ is the winding number of the saddles connector in question.

We now look at the case of $\gamma = 7.5$. For this value, the saddles connector with $n=0$ does not exist, and thus the ground state has winding number $n=1$.
\begin{figure}[H]
    \begin{subfigure}{0.45\textwidth}{\includegraphics[width=0.8\textwidth]{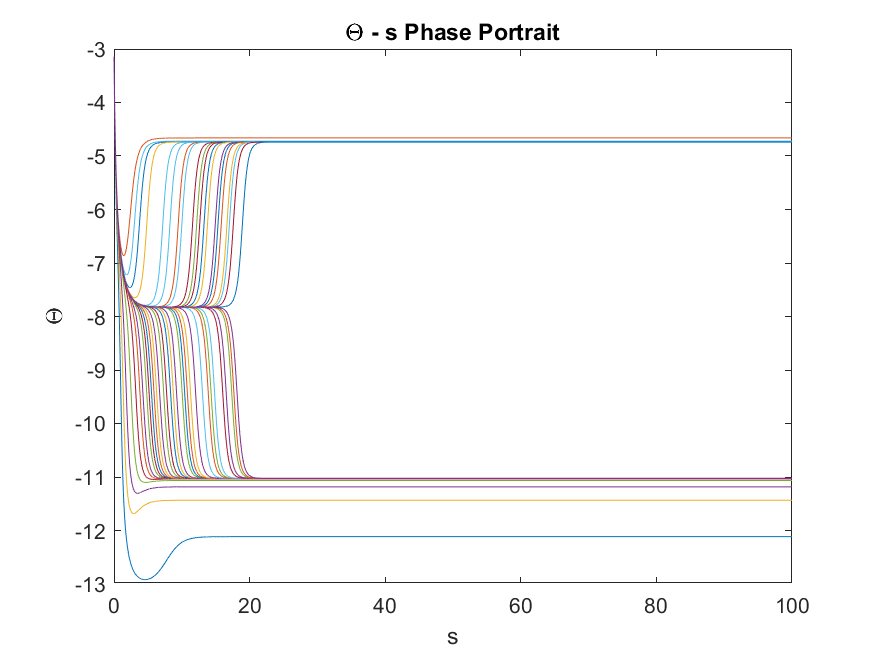}}\end{subfigure}\hfill
     \begin{subfigure}{0.45\textwidth}{\includegraphics[width=0.8\textwidth]{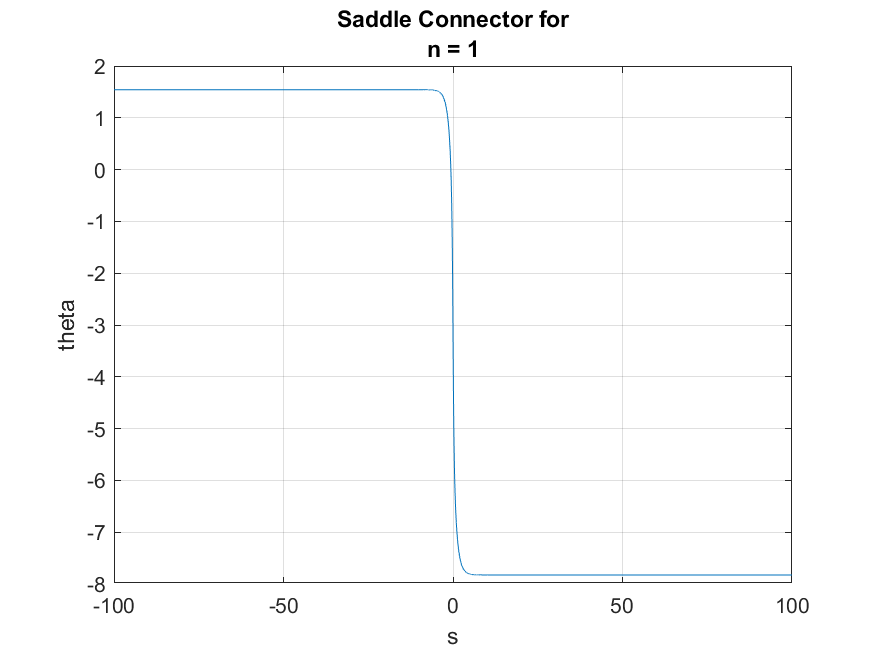}}\end{subfigure}\\
     \begin{subfigure}{0.45\textwidth}{\includegraphics[width=0.8\textwidth]{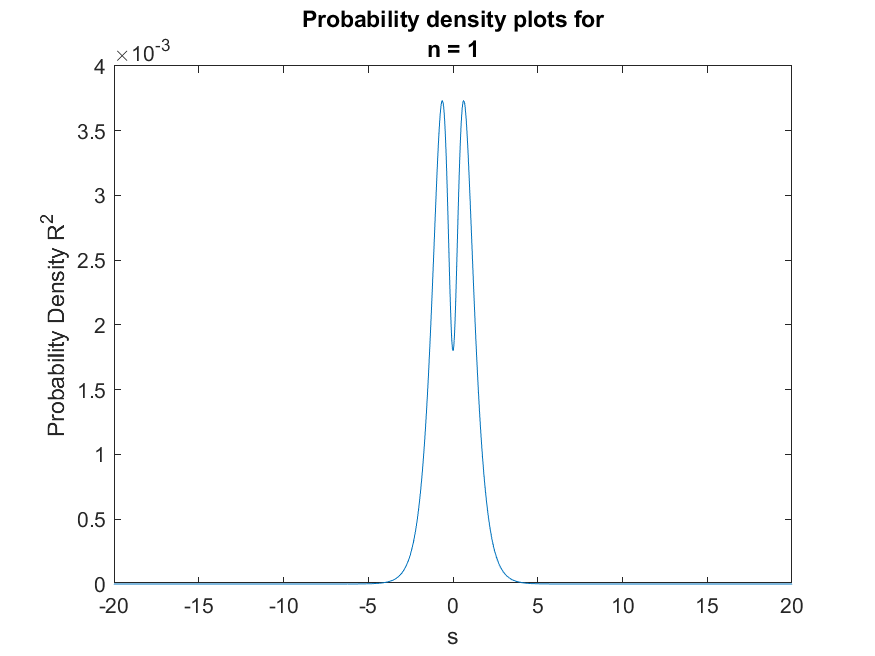}}\end{subfigure}\hfill
     \begin{subfigure}{0.45\textwidth}{\includegraphics[width=0.8\textwidth]{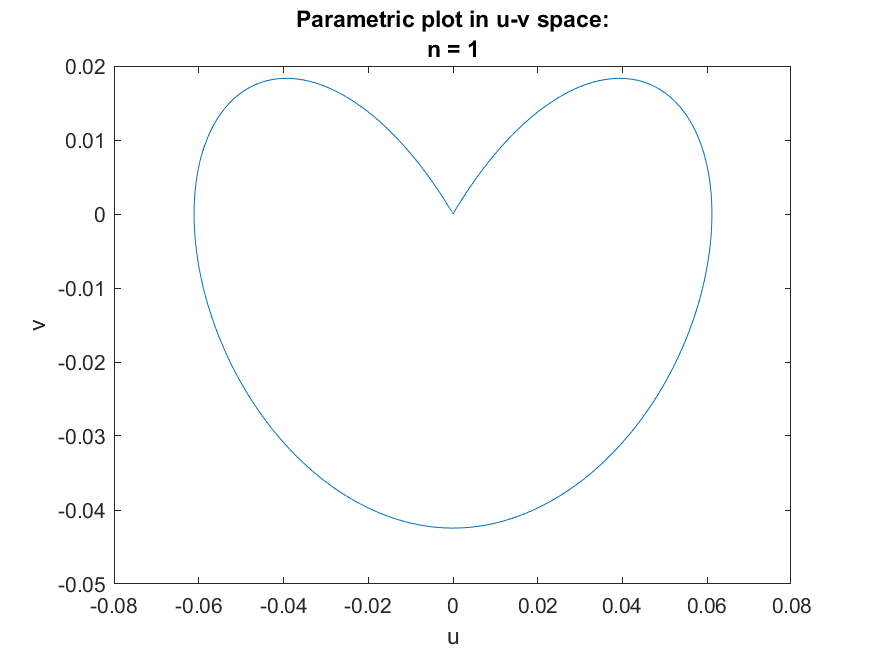}}\end{subfigure}
    \caption{Plots for $\gamma = 7.5$, $n=1$}
    \label{fig:gamma7.5n1plots}
\end{figure}
We can also repeat this analysis for two excited states ($n=2$ and $n=3$). 
\begin{figure}[H]
\begin{subfigure}{0.45\textwidth}{\includegraphics[width=0.8\textwidth]{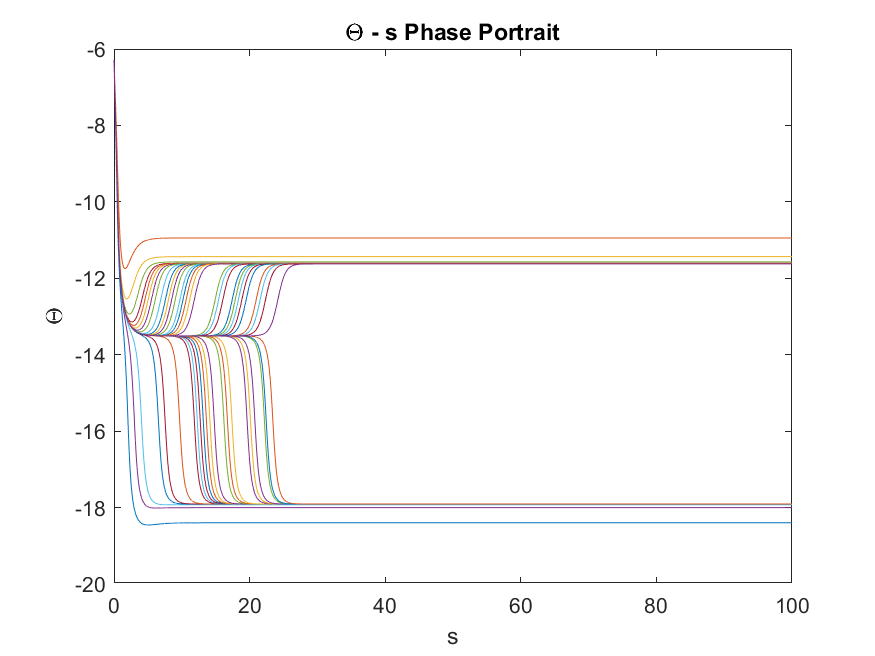}}\end{subfigure}\hfill
\begin{subfigure}{0.45\textwidth}{\includegraphics[width=0.8\textwidth]{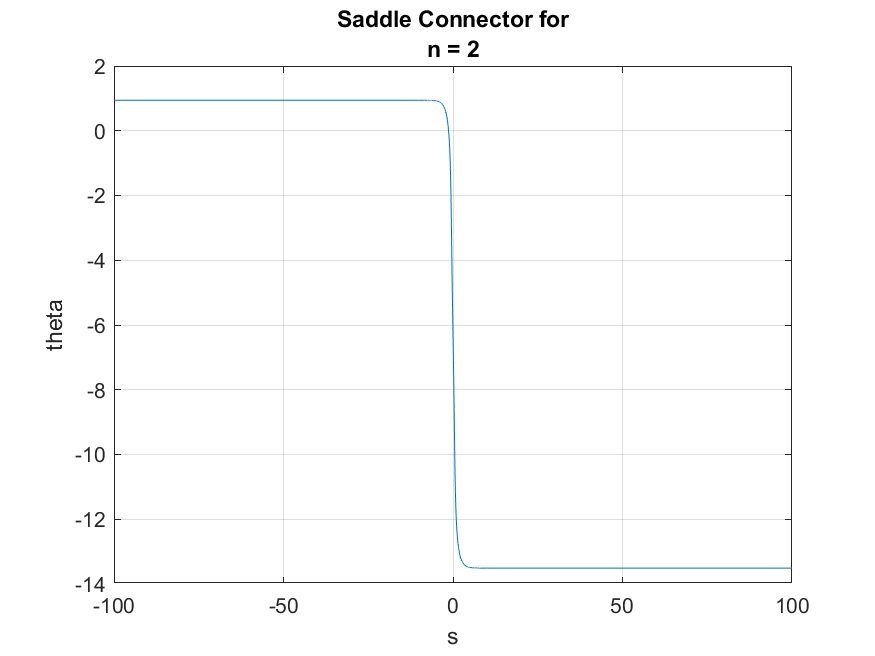}}\end{subfigure}\\
 \begin{subfigure}{0.45\textwidth}{\includegraphics[width=0.8\textwidth]{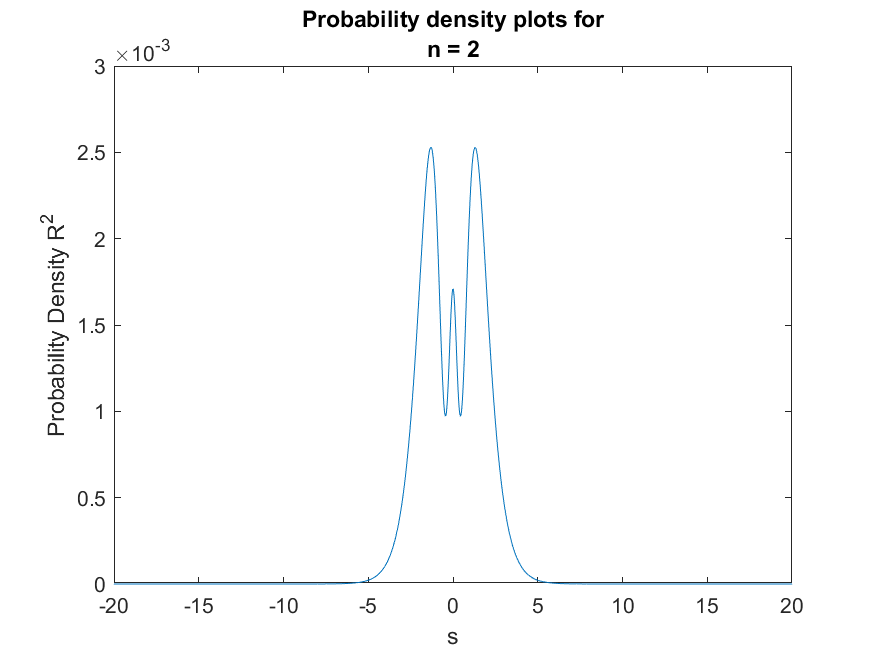}}\end{subfigure}\hfill
\begin{subfigure}{0.45\textwidth}{\includegraphics[width=0.8\textwidth]{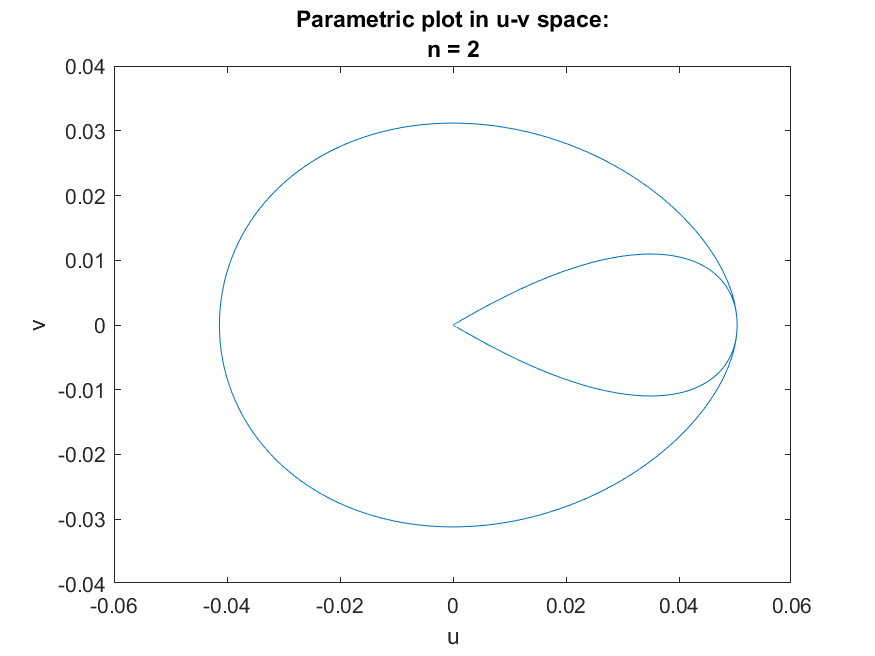}}\end{subfigure}
    \caption{Plots for $\gamma = 7.5$, $n=2$}
    \label{fig:gamma7.5n2plots}
\end{figure}
\begin{figure}[H]
\begin{subfigure}{0.45\textwidth}{\includegraphics[width=0.8\textwidth]{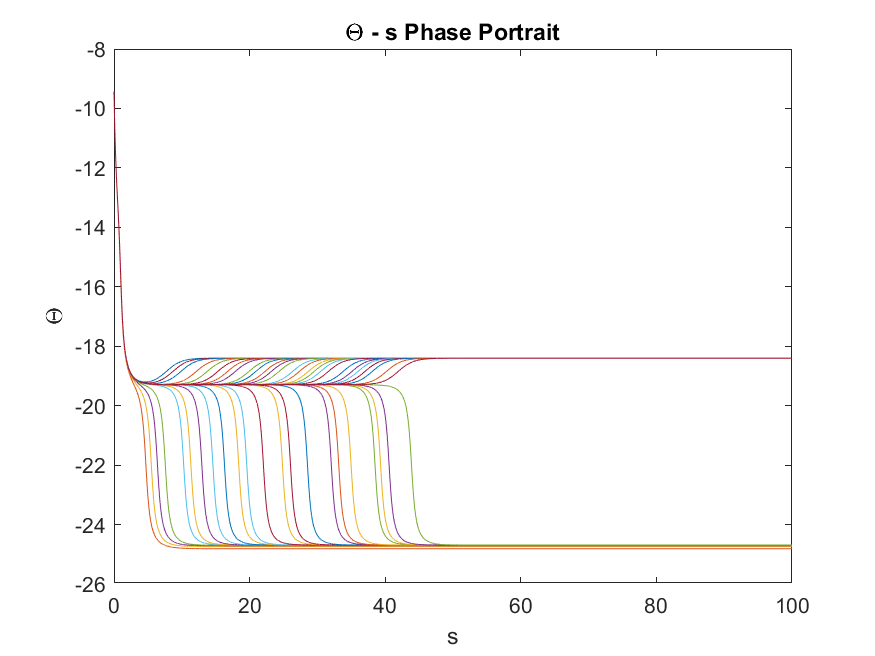}}\end{subfigure}\hfill
\begin{subfigure}{0.45\textwidth}{\includegraphics[width=0.8\textwidth]{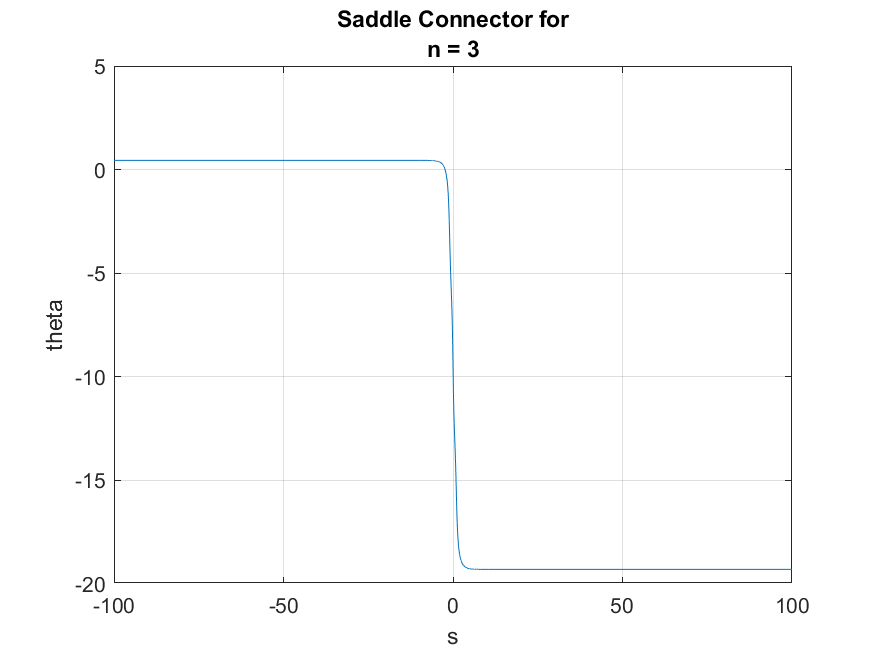}}\end{subfigure}\\
\begin{subfigure}{0.45\textwidth}{\includegraphics[width=0.8\textwidth]{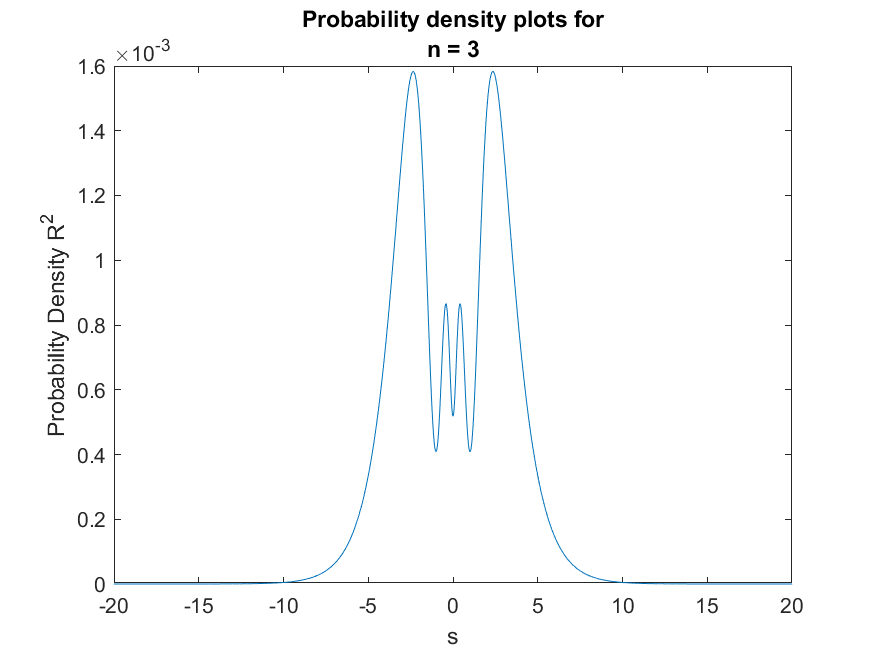}}\end{subfigure}\hfill
\begin{subfigure}{0.45\textwidth}{\includegraphics[width=0.8\textwidth]{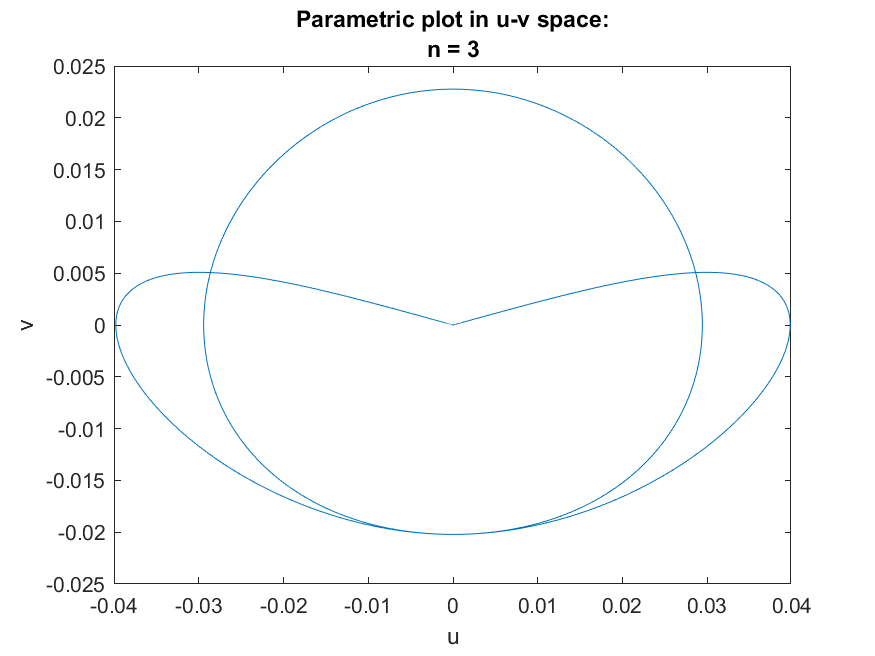}}\end{subfigure}
    \caption{Plots for $\gamma = 7.5$, $n=3$}
    \label{fig:gamma7.5n3plots}
\end{figure}
Interestingly, the graph of $\Theta$ looks very similar for each winding number (as well as in the case $\gamma = 0.5$), except they have different starting and ending points. The probability density plots demonstrate that the electron prefers to stay close to the origin, but there are multiple local maxima (regions where the electron has a higher probability of being located). This is the same conclusion as in the low $\gamma$ case, and these probability density plots have the property that their number of crests equals $n+1$.
\subsection{Zeros and critical points of $M_{\pm i,1/2}$}\label{CWFs}
Using a method discovered by Ikebe \cite{Ike75}, it is possible to efficiently calculate numerical approximations to a large number of zeros and critical points of the Whittaker functions $M_{-i,1/2}$ and $M_{i,1/2}$, or equivalently, for their  Coulomb wave function counterparts.  Here we briefly describe Ikebe's method and how we implemented it in Matlab \cite{MATLAB} in order to compute the constants $\gamma_j$ and $\Gamma_j$ in this paper.

We first recall that the {\em regular Coulomb wave function} $F_{L}(\eta,\rho)$ of order $L = 0,1,2,\dots$ with a real parameter $\eta\in\mathbb{R}$ is by definition the only non-trivial solution of the {\em Coulomb wave equation}
\begin{equation}
    \frac{d^2 w}{d \rho^2} + \left[ 1 - \frac{2 \eta}{\rho} - \frac{L(L+1)}{\rho^2}\right] w = 0,\qquad \rho>0
\end{equation}
that does not have a singularity at $\rho = 0$.  It is well-known (see \cite{NIST:DLMF}, Chap. 33) that
\begin{equation}
    F_{L}\left(\eta,\rho\right)=C_{L,\eta}(- i)^{L+1}M_{i\eta,L+\frac{1}{2}}\left( 2i\rho\right),
\end{equation}
where $M_{\kappa,\mu}(z)$ is the Whittaker $M$-function, and $C_{L,\eta}$ is a normalization constant. Thus in particular, the two Whittaker $M$-functions in this paper correspond to $F_0(\pm 1,2ir)$.  

In \cite{Ike75} Ikebe proved that there exists a symmetric tridiagonal $N\times N$ matrix $T_{L,\eta}$ with the property that $\rho \ne 0$ is a zero of $F_{L}(\eta,\cdot)$ if and only if $\frac{1}{\rho}$ is an eigenvalue of $T_{L,\eta}$, for large enough $N$.  He furthermore showed that $-T_{L,\eta}$ is similar to $T_{L,-\eta}$, which implies that the positive zeros of $F_L(-\eta,\cdot)$ coincide with the absolute values of the negative zeros of $F_L(\eta,\cdot)$.  As a result, by solving an eigenvalue problem for $T_{L,\eta}$ one can simultaneously compute the zeros of $F_L(\eta,\cdot)$ and $F_L(-\eta,\cdot)$. Implementing this eigenvalue problem in Matlab therefore allows us to compute $\gamma_{2j}$ and $\Gamma_{2j}$ simultaneously, for all integers $j\geq 1$. These numerical values were then used to produce the plots in Figure~\ref{fig:staircase}.

Finally, in the same paper Ikebe also showed that the critical points of $F_L(\eta,\cdot)$ can be computed in this way as well, using a slightly different symmetric tridiagonal matrix $\widetilde{T}_{L,\eta}$.  This observation allows us to compute $\gamma_{2j+1}$ and $\Gamma_{2j+1}$ for all integers $j\geq 0$.
\section{Summary and Outlook}\label{sec:sumout}
In this paper we analyzed the spectrum of the Dirac hamiltonian for a single electron in the electrostatic potential of a point nucleus in one spatial dimension and in the Born-Oppenheimer approximation where we fixed the nucleus at the origin.
In order for the discrete spectrum to be non-empty, we had to screen the electrostatic potential so that it had exponential decay at spatial infinity.  We showed that the resulting hamiltonian is essentially self-adjoint and its essential spectrum is the same as the essential spectrum of the Dirac operator in three space dimensions.

To analyze the coupled system of linear ODEs that arise in the study of the discrete spectrum of the hamiltonian, we used a Pr\"ufer transform to recast the equations as a dynamical system on the surface of a finite cylinder. We linearized the system, found the equilibrium points to be exclusively on the two circular boundaries of the cylinder, and determined the local flow near these equilibrium points using center manifold theory, showing that any heteroclinic orbit connecting the two saddle-node points corresponds to a bound state for the electron. We showed that these orbits have a well-defined winding number, and we proved that for any given atomic number for the nucleus, there are only finitely many bound states.

One direction we plan to take to continue this investigation is to find exact formulas or at least good upper and lower estimates for the energy eigenvalues in terms of the other relevant non-dimensional parameters in the problem (nuclear charge, winding number, screening length, etc.)

Another direction we intend to pursue is to couple this electron + nucleus system to a photon, and study the emission/absorption problem in this one-dimensional context.  Do the ground and excited states we found here behave as expected when interacting with a (one-dimensional analog) of a photon field?

A third future task is to incorporate relativistic gravitational effects into this hamiltonian. The above analysis can then be repeated to find the corresponding energy eigenvalues and eigenfunctions in that case.  The difference between the energy eigenvalues in presence of gravity with those in the absence of gravity may have an interpretation as the energy of the one-dimensional analog of gravitons. 


\section{Appendix}
In this appendix we gather certain facts about Whittaker functions and the solutions to the $|E|=1$ boundary value problems that arise in the barrier constructions of \S~\ref{barriers}.
\subsection{Behavior of Whittaker Functions}
Now, consider the series expansion for the Whittaker $M$ function
\begin{equation}
    M_{k,m}(z) = z^{\frac{1}{2}+ m}e^{-\frac{1}{2}z}\sum_{s=0}^\infty \frac{(\frac{1}{2}+m-k)_{s}}{(1+2m)_{s}s!}z^s,
\end{equation}
which is well-defined so long as $2m$ is not a negative integer. In our case, $m=\frac{1}{2}$ and 
$k=-i$. So, using these parameters we get
\begin{equation}
    M_{-i,\frac{1}{2}}(ir) = ire^{-\frac{1}{2}ir}\sum_{s=0}^\infty \frac{(1+i)_{s}}{(2)_{s}s!}(ir)^s = ire^{-\frac{1}{2}ir}(1 + \frac{1+i}{2}ir +O(r^2)),
\end{equation}
and therefore $M_{-i,\frac{1}{2}}(i\cdot0)=0$.

We also compute the derivative
\begin{equation}
    \frac{d}{dr}M_{-i,\frac{1}{2}}(ir) = ire^{-\frac{1}{2}ir}(\frac{1+i}{2}i +O(r))+ (ie^{-\frac{1}{2}ir}+\frac{1}{2}re^{-\frac{1}{2}ir})(1 + \frac{1+i}{2}r +O(r^2)),
\end{equation}
and therefore $\frac{d}{dr}M_{-i,\frac{1}{2}}(i\cdot0)=i$.
It is known that the Whittaker $M$ function will have a zero between consecutive critical points, and similarly, there is a critical point between consecutive zeroes. More exactly, the zeroes and critical points are interlaced and there exist infinitely many of them. The Whittaker $W$ function also has interlaced zeros and critical points and infinitely many of them, which is a known result \cite{u1978ordinary}.

Since $-i\cdot M_{-i,\frac{1}{2}}(i\cdot0)=0$ and $-i\cdot\frac{d}{dr} M_{-i,\frac{1}{2}}(i\cdot0)=1$, the function $-i\cdot M_{-i,\frac{1}{2}}(ir)$ starts at $r=0$ which is its first zero, defined as $\gamma_{0}$, and increases until its first critical point, defined as $\gamma_{1}$. Then,  between $\gamma_1$ (a critical point) and $\gamma_2$ (a zero), $-i\cdot M_{-i,\frac{1}{2}}(z)$ must be decreasing, since $-i\cdot M_{-i,\frac{1}{2}}(\gamma_1)>0$. The sign of the derivative does not change until the next critical point $\gamma_3$ and so the function keeps decreasing until $\gamma_3$. Then, it increases between $\gamma_3$ and $\gamma_5$ (where $\gamma_5$ is the next critical point), with $\gamma_4$ being the zero in between. It must increase, otherwise there would be no zero interlaced between the two critical points. And so, this sine-like behavior continues to repeat indefinitely, since the Whittaker $M$ function is known to have infinitely many zeroes and critical points.  

So, the general pattern is as follows: 
Let $n$ be a non-negative integer

If $r\in(\gamma_j,\gamma_{j+2})$ for $j=4n$, then $-iM_{-i,\frac{1}{2}}(ir)>0$ on that domain. If $j=4n+2$, then $-iM_{-i,\frac{1}{2}}(ir)<0$ on that domain. 

If $r\in(\gamma_j,\gamma_{j+2})$ for $j=4n+3$, then $-i\cdot\frac{d}{dr}M_{-i,\frac{1}{2}}(ir)>0$ on that domain. If $j=4n+2$, then $-i\cdot\frac{d}{dr} M_{-i,\frac{1}{2}}(ir)<0$ on that domain. Also, as mentioned before, $-i\cdot\frac{d}{dr} M_{-i,\frac{1}{2}}(ir)>0$ for $r\in(\gamma_0,\gamma_1)$ and $-i\cdot\frac{d}{dr} M_{-i,\frac{1}{2}}(ir)<0$ for $r\in(\gamma_1,\gamma_3)$.
 
 

\subsection{Zeros and critical points of Whittaker functions}
Recall the definitions \eqref{def:gammaj} and \eqref{def:Gammaj} of the sequence $\gamma_j$ and $\Gamma_j$ respectively. Here we prove the claim that $\gamma_j < \Gamma_j$. We will actually show something slightly stronger, namely,
\begin{lemma}
There exists a constant $r_0>0$ such that    $\gamma_j + r_0 < \Gamma_j$ for $j\geq 1$.
\end{lemma}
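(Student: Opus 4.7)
The plan is a Sturm--Prüfer comparison based on the observation that $w_1(r) := -i M_{-i,1/2}(ir)$ and $w_2(r) := -i M_{i,1/2}(ir)$ are real-valued solutions of $w_i'' + p_i(r) w_i = 0$ with the same initial data $w_i(0) = 0,\ w_i'(0) = 1$, where $p_1(r) = \tfrac{1}{4}+\tfrac{1}{r}$ and $p_2(r) = \tfrac{1}{4}-\tfrac{1}{r}$, and crucially $p_1 - p_2 = 2/r > 0$ on $(0,\infty)$. Introducing Prüfer angles $\theta_i$ via $w_i = \rho_i\sin\theta_i$, $w_i' = \rho_i\cos\theta_i$ with $\theta_i(0) = 0$, one has $\theta_i' = \cos^2\theta_i + p_i(r)\sin^2\theta_i$, and checking the sign of $\theta_i'$ at the relevant multiples of $\pi/2$ (using $p_2 > 0$ at every critical point of $w_2$, which is forced since the first critical point of $w_2$ must lie past the turning point $r = 4$) shows that both $\theta_i$ are strictly increasing on $[0,\infty)$. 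Under this setup $\theta_i$ first attains the value $j\pi/2$ precisely at $\gamma_j$ (for $i=1$) or $\Gamma_j$ (for $i=2$), so the lemma reduces to a comparison of how fast $\theta_1$ and $\theta_2$ rise.

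The strict inequality $\theta_1(r) > \theta_2(r)$ on $(0,\infty)$ follows by contradiction. A local expansion using $\theta_i(r) = r + \int_0^r (p_i-1)\sin^2\theta_i\,ds$ yields $\theta_i(r) = r \pm r^2/2 + O(r^3)$, with $+$ for $i=1$ and $-$ for $i=2$, so $\theta_1 > \theta_2$ just past the origin. If the two angles first coincided at some $r^* > 0$ with common value $\theta^*$, then $(\theta_1 - \theta_2)'(r^*)\leq 0$, yet
\begin{equation*}
(\theta_1-\theta_2)'(r^*) = (p_1(r^*) - p_2(r^*))\sin^2\theta^* = \tfrac{2}{r^*}\sin^2\theta^* \geq 0,
\end{equation*}
forcing $\theta^* \in \pi\mathbb{Z}$. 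This degenerate case (a simultaneous zero of $w_1$ and $w_2$) is ruled out by pushing the Taylor expansion of $w_1 - w_2$ around $r^*$ to third order, using $w_i'' = -p_i w_i$ together with $p_1 \neq p_2$ to produce a contradiction with the ordering just before $r^*$. Hence $\gamma_j < \Gamma_j$ strictly for every $j\geq 1$.

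To promote strict positivity to a uniform separation I would invoke the large-$r$ WKB/Liouville--Green asymptotics (equivalently, the standard asymptotics of the Coulomb wave functions $F_0(\mp 1,\cdot)$ from \cite{NIST:DLMF}, \S 33), which yield
\begin{equation*}
w_i(r)\sim A_i\sin\!\left(\tfrac{r}{2} + c_i\log r + \phi_i\right),\qquad c_1 = +1,\ c_2 = -1,
\end{equation*}
for suitable constants $A_i,\phi_i$. Consequently $\gamma_j = j\pi - 2\log j + O(1)$ and $\Gamma_j = j\pi + 2\log j + O(1)$, so $\Gamma_j - \gamma_j = 4\log j + O(1)\to\infty$. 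Combined with the strict positivity at every finite $j$ from the previous step, this gives $\inf_{j\geq 1}(\Gamma_j - \gamma_j) > 0$, and one may take $r_0$ to be, say, half of this infimum. The main obstacle is the degenerate case $\theta^*\in\pi\mathbb{Z}$ in the strict comparison and extracting a sufficiently precise WKB phase; neither is deep, but both require some care to execute rigorously.
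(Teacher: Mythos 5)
Your argument is correct in outline, but it reaches the lemma by a genuinely different route than the paper. The paper also runs a Sturmian comparison, but it builds the constant $r_0$ in from the start: it compares the $M_{i,1/2}$ equation against the $M_{-i,1/2}$ equation \emph{pre-shifted} by $r_0$, applies Sturm--Picone on $[r_0+1,N]$ (and, for the critical points, a second Sturm--Picone argument to the self-adjoint equations satisfied by the derivatives on $[r_0+5,N]$), and then runs an induction on the index, with the base cases checked numerically; this yields an explicit $r_0=2.31$ without any large-$r$ asymptotics, at the cost of treating zeros and critical points separately and juggling the interval restrictions forced by the singularities of the derivative equations. You instead prove the unshifted strict inequality $\gamma_j<\Gamma_j$ for all $j$ in one stroke via a Pr\"ufer-angle comparison --- which is tidier, since the angle $\theta_i$ tracks zeros and critical points simultaneously through the multiples of $\pi/2$ --- and then upgrade to a uniform gap using the Coulomb-wave asymptotics $F_0(\mp1,\rho)\sim\sin(\rho\pm\ln 2\rho+\sigma_0)$, which in fact gives the stronger statement $\Gamma_j-\gamma_j=4\log j+O(1)\to\infty$. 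The two points you flag as needing care are indeed the right ones, and both close: the monotonicity of $\theta_2$ is safe because $w_2''=(\tfrac1r-\tfrac14)w_2$ forces $w_2'\geq 1$ on $(0,4)$, so $\theta_2$ cannot reach $\pi/2$ before the turning point; and the degenerate touching case $\theta^*\in\pi\mathbb{Z}$ is perhaps more cleanly excluded by a Gronwall estimate on $\psi=\theta_1-\theta_2$ (writing $\psi'=(p_1-1)\sin(\theta_1+\theta_2)\sin\psi+(p_1-p_2)\sin^2\theta_2\geq -C\psi$ near $r^*$, so $\psi e^{Cr}$ is nondecreasing and cannot vanish at $r^*$) than by Taylor-expanding $w_1-w_2$, since the ordering you need to contradict is that of the angles rather than of the solutions, whose amplitudes $\rho_1,\rho_2$ differ. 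With that substitution your proof is complete; what it buys is a conceptually cleaner argument and the growth of the gap, while the paper's buys self-containedness and an explicit constant.
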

\begin{proof}
    We first recall the Sturm-Picone Theorem:  Consider the ODEs
    \begin{eqnarray}
        (p_1(r) y_1')' + q_1(r) y_1 & = & 0,\\
        (p_2(r) y_2')' + q_2(r) y_2 & = & 0.
    \end{eqnarray}
Assume the functions $p_1, p_2, q_1, q_2$ are continuous on an interval $[a,b]$ and that they satisfy
$$ 0 < p_2 \leq p_1,\qquad q_1 \leq q_2.$$
Suppose there are $z_1,z_2 \in [a,b]$, $z_1<z_2$, such that $y_1(z_1) = y_1(z_2) = 0$, and suppose that $y_1$ and $y_2$ are linearly independent.  Then there exists $x \in (z_1,z_2)$ where $y_2(x) = 0$.

We will be apply this theorem to the following two ODEs
\begin{eqnarray}
y_1'' + (\frac{1}{4} - \frac{1}{r}) y_1 = 0,\label{eq:whit1}\\
y_2'' + (\frac{1}{4} + \frac{1}{r-r_0}) y_2 = 0,\label{eq:whit2}
    \end{eqnarray}
for some $r_0>0$ to be determined later. Thus we have
$$
p_1 = p_2 = 1, \qquad q_1 = \frac{1}{4} - \frac{1}{r},\qquad q_2 = \frac{1}{4} + \frac{1}{r-r_0}.
$$
Let $N>0$ be a fixed large number.  On the interval $[r_0+1,N]$ all the hypotheses of the Sturm-Picone theorem are satisfied.  Moreover, \eqref{eq:whit1} is the Whittaker equation for $\kappa=i, \mu= \frac{1}{2}$, so we know that the zeros of $y_1$ are the sequence $\{\Gamma_{2j}\}_{j\in\mathbb{N}}$.  It thus follows that for all integers $j\geq 2$ there exists $x \in (\Gamma_{2j-2},\Gamma_{2j})$ such that $y_2(x) = 0$. Let $x_j$ denote the largest such $x$ in that interval.   Now, \eqref{eq:whit2} is just the Whittaker equation for $\kappa=-i, \mu = \frac{1}{2}$ with the independent variable shifted by the amount $r_0$, thus we have $y_2(\gamma_{2k}+r_0) = 0$ so that we must have $x_j = \gamma_{2k} + r_0$ for some $k$.  We need to show that $k\geq j$. We proceed by induction on $j$:  For $j=2$ we know $\Gamma_2 \approx 11.6282$ and $\Gamma_4 \approx 18.9491$, while $\gamma_4 \approx 7.6437$. Thus for any $r_0 \in [4,11]$ we have $\Gamma_2 < \gamma_4+r_0 < \Gamma_4$, therefore $k=2=j$.  

Now assume the induction step, i.e. $\Gamma_{2j-2} < x_j = \gamma_{2k}+r_0 < \Gamma_{2j}$ with $k\geq j$, and $x_j$ is the largest zero of $y_2$ in the interval $(\Gamma_{2j-2},\Gamma_{2j})$.  This means that the next zero of $y_2$ must be greater than $\Gamma_{2j}$, i.e. $\gamma_{2k+2} +r_0 > \Gamma_{2j}$.  Let $k'$ be defined by $x_{j+1} = \gamma_{2k'}+r_0$.  Since $x_{j+1}$ is the largest zero of $y_2$ in $(\Gamma_{2j},\Gamma_{2j+2})$, we must have $x_{j+1} \geq \gamma_{2k+2} + r_0$. Hence $k'\geq k+1 \geq j+1$ and the induction argument proceeds, showing that $\gamma_{2j}+ r_0 < \Gamma_{2j}$ for all $j\geq 2$ such that $\Gamma_{2j} \leq N$.  Since $N$ was arbitrary, this established the claim for all $j\geq 2$. The remaining case, $j=1$ is evident from the numerical approximations $\gamma_2 \approx 2.9347$ and $\Gamma_2 \approx 11.6282$.  We therefore also have $\gamma_2 + r_0 < \Gamma_2$ for $r_0 \leq  8.69$. 

This takes care of the statement about the zeros.  To prove the corresponding statement for the critical points, we apply the Sturm-Picone theorem again, this time to the equations satisfied by the {\em derivatives} of functions $y_1,y_2$ that solve \eqref{eq:whit1} and \eqref{eq:whit2}.  The equations satisfied by the derivatives are
\begin{eqnarray}
    \left( \frac{r}{r+4} \eta_1' \right)' + \frac{1}{4}\eta_1 & = & 0, \label{eq:whitder1}\\
    \left( \frac{r-r_0}{r-r_0-4} \eta_2' \right)' + \frac{1}{4}\eta_2 & = & 0,\label{eq:whitder2}
\end{eqnarray}
where we have once again shifted the independent variable in the second equation by an amount to be determined, $r_0>0.$ We thus have
$$ p_1 = \frac{r}{r+4},\qquad p_2 = \frac{r-r_0}{r-r_0-4},\qquad q_1 = q_2 = \frac{1}{4}.$$
On the interval $[r_0+5,N]$, with $N$ large as before, we have that the hypotheses of Sturm-Picone are once again satisfied. Since \eqref{eq:whitder1} is the equation satisfied by the first derivative of the Whittaker function $M_{i,1/2}$, we know 
$\eta_1(\Gamma_{2j-1}) = \eta_1(\Gamma_{2j+1}) = 0$ for all $j\geq 1$, and therefore there exists $\xi_j \in (\Gamma_{2j-1},\Gamma_{2j+1})$ such that $\eta_2(\xi_j) = 0$. We again let $\xi_j$ be the largest number with these properties.  Now, because \eqref{eq:whitder2} is the equation satisfied by the first derivative of $M_{-i,1/2}$ (shifted by $r_0$), we have that $\xi_j = \gamma_{2k+1} + r_0$ for some $k$.  Once again we can use induction to prove that $k \geq j$, and by letting $N \to \infty$ this establishes the claim $\gamma_{2j+1} + r_0 < \Gamma_{2j+1}$ for all $j\geq 1$.  The remaining case $j=0$ can be checked by hand: $\gamma_1 \approx 1.2309$, $\Gamma_1 \approx 7.3148$, so that $\gamma_1 + r_0 < \Gamma_1$ for $r_0 \leq 6.08$. Finally, we have made the tacit assumption that $\Gamma_1 > r_0 +5$, which requires $r_0 \leq 2.31$.  This complete proof of the statement, and we can take $r_0 = 2.31$.
\end{proof}

\subsection{Zeros and critical points of the solutions to the $|E|=1$ boundary value problems} \label{CritProof}

\begin{proposition}\label{prop:crits}
    The solutions of the boundary value problems \eqref{eq:bvpodd} and \eqref{eq:bvpeven} have one and only one critical point in the interval $(\gamma_0,\gamma_1)$.
\end{proposition}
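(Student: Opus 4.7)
My approach is to reduce both boundary value problems to the problem of inverting a single strictly monotonic function. Take real-valued fundamental solutions $M, R$ of the ODE $\ddot w + (\frac{1}{4} + \frac{1}{r}) w = 0$, normalized by $M(0) = 0$, $\dot M(0) = 1$, $R(0) = 1$. Concretely one may set $M(r) = -iM_{-i,1/2}(ir)$ (which is real by direct series computation) and $R(r) = \Re[\Gamma(1+i) W_{-i,1/2}(ir)]$. Every real solution with $w(0) = 1$ is of the form $w = AM + R$, with $A$ fixed by the remaining boundary condition at $r = \gamma$: $A = -\dot R(\gamma)/\dot M(\gamma)$ for \eqref{eq:bvpodd} and $A = -R(\gamma)/M(\gamma)$ for \eqref{eq:bvpeven}. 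Critical points of $w$ in $(\gamma_0, \gamma_1)$ coincide with solutions of $f(r) = -A$, where $f(r) := \dot R(r)/\dot M(r)$.

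The heart of the argument is that $f\colon (\gamma_0, \gamma_1) \to \mathbb{R}$ is a continuous, strictly decreasing bijection. Strict monotonicity follows from the Wronskian: differentiating the identity $M \dot R - \dot M R \equiv -1$ (the value $-1$ is read off at $r = 0$) and using the ODE yields $\dot f = -Q(r)/\dot M(r)^2 < 0$ on $(\gamma_0, \gamma_1)$, where $Q(r) = \frac{1}{4} + \frac{1}{r}$. For the boundary values: as $r \to \gamma_1^-$, $\dot M(r) \to 0^+$ and the Wronskian identity at $r = \gamma_1$ forces $\dot R(\gamma_1) = -1/M(\gamma_1) < 0$, whence $f(\gamma_1^-) = -\infty$. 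As $r \to 0^+$, the asymptotic expansion $R(r) = 1 - r \ln r + O(r)$, coming from the $z \ln z$ term in the expansion of $W_{-i,1/2}(z)$ at the origin combined with the identity $\Gamma(1+i) = i\Gamma(i)$, gives $\dot R(r) \sim -\ln r \to +\infty$, hence $f(0^+) = +\infty$. Thus $f$ is a bijection from $(\gamma_0, \gamma_1)$ onto $\mathbb{R}$, and any equation $f(r) = c$ has a unique solution in that interval.

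Applying this to the two BVPs: for \eqref{eq:bvpodd}, $-A = f(\gamma)$, so the unique critical point is $r = \gamma \in (\gamma_0, \gamma_1)$. For \eqref{eq:bvpeven}, the unique critical point $r^*$ satisfies $f(r^*) = R(\gamma)/M(\gamma)$; the identity $f(\gamma) - R(\gamma)/M(\gamma) = -1/(M(\gamma)\dot M(\gamma)) < 0$ combined with the strict monotonic decrease of $f$ yields $r^* < \gamma$, so $r^* \in (0, \gamma) \subset (\gamma_0, \gamma_1)$. In both cases the critical point exists, lies in $(\gamma_0, \gamma_1)$, and is unique by the bijectivity of $f$.

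The main technical obstacle is the establishment of the limit $f(0^+) = +\infty$, which depends on getting the sign of the leading logarithmic coefficient of $R(r)$ near $r = 0$ right. This sign is dictated by the standard logarithmic singularity of the Whittaker $W$-function when $2\mu$ is a positive integer; pinning it down (in particular showing the coefficient of $r \ln r$ in $R(r)$ is negative) requires a careful unpacking of the Tricomi confluent hypergeometric function $U(1+i, 2; z)$ near its branch point at $z = 0$, but once the sign is in hand the remainder of the proof is routine.
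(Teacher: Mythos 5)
Your proof is correct, and it takes a genuinely different route from the paper's on the uniqueness half. Both arguments rest on the same two endpoint facts --- that the derivative of the solution blows up to $+\infty$ as $r\to 0^+$ (the logarithmic singularity of $W_{-i,1/2}$, which you assert with the correct sign and which the paper verifies by the same expansion of $U(1+i,2,z)$ that you sketch) and that it is negative at $\gamma_1$ --- so the existence part is essentially shared. Where you diverge is uniqueness: the paper argues by contradiction, combining Rolle's theorem, Sturm separation, and Beesack's monotone-quotient theorem applied to $w/M$, whereas you observe that critical points in $(\gamma_0,\gamma_1)$ are exactly the solutions of $f(r)=-A$ with $f=\dot R/\dot M$, and that the Wronskian identity $M\dot R-\dot M R\equiv -1$ forces $\dot f=-\left(\tfrac14+\tfrac1r\right)/\dot M^2<0$, so that $f$ is a strictly decreasing bijection of $(\gamma_0,\gamma_1)$ onto $\mathbb{R}$. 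This buys several things at once: existence and uniqueness in a single stroke; an analytic (rather than the paper's numerical, $W'\approx-1.8$) proof that $\dot w(\gamma_1)<0$, via $\dot R(\gamma_1)=-1/M(\gamma_1)$ from the Wronskian; and a statement that is manifestly uniform in $\gamma$, since for any admissible $\gamma$ the level set $f^{-1}(-A)$ is a single point of $(\gamma_0,\gamma_1)$ --- which is the form in which the proposition is actually invoked later in the critical-point-counting lemma. (Your closing identifications, $r^*=\gamma$ for the odd problem and $r^*<\gamma$ for the even one, of course only apply when $\gamma$ itself lies in $(\gamma_0,\gamma_1)$; for larger $\gamma$ the bijectivity of $f$ alone gives the claim.) The one step you defer --- the sign of the $r\ln r$ coefficient in $\mathrm{Re}\left[\Gamma(1+i)W_{-i,1/2}(ir)\right]$ --- is exactly the computation the paper carries out, and your asserted asymptotic $\dot R(r)\sim-\ln r\to+\infty$ agrees with it, so nothing is missing in substance.
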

\begin{proof}
 We evaluate $\dot{w}(r)$ by first substituting equations 13.14.2	and 13.14.3 of \cite{NIST:DLMF} for for  $M_{-i,1/2}(ir)$ and  $W_{-i,1/2}(ir)$ respectively into \eqref{eq:bvpodd} and \eqref{eq:bvpeven}. Then we differentiate. We evaluate the subsequent $M'_{-i,1/2}(ir)$ and $W'_{-i,1/2}(ir)$ terms separately and show that first is finite while the latter diverges to $\infty$ at $r=0$. 

 The definition of the Whittaker M function as given by \cite{NIST:DLMF} is 
 \begin{equation}
     M_{-i,\frac{1}{2}}(z) = e^{-\frac{z}{2}}z^{(\frac{1}{2}+\mu)}M(\frac{1}{2}+\mu-\kappa,1+2\mu,z).
 \end{equation}
In our case, $\mu = \frac{1}{2}$ and $\kappa = -i$. If we define
\begin{equation}
    f(z) = e^{-\frac{z}{2}}z^{(\frac{1}{2}+\mu)},
\end{equation}
it is clear that $f(0) = 0$ and $f'(0) = i$. So, in order to show $M'_{-i,1/2}(ir)$ is finite at $r=0$, it will suffice to show that $M(1+i,2,z)$ and $M'(1+i,2,z)$ is finite at $z=0$. As per 13.2.2 of \cite{NIST:DLMF}
\begin{equation}
M\left(a,b,z\right)=\sum_{s=0}^{\infty}\frac{{\left(a\right)_{s}}}{{\left(b%
\right)_{s}}s!}z^{s}=1+\frac{a}{b}z+\frac{a(a+1)}{b(b+1)2!}z^{2}+\cdots.
\end{equation}
In our case, $a = 1+i$ and $b=2$ so
\begin{equation}
M'\left(a,b,0\right)= \frac{a}{b} = \frac{1+i}{2}.
\end{equation}
This is finite, and the $M'_{-i,1/2}(i\gamma)$ in our solution to the odd $N$ boundary value problem \eqref{eq:oddsol} will also have a finite coefficient for the first term so long as $\gamma \neq \gamma_j$ for odd $j$ and $j=0$. Similarly, the first term of the even $N$ boundary value problem \eqref{eq:evensol} will be finite so long as $\gamma \neq \gamma_j$ for even $j$. This condition is necessary for the respective boundary value problems. 

The definition of the Whittaker $W$ function given by \cite{NIST:DLMF} is 
\begin{equation}
W_{\kappa,\mu}\left(z\right)=e^{-\frac{1}{2}z}z^{\frac{1}{2}+\mu}U\left(\tfrac%
{1}{2}+\mu-\kappa,1+2\mu,z\right),
\end{equation}
where 
\begin{multline}
    U\left(a,n+1,z\right)=\frac{(-1)^{n+1}}{n!\Gamma\left(a-n\right)}\sum_{k=0}^{%
\infty}\frac{{\left(a\right)_{k}}}{{\left(n+1\right)_{k}}k!}z^{k}\left(\ln z+%
\psi\left(a+k\right)-\psi\left(1+k\right)-\psi\left(n+k+1\right)\right) \\
+\frac{%
1}{\Gamma\left(a\right)}\sum_{k=1}^{n}\frac{(k-1)!{\left(1-a+k\right)_{n-k}}}{%
(n-k)!}z^{-k}
\end{multline}
is the Kummer $U$ function and $\psi$ is the digamma function. 

Since the derivative of the $M'_{-i,1/2}(0)$ term is finite, it will suffice to show that the real part of $\Gamma(1+i)\frac{d}{dr}W_{-i,1/2}{(ir)}$ is infinitely positive at $r=0$. The imaginary parts of the $M'$ and $W'$ terms will cancel since the derivative of our solution must also be real. 

In our case, 
\begin{equation}\label{ddrW}
    \Gamma(1+i)\frac{d}{dr}W_{-i,1/2}(ir)= i\Gamma(1+i)(f(ir)U'{(1+i, 2, ir)} + f'{(ir)}U{(1+i, 2, ir)}).
\end{equation}
By using our formula for $U(1+i,2,ir)$, we determine that the real part limit of this equation as $r$ approaches $0$ is equivalent to 
\begin{equation}
\lim _{r \to 0 }i\frac{\Gamma(1+i)}{\Gamma(i)}\ln{(r)} = -\lim _{r \to 0 } \ln(r) = \infty.
\label{eq:limit}
\end{equation}
Thus, $w'(\gamma_0) >0 $.

In the $E=-1$ case, it is simple to check that when computing the derivative of $u(r)$ at $\Gamma_0$, the limit in equation \ref{eq:limit} would be $\lim _{r \to 0 }i\frac{\Gamma(1-i)}{\Gamma(-i)}\ln{(r)} = \lim _{r \to 0 } \ln(r) = -\infty$ instead.


 On the other hand, $w'(\gamma_1) < 0$. $\gamma_1$ is a root of $M'_{-i,\frac{1}{2}}(ir)$, so this term is 0. The $W'_{-i,\frac{1}{2}}(\gamma_1)$ term is approximately $-1.8$, so the derivative at $\gamma_1$ for our solution is always negative. Then, since $w'(\gamma_0) >0 $ and $w'(\gamma_1) <0 $, there must be a critical point of $w$ between $\gamma_0$ and $\gamma_1$. We claim this will be the only critical point in this interval. Suppose there were two critical points between $(\gamma_0, \gamma_1)$. By Rolle's theorem, there will be a point $r_0$ in between where the second derivative is zero, and by our differential equation, this point is a root of $w(r)$. That requires the sign of $w(r)$ must change at $r_0$, since if it did not, then there would be at least three critical points in the interval or in other words two points at which $w''(r)=0$ by Rolle's theorem or two roots between $(\gamma_0,\gamma_2)$. This violates Sturm's separation theorem which guarantees at most one root in this interval between consecutive roots of $M_{-i,1/2}(ir)$ \cite{u1978ordinary}. 
 So the other possibility is that the sign of $w(r)$ changes. However, by Theorem 1 of \cite{beesack1972sturm}, if $y_2$ and $y_1$ are independent solutions of our differential equation on an open interval $(a,b)$, $\frac{y_2}{y_1}$ will be monotonic on that open interval. $M_{-i,\frac{1}{2}}(ir)$ is one such solution, and it must be monotonic between $\gamma_0$ and $\gamma_1$. Using Theorem 1 of \cite{beesack1972sturm}, this implies that the quotient of linearly independent solutions $w(r)/M_{-i,\frac{1}{2}}(ir)$ must be monotonic. Then, $w(r)$ is monotonic under the condition that $w'(r)$ or $M'_{-i,\frac{1}{2}}(r)$ are not flat zero for on an interval. This condition holds, for if it didn't,  this would require three or more critical points due to Rolle's theorem and this would violate Sturm's separation theorem since there would be two or more roots in the interval. So, the zeros are isolated. This guarantees that both $w(r)$ and $M_{-i,\frac{1}{2}}(r)$ must remain monotonic in order to satisfy their division being monotonic. Hence, $w(r)$ is monotonic on $(\gamma_0,\gamma_1)$. This contradicts the fact that the sign of $w(r)$ would have to change at $r_0$. Hence, our hypothesis that there were two critical points is false.  Thus, if $\gamma \in (\gamma_0,\gamma_1)$ there is exactly one critical point of $w(r)$.

 These results apply to both the odd and even winding number solutions \eqref{eq:oddsol} and \eqref{eq:evensol}.  
\end{proof}

The above is used in the proof of the following:
\begin{lemma}\label{lem:Lem1}
Let $\gamma \in [\gamma_{k-1},\gamma_k) \cap [\Gamma_{j-1},\Gamma_j) $. For the $E=1$ boundary value problems, the general solution $w(r)$ to \eqref{eq:bvpodd}
 will have $\lfloor \frac{k}{2}+1 \rfloor$ critical points in $r\in(0,\gamma]$ and the general solution to \eqref{eq:bvpeven} will have $\lfloor \frac{k}{2}+\frac{1}{2} \rfloor$ critical points.  For the boundary value problems corresponding to $E=-1$, the general solution to \eqref{eq:bvpodd2} will have  $\lfloor \frac{j}{2} \rfloor$ roots and the general solution to \eqref{eq:bvpeven2} will have  $\lfloor \frac{j}{2}+\frac{1}{2} \rfloor$.
 \end{lemma}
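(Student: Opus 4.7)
The plan is to reduce the lemma to a Sturm-type interlacing argument, anchored by Proposition~\ref{prop:crits} and an analogous statement for $E=-1$ which I would state and prove in parallel. Write $M(r) := M_{-i,1/2}(ir)$ for the $E=1$ case and $\tilde M(r) := M_{i,1/2}(ir)$ for $E=-1$; by construction, the positive zeros of $M$ are the $\gamma_{2j}$, its positive critical points are the $\gamma_{2j-1}$, and $\Gamma_j$ plays the analogous role for $\tilde M$. The two structural facts I would establish first are (a) that the zeros of the BVP solution $w$ strictly interlace those of $M$ (resp.\ $\tilde M$), which is immediate from Sturm separation since $w$ and $M$ (resp.\ $\tilde M$) are linearly independent solutions of the same Whittaker ODE (the $W$-coefficient in \eqref{eq:oddsol}, \eqref{eq:evensol}, \eqref{eq:oddsol2}, \eqref{eq:evensol2} being $\Gamma(1\pm i)\ne 0$); and (b) that the critical points of $w$ strictly interlace those of $M$ (resp.\ $\tilde M$). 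Fact (b) follows from the Wronskian identity $\dot w\,M - w\,\dot M\equiv C\ne 0$: at a critical point of $M$ it gives $\dot w = C/M$, and since the preceding subsection shows $M$ has exactly one sign-changing zero between consecutive critical points, $\dot w$ takes opposite signs at two consecutive critical points of $M$ and vanishes between them by IVT; interchanging $w$ and $M$ rules out two such zeros.

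Granted (a) and (b), the $E=1$ cases become bookkeeping. In the odd BVP, $\dot w(\gamma)=0$ makes $\gamma$ itself a critical point of $w$; Proposition~\ref{prop:crits} places the first critical point of $w$ inside $(0,\gamma_1)$, so interlacing with $\gamma_1,\gamma_3,\dots$ identifies $\gamma \in [\gamma_{k-1},\gamma_k)$ as the $(\lfloor k/2\rfloor+1)$-st critical point of $w$, yielding the claimed $\lfloor k/2+1\rfloor$. In the even BVP, $w(\gamma)=0$ makes $\gamma$ a zero, and the same bookkeeping against (a) identifies $\gamma$ as the $\lfloor k/2+1/2\rfloor$-th zero in $(0,\gamma]$; to convert this zero-count to the critical-point count the lemma requests, note that on any interval where $w$ has constant sign, $\ddot w = -(1/4+1/r)w$ forces $|w|$ to be strictly concave and hence to admit at most one interior extremum, while $\dot w(0^+)=+\infty$ (computed inside the proof of Proposition~\ref{prop:crits}) supplies exactly one extremum before the first zero, so total critical points equal total zeros.

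The $E=-1$ cases follow the same template applied to \eqref{eq:newWhit} and to $\tilde M$. The even BVP is handled directly by (a): the boundary condition $w(\gamma)=0$ identifies $\gamma$ as the $\lfloor j/2+1/2\rfloor$-th zero in $(0,\gamma]$. For the odd BVP, (b) together with the $E=-1$ base case places $\gamma$ as the $(\lfloor j/2\rfloor+1)$-st critical point of $w$; a sign analysis using $\ddot w = -(1/4-1/r)w$ at each critical point shows that the sign of $w$ must alternate between consecutive critical points (otherwise one would obtain two extrema of the same type in a region where the sign of $q=1/4-1/r$ is fixed, which is incompatible with $w$ being monotone in between), so exactly one zero of $w$ sits in each of the $\lfloor j/2\rfloor$ intervals between consecutive critical points, with no zero before the first critical point thanks to $\dot w(0^+)=-\infty$.

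The hard part is the $E=-1$ analog of Proposition~\ref{prop:crits}, namely that each $E=-1$ BVP solution has exactly one critical point in $(0,\Gamma_1)$, which is what anchors the whole interlacing. I would mimic the paper's argument: compute the sign of $\dot w$ at both endpoints ($\dot w(0^+)=-\infty$ is already recorded; at $r=\Gamma_1$ the relation $\tilde M'(\Gamma_1)=0$ reduces $\dot w(\Gamma_1)$ to a sign determined by the $\tilde W_{i,1/2}$ term alone, which can be pinned down from its known asymptotics or numerically) to force one critical point by IVT, then rule out additional ones via Sturm separation and the Beesack--Sturm monotonicity of $w/\tilde M$. A secondary nuisance is the sign change of $1/4-1/r$ at $r=4$, which separates a ``convex'' from an ``oscillatory'' regime and must be treated by splitting the alternation-of-signs argument above according to whether a given consecutive pair of critical points of $w$ straddles $r=4$; since the final count is locally constant in $\gamma$, any borderline coincidence can be handled by continuity.
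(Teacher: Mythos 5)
Your counting machinery is a genuinely different, and in places cleaner, route than the paper's. The paper differentiates the Whittaker equation to get a second-order ODE for $\dot w$ (with a first-order term) and runs an induction on $k$, adding one ``Sturm critical point'' each time $\gamma$ crosses a $\gamma_i$ of the right parity; you instead get the interlacing of critical points directly from the constancy of the Wronskian $\dot w\,M - w\,\dot M$, evaluated at consecutive critical points of $M$ where $M$ has opposite signs, and then read off the count in one step by locating $\gamma$ (which the boundary condition forces to be a critical point or a zero of $w$) inside the interlaced sequence. This avoids the induction entirely and sidesteps the fact that the paper's equation for $\dot w$ is not in self-adjoint form when Sturm separation is invoked. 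Your conversion between zero counts and critical-point counts via strict concavity of $w$ on sign-constant intervals (valid for $E=1$ since $\tfrac14+\tfrac1r>0$) is also sound, and the bookkeeping reproduces the stated formulas.

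The gap is in the $E=-1$ odd boundary value problem, specifically the base case $j=1$, and it is not the ``secondary nuisance'' you make it out to be. You assert there is no zero of $w$ before its first critical point ``thanks to $\dot w(0^+)=-\infty$,'' but that does not follow: $w$ starts at $1$ and decreases, and nothing soft prevents it from crossing zero before reaching the critical point that the boundary condition places at $r=\gamma$. For $\gamma<4$ one can exclude this (a critical point with $w<0$ and $q=\tfrac14-\tfrac1r<0$ would be a local maximum, incompatible with $w$ decreasing into it), but for $\gamma\in(4,\Gamma_1)$ a critical point with $w<0$ is a local minimum and perfectly consistent with that approach; Sturm separation only tells you that $w$ has exactly one zero in $(0,\Gamma_2)$, not on which side of $\gamma$ it lies. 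This is exactly why the paper's proof of Lemma~\ref{lem:Lem1} spends a full page on explicit quantitative bounds for the real parts of $\Gamma(1-i)W_{i,1/2}(ir)$ and of the quotient $W'_{i,1/2}(i\gamma)/M'_{i,1/2}(i\gamma)$ on $(\Gamma_0,\Gamma_1)$, in order to prove that the solution of \eqref{eq:bvpodd2} is strictly positive there. Your proposal needs this positivity statement (or an equivalent) as an explicit input; without it the $\lfloor j/2\rfloor$ root count for \eqref{eq:bvpodd2} is not established, and the continuity remark at the end of your sketch does not supply it.
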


 \begin{proof}
     We use induction and start with the $E=1$ cases. 
     
     We define
    \begin{equation}
        f(r) = c_{1}M_{-i,1/2}(ir).
    \end{equation}
    
    We begin by differentiating the boundary value problem \eqref{eq:bvpodd} for odd winding number solutions with respect to $r$ once more. We get 
    \begin{equation}
      \dddot{w} + \left( \frac{1}{4} + \frac{1}{r} \right) \dot{w} +\frac{1}{r^2}\frac{\ddot{w}}{\frac{1}{4}+\frac{1}{r}} = 0,
      \label{eq:thirdderiv}
    \end{equation}
    where we substitute $w = -\frac{\ddot{w}}{\frac{1}{4}+\frac{1}{r}}$ given by our original differential equation. Then, we rewrite this with $z=\dot{w}$ as
    \begin{equation}\label{zddot}
      \ddot{z} + \left( \frac{1}{4} + \frac{1}{r} \right) z +\frac{1}{r^2}\frac{\dot{z}}{\frac{1}{4}+\frac{1}{r}} = 0.
    \end{equation}
     This is a homogeneous second order linear differential equation of $z$. As such, we can apply the Sturm separation theorem \cite{beesack1972sturm}. Both $z = \dot{w}(r)$, where $w(r)$ is given by the odd boundary value problem solution \eqref{eq:oddsol}, and $\dot{f}(r)$ satisfy this differential equation (for different initial value problems). By the Sturm separation theorem, $z(r)=\dot{w}(r)$ has exactly one root between successive roots of $\dot{f}(r) = i c_{1}M'_{-i,1/2}(ir)$. In other words, there is exactly one critical point of $w$ between successive roots of $\dot{f}(r)$ which are given by $(\gamma_j, \gamma_{j+2})$ for $j$ odd. Similarly, between successive roots of $f(r)$, corresponding to $\gamma$ with even indices, there is only one root to our general solution to both boundary value problems. This applies to both the $E=1$ and $E=-1$ cases, one only needs to substitute $r$ with $-r$ and $\gamma_i$ with $\Gamma_i$ and all else holds. 
         
     Let $k=1$, then $\gamma\in(\gamma_0,\gamma_1)$. In the case of the odd boundary value problem \eqref{eq:bvpodd}, we are guaranteed one critical point due to the boundary value condition. There cannot be any more, as proven in Proposition~\ref{prop:crits}. Similarly, the even boundary value problem \eqref{eq:bvpeven} gives us one critical point because $\dot{w}(\gamma_0) > 0$ and $w(\gamma) = 0$, and more than one critical point would give us more than one zero in this interval due to Rolle's theorem and our differential equation, which would violate Sturm's theorem. Indeed, for $k=1$, there are $\lfloor \frac{1}{2}+1 \rfloor =1$ critical point for the solution to \eqref{eq:bvpodd} and $\lfloor \frac{1}{2}+\frac{1}{2} \rfloor =1$ critical points for the solution to \eqref{eq:bvpeven}.

     Now suppose that for $k = n$, the number of critical points is  $\lfloor \frac{n}{2}+1 \rfloor$ for the solution to \eqref{eq:bvpodd} and  $\lfloor \frac{n}{2}+\frac{1}{2} \rfloor$ for the solution to \eqref{eq:bvpeven}.

     Now let $k= n+1$. We claim that if $k$ is even, the solution of \eqref{eq:bvpodd} will gain one additional critical point and that for $k$ odd, the solution of \eqref{eq:bvpeven} will gain one additional critical point.  We have already shown in Proposition~\ref{prop:crits} that the critical point between $(\gamma_0,\gamma_1)$ will be preserved for all $\gamma$. Furthermore, the Sturm critical points between consecutive odd $\gamma_i$ will also be preserved. If $k$ is even, then we add a new Sturm critical point between $(\gamma_{k-3},\gamma_{k-1})$ to the solution of \eqref{eq:bvpodd}, in addition to our boundary value condition. Otherwise, if $k$ is odd, the previous critical points that are guaranteed in the $k=n$ case are preserved and no new critical points can enter due to Sturm's separation theorem. Similarly, for $k$ odd, the critical points of \eqref{eq:bvpeven} are preserved, and there will be an additional critical point between $(\gamma_{k-1},\gamma_k)$ due to the addition of a Sturm zero between $(\gamma_{k-3},\gamma_{k-1})$ and the zero at our boundary value.  This new critical point must occur in this particular interval because between intervals of consecutive even-indexed $\gamma_i$, roots must come after critical points, as this is the starting behavior of our solution for all $\gamma$, and due to the alternation of critical points and roots, violating this behavior at larger values of $r$ would result in one too many roots in at least one such interval.

     Counting up the critical points in each case, we verify that there are $\lfloor \frac{k}{2}+1 \rfloor$ for the solution to \eqref{eq:bvpodd} and  $\lfloor \frac{k}{2}+\frac{1}{2} \rfloor$ for the solution to \eqref{eq:bvpeven}, proving our inductive hypothesis. 
     
     The $E=-1$ case is similar in proof. The base case is shown using the boundary value conditions of (\ref{eq:bvpodd2}) and (\ref{eq:bvpeven2}). For $\Gamma\in(\Gamma_0,\Gamma_1)$, we are only guaranteed a zero in the case of \eqref{eq:bvpeven2} due to the boundary value condition, and there is only one such zero due to Sturm's separation theorem. The case of \eqref{eq:bvpodd2} does not have a Sturm zero for this range of $\Gamma$. To prove that there is no root at all, we examine the solution to our equation \eqref{eq:oddsol2} on $(\Gamma_0,\Gamma_1)$. 
      The second term is a multiple of the Whittaker-W function whose first root for $r>0$ is at $r \approx  7.55 > \Gamma_1$ and is outside the interval. The function then must be decreasing inside the interval since the derivative at $r=0$ is negative and the first critical point is outside the interval. At $\Gamma_0=0$, the first term is zero and by our boundary value condition, we have that $\Gamma(1-i)W_{i,1/2}(i\cdot 0) = 1$ which is positive. Hence, the real part of $\Gamma(1-i)W_{i,1/2}(ir)$ is positive between $(\Gamma_0,\Gamma_1)$. The real part of the first term is monotonic because it is some multiple $M_{i,1/2}(ir)$ and we are in the open interval of a root followed by the next critical point of the function. For $\gamma \in (\Gamma_0,\Gamma_1)$, the real part of the quotient factor $\frac{W'_{i,1/2}(i\gamma)}{M'_{i,1/2}(i\gamma)}$ is negative solely in $2.11 \lesssim \gamma \lesssim 6.31$ and obtains a minimum value of $-0.00671$ at $\gamma=4$. The maximum real value of $-\Gamma(1-i)M_{i,1/2}(ir)$ in this negative interval is at $r=6.31$ due to the monotonicity. This is because the real part of this function is increasing which can be verified by computing the derivative at any point in this interval. The value of this maximum is  $\mathfrak{Re}[-\Gamma(1-i)M_{i,1/2}(i \cdot 6.31)] \approx 3.22$. So, the product of the real parts of each factor in the first term is at least $3.22\cdot-0.00671\approx -0.022$. However, $\Gamma(1-i)W_{i,1/2}(ir)$ is decreasing on the interval (its first root for $r>0$ is outside the interval and we know that this function is $1$ at $r=0$), and its positive real part at $r=6.31$ is $0.055$. Since $0.055-0.022=0.033>0$, we can be assured that the product of the real parts of the factors, when summed with the real part of the second term, will never be negative in $r\in(\Gamma_0,\Gamma_1)$ for any value of $\gamma \in (\Gamma_0,\Gamma_1)$. However, we must also consider the real value of the product of imaginary parts in the first term. 
      The imaginary part of the quotient factor has a positive minimum at $\gamma=4$ on this interval, and the imaginary part of $-\Gamma(1-i)M_{i,1/2}(ir)$ is negative and decreasing, which can be verified by simply computing the derivative at $r=0$ and using the fact that the function is monotone on the interval. So, the product of the imaginary parts for the first term will have a positive real part contribution. 

      Hence, we can conclude that $u(r)$ will be positive on the interval and therefore has no root in $(\Gamma_0,\Gamma_1)$ for $\gamma \in (\Gamma_0,\Gamma_1)$ in the case of (\ref{eq:bvpodd2}).  This establishes the base case, since for $j=1$, there are $\lfloor \frac{1}{2} \rfloor = 0$ roots for the solution to \eqref{eq:bvpodd2} and $\lfloor \frac{1}{2}+\frac{1}{2} \rfloor =1$ root for the solution to \eqref{eq:bvpeven2}.

     Now suppose that for $j = n$, the number of roots is  $\lfloor \frac{n}{2} \rfloor$ for the solution to \eqref{eq:bvpodd2} and  $\lfloor \frac{n}{2}+\frac{1}{2} \rfloor$ for the solution to \eqref{eq:bvpeven2}.
     
     Now let $j = n+1$. If $j$ is odd, then  the solution to \eqref{eq:bvpeven2} preserves the number of Sturm roots from the previous intervals of consecutive, even-indexed $\Gamma_i$ that existed in the $j=n$ case, although not necessarily at the exact same value, and gains one new root between $(\Gamma_{j-1},\Gamma_{j})$ due to the boundary condition. The solution does not gain an additional root if $j$ is even since there can only be one root in $(\Gamma_{j-2},\Gamma_{j})$ which had already been fulfilled by our boundary value condition. For the solution to \eqref{eq:bvpodd2}, it will also preserve all Sturm roots from the previous even-indexed intervals of $\Gamma_i$. In addition to this, it will preserve all Sturm critical points in the odd-indexed intervals. If $j$ is even, then a new Sturm critical point will be added to the existing ones from the $j=n$ case between $(\Gamma_{j-1},\Gamma_j)$. This will induce a new root between the last two critical points, and so the solution to \eqref{eq:bvpodd2} will gain an additional root if $j$ is even. On the other hand, if $j$ is odd, there is no new critical point, and no new root will be added. If another root were to be added, then it would violate Sturm's separation theorem in one of the intervals, which one can verify using Rolle's theorem and the fact that the order of critical points and roots must be preserved in each of the even-indexed intervals of $(\Gamma_i)$, an argument analogous to the $E=1$ case. 

     Then, we count up the the total number of roots and find that the general solution to \eqref{eq:bvpodd2} will have  $\lfloor \frac{j}{2} \rfloor$ roots and the general solution to \eqref{eq:bvpeven2} will have  $\lfloor \frac{j}{2}+\frac{1}{2} \rfloor$ roots, which proves our inductive hypothesis. 
     


 \end{proof}



\subsection{Numerical Code}
The supplementary code for this project can be found at https://github.com/ck768/1D-Hydrogenic-Ion-Numerical.

\section*{Acknowledgement} The authors thank Monika Winklmeier, Eric Ling, and Lawrence Frolov for fruitful discussions, and for reading an earlier draft of this paper.
\bibliographystyle{plain}

\end{document}